\documentclass[twocolumn,10pt,notitlepage,nofootinbib,superscriptaddress,numerical]{revtex4-2}

\usepackage[utf8]{inputenc}
\usepackage[colorlinks=true,linkcolor=blue,citecolor=blue,urlcolor=blue]{hyperref}

\usepackage{amsmath,amssymb}
\usepackage{dsfont}
\usepackage{xcolor}
\usepackage{amsthm}

\usepackage{algorithm}
\usepackage{algpseudocode}

\algnewcommand\algorithmicforeach{\textbf{for each}}
\algdef{S}[FOR]{ForEach}[1]{\algorithmicforeach\ #1\ \algorithmicdo}

\usepackage{ragged2e}

\newtheorem{theorem}{Theorem}[section]

\newtheorem{prop}[theorem]{Proposition}
\newtheorem{lemma}[theorem]{Lemma}

\theoremstyle{definition}
\newtheorem{definition}[theorem]{Definition}
\newtheorem{example}{Example}[section]
\newtheorem{remark}{Remark}

\definecolor{FGreen}{RGB}{1,100,10}

\usepackage{graphicx}
\usepackage[caption=false]{subfig}
\graphicspath{
    {figures/}
}

\usepackage{mathtools} 
\usepackage[shortlabels]{enumitem}
\usepackage{bm}
\usepackage{dsfont} 
\usepackage{etoolbox}

\docsvlist{A,B,C,D,E,F,G,H,I,J,K,L,M,N,O,P,Q,R,S,T,U,V,W,X,Y,Z}

\docsvlist{A,B,C,D,E,F,G,H,I,J,K,L,M,N,O,P,Q,R,S,T,U,V,W,X,Y,Z}

\docsvlist{A,B,C,D,E,F,G,H,I,J,K,L,M,N,O,P,Q,R,S,T,U,V,W,X,Y,Z}

\docsvlist{d}

\docsvlist{E,P,R,e,k,p,q,r,g,x,y}

\usepackage{ stmaryrd }
\def\llb{\llbracket}
\def\rrb{\rrbracket}
\def\bra{\langle}
\def\ket{\rangle}
\def\lr{\leftrightarrow}
\def\backslash{\symbol{92}}

\def\weight{\mathfrak{w}}
\def\qubit{\mathfrak{q}}

\graphicspath{
    {figures/}
}

\DeclareMathOperator{\im}{im}

\DeclareMathOperator{\spn}{span}

\DeclareMathOperator{\poly}{poly}

\newcommand{\norm}[1]{\left\lVert#1\right\rVert}

\usepackage{tikz-cd} 
\usetikzlibrary{matrix,arrows,decorations.pathmorphing}
\usetikzlibrary{fit, shapes.geometric,calc}

\usepackage{xcolor}
\newcommand{\red}[1]{{\color{red} #1}}
\newcommand{\gray}[1]{{\color{gray} #1}}
\newcommand{\blue}[1]{{\color{blue} #1}}

\makeatletter
\def\l@subsection#1#2{}
\def\l@subsubsection#1#2{}
\makeatother

\begin{document}

\title{
Quantum Weight Reduction with Layer Codes
}
\author{Andrew C. Yuan}
\thanks{The first two authors contributed equally.}
\affiliation{Iceberg Quantum, Sydney}
\affiliation{Condensed Matter Theory Center and Joint Quantum Institute,
Department of Physics, University of Maryland, College Park, Maryland 20742, USA}
\author{Nouédyn Baspin}
\thanks{The first two authors contributed equally.}
\affiliation{Iceberg Quantum, Sydney}
\author{Dominic J. Williamson}
\affiliation{School of Physics, The University of Sydney, Sydney NSW 2006, Australia}

\date{March 2026}

\begin{abstract}
    %
    Quantum weight reduction procedures ease the implementation of quantum codes by sparsifying them, resulting in low-weight checks and low-degree qubits. 
    However, to date, only few quantum weight reduction methods have been explored. 
    In this work we introduce a simple and general procedure for quantum weight reduction that achieves check weight 6 and total qubit degree 6, lower than existing procedures at the cost of a potentially larger qubit overhead. 
    Our quantum weight reduction procedure replaces each qubit and check in an arbitrary Calderbank-Shor-Steane code with an ample patch of surface code, these patches are then joined together to form a geometrically nonlocal Layer Code.
    This is a quantum analog of the simple classical weight reduction procedure where each bit and check is replaced by a repetition code. 
    Due to the simplicity of our weight reduction procedure, bounds on the weight and degree of the resulting code follow directly from the Layer Code construction and hence are easily verified by inspection. 
    Our procedure is well suited for implementation in modular architectures that consist of surface code patches networked via long-range interconnects. 
\end{abstract}

\maketitle


\section{Introduction}

Quantum error-correcting codes are essential to suppress errors when storing and processing quantum information at scale. 
Recent progress has led to quantum codes that efficiently store large amounts of quantum information with a large degree of protection from noise. 
However, challenges remain that hinder the practical implementation of these codes. 
One such challenge is the search for efficient quantum codes with low-weight check operators that have a low degree of overlap on each qubit. 

A quantum \textit{low-density parity check} (LDPC) code is a quantum stabilizer code where each generating check acts on at most $\weight=O(1)$ qubits and each qubit participates in at most $\qubit=O(1)$ checks~\cite{gottesman2024surviving}. 
Beyond satisfying a notion of locality inspired by physical intuition, LDPC codes are essential for ensuring fault-tolerant quantum error correction in asymptotically large codes \cite{gottesman2013fault}.
Specifically, their local structure guarantees that if the error rate is a sufficiently small constant, so that error clusters do not percolate, then the disconnected error clusters can be corrected. 
For this reason, the search for LDPC codes with optimal scaling properties has attracted substantial attention from the research community, culminating in the breakthrough discovery of asymptotically good LDPC code whose encoding rates and relative distances are
constant \cite{panteleev2022asymptotically}.
Since then, a number of alternative constructions have been proposed \cite{leverrier2022quantum,dinur2023good}.

Good asymptotic parameters are not enough for LDPC codes to be interesting and useful beyond the realm of pure theory. 
For example, the check weight $\weight$ and qubit degree $\qubit$ may be prohibitively large constants, making implementation on realistic hardware impractical. 
Such codes stand to benefit from weight and degree reduction even if it comes at the cost of a modest qubit overhead.

Quantum weight reduction was pioneered by Hastings in Refs.~\cite{hastings2016weight,hastings2021quantum}, where he introduced a procedure that transforms any Calderbank-Shor-Steane (CSS) code with weight $\weight,\qubit$ into a constant weight CSS code with ancilla qubit overhead $O(\poly (\weight,\qubit))$.
While this initial approach reduced the code distance by a factor of $\poly(\weight,\qubit)$, subsequent work demonstrated that the distance can be preserved up to an $\Omega(1)$ constant by incorporating expander graphs into the construction -- an idea that first appeared in the context of low-overhead quantum code surgery \cite{williamson2024low,ide2025fault}.
Next, Ref.~\cite{sabo2024weight} introduced a simplified weight reduction procedure based on classical weight reduction which only applies to hypergraph product codes~\cite{tillich2013quantum}.
Another work~\cite{baspin2024wire} applied similar techniques to construct a quantum weight reduction procedure that produces a family of subsystem codes known as Wire Codes. 
Most recently, a weight reduction method was discovered, in which the overhead was substantially lowered to $O( \weight\qubit \log (\weight \qubit))$, while preserving the code distance up to a universal constant~\cite{hsieh2025simplified}.


Quantum weight reduction techniques share a fundamental filiation with quantum code surgery \cite{horsman2012surface,cohen2022low,cross2024improved,williamson2024low,ide2025fault,swaroop2026universal}. 
Specifically, by treating high-weight logical operators as large stabilizer checks, weight reduction schemes allow for logical information to be extracted indirectly via modified low-weight stabilizers; and vice versa high-weight stabilizers can be viewed as logical operators that are measured via code surgery.
The Layer Codes were constructed in this way, from a stack of surface code patches in three dimensions via the local measurement of concatenated stabilizers from a good LDPC code~\cite{williamson2023layer}. 
In this work we apply a similar approach to find a simple procedure for quantum weight reduction that locally implements a concatenated code on patches of surface code while maintaining low check weight and qubit degree. 

\subsection{Statement of the main result}

Our main result is a simple and explicit quantum weight reduction procedure that produces single digit reduced weights, and degrees, without requiring the use of complicated constructions such as expander graphs. This comes at the cost of an ancilla qubit overhead that has a faster asymptotic growth than the result in Ref.~\cite{hsieh2025simplified}. 
Specifically, for any input CSS code with max check weight $\weight$ and qubit degree $\qubit$, an explicit output CSS code can be constructed so that the max check weight is $6$, total qubit degree is $6$, qubit overhead is $O(\weight^4 \qubit^4)$, the logical subspace is preserved, and the distance is increased by a multiplicative factor~$\Omega(\weight\qubit^2)$. 
\begin{theorem}[Main result, corollary of Theorem \ref{thm:main}]
    \label{thm:main-intro}
    Let $D$ be an $\llb n,k,d\rrb$ CSS code with maximum weight $\weight$, and qubit degree $\qubit$; then Algorithm \ref{alg:sparsification} outputs a CSS code $D^\mathrm{sparse}$ with parameters $\llb O(\weight^4 \qubit^4 n), k, \Omega(\weight\qubit^2 d)\rrb$, maximum weight $6$, and total qubit degree $6$.
\end{theorem}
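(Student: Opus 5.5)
Since the statement is billed as a corollary of Theorem~\ref{thm:main}, the plan is to specialize the general Layer Code result to the concrete choices made in Algorithm~\ref{alg:sparsification}. Write $D$ as a chain complex $C_X \to Q \to C_Z$. Every qubit $q\in Q$ becomes a surface-code patch oriented one way, every $X$-check and every $Z$-check becomes a surface-code patch oriented the other way, and each incidence between a qubit and a check becomes a one-dimensional junction where the two patches meet. Inside each patch we use a standard rotated or planar surface-code layout, so bulk checks have weight at most $4$ and bulk qubits have degree at most $4$.

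The proof then has three parts.

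\textbf{Weight and degree.} Weight $6$ and total degree $6$ are checked locally. Along a junction line, a surface-code check of one patch is merged with the boundary check of the other patch. The Layer Code junction rule replaces a weight-$4$ plaquette by at most a weight-$6$ check: two extra qubits from the crossing patch. Qubits on the junction see at most $3$ $X$-type and $3$ $Z$-type checks. We place the junctions in distinct rows and columns of each patch, so that no two junctions intersect at a single site. This requires patch side lengths $L_q = \Theta(\weight\qubit)$ for qubit patches and $L_c = \Theta(\weight\qubit)$ for check patches. With the junctions separated, the local analysis becomes a finite case check.

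\textbf{Qubit count and logical dimension.} Each patch has size $\Theta(L^2)$. In the nonlocal embedding the patches must stretch across one another, and to avoid collisions each patch ends up with linear dimension $\Theta(\weight\qubit)$ in both directions, possibly after padding by that factor again. This gives $O(\weight^4\qubit^4)$ qubits per original qubit or check, hence $O(\weight^4\qubit^4 n)$ in total, using that the number of checks is at most $\qubit n$. Preservation of $k$ follows from the homological statement in Theorem~\ref{thm:main}: the output complex is quasi-isomorphic to the input, with each patch being a thickened repetition code whose homology collapses to the input chain.

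\textbf{Distance.} This is the main obstacle. The plan is a cleaning argument. Given a nontrivial logical $X$ operator $\bar L$ of $D^{\mathrm{sparse}}$, multiply it by stabilizers to push its support onto a canonical set: one line per qubit patch, through a single junction column. The projection must not increase the number of qubit patches touched, and it must send $\bar L$ to the image of a nontrivial logical of $D$ under the chain map. The projected logical therefore meets at least $d$ qubit patches nontrivially.

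Inside each touched patch, a string must cross the patch to reconnect to the check junctions. Each such crossing costs at least $L_q = \Omega(\weight\qubit)$ qubits, or $\Omega(\weight\qubit^2)$ once the stretched dimension is counted, depending on the orientation. To guard against an adversarial operator that shortcuts through check patches, use the fact that cutting a surface-code patch costs the minimum of its two side lengths. Choosing side lengths so that this minimum is $\Omega(\weight\qubit^2)$ along the relevant direction gives $d^{\mathrm{sparse}} \ge \Omega(\weight\qubit^2)\,d$. The delicate part is showing the cleaning is weight-monotone up to constants, which I would take directly from the distance bound in Theorem~\ref{thm:main}.
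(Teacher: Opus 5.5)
You have the right top-level plan (specialize Theorem~\ref{thm:main}), but the step the corollary actually consists of is missing. The bookkeeping you use in its place does not give the stated parameters.

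The paper sets $\chi_X=\chi_Q=\chi_Z=\Theta(\weight^2\qubit^2)$. This is admissible because these integers exceed the chromatic numbers of $G_X,G_Q,G_Z$ (Remark~\ref{rem:distance-enhancement}). Everything then follows:
\begin{itemize}
\item weight $6$, total degree $6$ and $H_1(C)\cong H_1(A)$ hold by Theorem~\ref{thm:logical} for any admissible choice;
\item each layer has $O(\chi^2)=O(\weight^4\qubit^4)$ qubits;
\item Theorem~\ref{thm:distance} gives $d_\alpha(C)\ge \min(\chi_{\bar\alpha},2\chi_Q)\,d_\alpha(A)/\weight_\alpha=\Omega(\weight\qubit^2)\,d$.
\end{itemize}

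Your distance step instead assumes the cleaning is weight-monotone up to constants, to be imported from Theorem~\ref{thm:main}. That theorem only gives $\Omega(\chi/\weight)\,d$. The loss is built into the relative expansion of $(\partial^X,g^{QX})$ in Proposition~\ref{prop:square-expansion}. Its constant is $\min(1,2\chi_Q/\chi_Z)/\weight_X$, because up to $\weight_X$ rows of each $X$-check layer are glued to qubit layers.

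With that loss, side lengths $\Omega(\weight\qubit^2)$ as you propose yield only $\Omega(\qubit^2)\,d$. The default $\chi_Q=O(\weight\qubit)$ yields only $\Omega(\qubit)\,d$ (Remark~\ref{rem:small-relative-expansion}). The bottleneck is the $\hat q$-length $\chi_Q$ of the check layers, which must be inflated to $\Theta(\weight^2\qubit^2)$ along with $\chi_X,\chi_Z$. This same choice is what produces the $\weight^4\qubit^4$ overhead, and it should replace your unexplained ``padding by that factor again''.

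Several supporting claims are also incorrect.
\begin{itemize}
\item Side lengths $\Theta(\weight\qubit)$ are not justified for collision avoidance. The green $X$--$Z$ defects $p^{ZX}$, which your description omits but which are needed for the checks to commute, require $\eta_X$ to be injective on every set $\{x: x\wedge z\neq\emptyset\}$. So $G_X$ has degree of order $\weight_X\qubit_X\weight_Z\qubit_Z$, and only $\chi_X,\chi_Z=O(\weight^2\qubit^2)$ is guaranteed.
\item The junctions cannot be kept from intersecting. In qubit layer $q$, the defect from $x$ (at $\hat x$-coordinate $\eta(x)$, spanning $\hat z$) and the defect from $z$ (at $\hat z$-coordinate $\eta(z)$, spanning $\hat x$) represent the patch's $X$- and $Z$-logicals, so they must cross. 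The bounds $6$ and $4$ come instead from $g^{QX},g^{ZQ},p^{ZX}$ each having row and column weight at most $1$, added to the in-layer surface-code weights $4$ and $2$.
\item $O(\weight^4\qubit^4)$ qubits per layer times up to $\qubit n$ checks is $O(\weight^4\qubit^5 n)$, not $O(\weight^4\qubit^4 n)$. You need $O(n)$ checks, e.g.\ by retaining only independent generators.
\end{itemize}
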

This construction generalizes the Layer Codes \cite{williamson2023layer}, which were used to embed any LDPC CSS code into three-dimensional Euclidean space.
The maximum weight and total qubit degree of the Layer Codes is 6, see Eq.~\eqref{eq:weight-diagram-2D-layer} or the explicit checks in Ref.~~\cite{williamson2023layer}. 
This is because there are three types of layers, corresponding to input data qubits, $X$-checks, and $Z$-checks, each of which hosts a modified surface code with checks (qubits) that interact with at most one additional qubit (check) in each distinct layer type. 
Importantly, in the context of weight reduction, we are not restricted by Euclidean locality.
Hence, the ancilla overhead of conventional Layer Codes has been substantially reduced here, while the reduced weights and total qubit degrees remain the same.


We now compare our construction to previous results in the literature. 
Despite the advances outlined above, significant caveats remain for the general CSS stabilizer code weight reduction procedures in Refs.~\cite{hastings2021quantum,hsieh2025simplified} that are addressed by our work.
First, these works were technically obscure, leading to in-depth follow up works clarifying the details involved in the original constructions  \cite{wills2024tradeoff,sabo2024weight,tan2025effective,parsimonious}. 
Second, due to the intricate, and at times implicit, nature of the previous constructions, errors and omissions in Refs.~\cite{hastings2021quantum,hsieh2025simplified} result in guarantees that are weaker than the advertised results. 
In particular Lemma 8 in Ref.~\cite{hastings2021quantum}  contains a minor error\footnote{In the proof, it is claimed that $X$-stabilizers of $\mathcal{A}$ have weight bounded by $\weight_X$ in the reduced cone code. However, this is not necessarily true, since there are now additional qubits coming from the chain map in Definition 2 of \cite{hastings2021quantum}.} at point 5, which implies that the reduced weights are only upper bounded by $(\weight_X,\weight_Z,\qubit_X,\qubit_Z)\le(42,36,4,3)$ instead of the claimed $(5,5,3,5)$. See Appendix \eqref{app:Hastings} for a detailed exposition.
Similarly, a careful analysis of the proof in Ref.~\cite{hsieh2025simplified} highlights some areas that require further clarification. For example, see Remark (17) in Ref. \cite{parsimonious}. 
For the purpose of asymptotic scaling bounds, these points are minor and are likely due to the technical difficulty of the associated constructions. However, the weaker guarantees they lead to may be relevant for practical applications of these results. 
Third, Refs.~\cite{hastings2021quantum,hsieh2025simplified} require a  corresponding expander graph to preserve the code distance up to $\Omega(1)$ constant.
It is well known that expander graphs are both difficult to explicitly construct and verify, hence this introduces a major obstruction to practical applications.

Our results establish a new approach to quantum weight reduction that is based on patches of surface code and simple topological defects that are described by explicit stabilizer checks~\cite{williamson2023layer}. 
We expect that fine-tuning the choice of dimensions of each surface code layer and the partitioning of these layers into disjoint sets will produce lower overhead weight reductions for specific code instances than is guaranteed by the asymptotic result in Theorem.~\ref{thm:main-intro}.

\subsection{Outlook}

We anticipate that our simplified quantum weight reduction procedure will find broad applications in constructing explicit weight-reduced codes. 
Our construction is a natural quantum extension of the simple classical weight reduction procedure that replaces each bit and check with a repetition code. 
This construction has found broad applicability due to its simplicity and ease of use, which also hold for our quantum weight reduction procedure~\cite{sabo2024weight,baspin2024wire,baspin2023combinatorial}. 
The codes output by our procedure are naturally suited to implementation in an architecture that consists of modular patches of surface codes that are networked together via sparse long-range interconnects. 
Our weight reduced codes come with a decoder as described in Ref.~\cite{gu2025layer,williamson2025partial}.
Similar arguments to Ref.~\cite{gu2025layer,williamson2025partial} would imply that weight reduced good Quantum Tanner codes will achieve self-correction under this decoder due to the preservation of the energy barrier scaling.

The Layer Code quantum weight reduction procedure raises a number of questions for future work. 
First, is it possible to reduce the overhead while maintaining the simple structure of the weight reduction procedure based on surface codes? 
Is it possible to reduce the weight and total degree to 5, at the cost of additional ancilla qubits, by decomposing each of the topological defects that appears in the construction? 
Can we expand the set of target spaces for the Layer Codes to higher-dimensional Euclidean space and beyond? In Refs.~\cite{baspin2023combinatorial, baspin2024wire}, for example, the authors obtained explicit embeddings for any Euclidean dimension, and showed that, in fact, expansion alone is sufficient to guarantee the existence of an embedding into a general graph.
In another direction, can our weight reduction procedure be extended beyond CSS codes? 
Is it possible to engineer the required long range interconnects to implement our weight-reduced codes on superconducting, neutral atom, or trapped ion hardware?
Can our codes be used as a memory in a surface code based fault-tolerant architecture to improve the performance beyond the techniques in Ref.~\cite{gidney2025yoked}?

\subsection{Section Outline}

The remaining sections are laid out as follows. 
In Section~\ref{sec:Overview}, assuming only basic knowledge of stabilizer and Pauli subgroups, we provide an overview of our construction using the seminal $\llb 9,1,3\rrb$ Shor code as an example.
In Section~\ref{sec:prelim}, we provide the necessary preliminaries in algebraic homology to understand the technical construction of our sparsification method with layer codes.
In Section~\ref{sec:algebra}, we provide the details of the construction, culminating in Theorem \ref{thm:logical} and \ref{thm:distance}.
In Section \ref{sec:proof}, we prove Theorem \ref{thm:logical} and \ref{thm:distance}.
In Appendix~\ref{app:TP} we prove a technical condition that the tensor product preserves the isoperimetric inequality (relative expansion) used in our construction. 
In Appendix~\ref{app:Hastings} we review the details of Hastings's weight reduction procedure~\cite{hastings2021quantum}.

\section{Overview of the Construction}
\label{sec:Overview}

In this section we provide an illustrated high-level overview of our quantum weight reduction construction. 
We focus on the $\llb 9,1,3\rrb$ Shor code as an intuition-building example, while the full algorithm is provided in Algorithm~\ref{alg:sparsification}.
For simplicity of presentation, we defer the explanation of some details to Section~\ref{sec:algebra} where the full algebraic construction is described along with the main results on the logical subspace and overhead (Theorem~\ref{thm:logical}) and code distance (Theorem~\ref{thm:distance}) of the weight reduced output Layer Code.

The original Layer Code construction \cite{williamson2023layer} takes as input a CSS stabilizer code on $n$ qubits. Each of these $n$ qubits are mapped to `data layers', while each of its stabilizers are mapped to `check layers'. 
Due to the connectivity constraints of the three-dimensional Euclidean space, embedding these layers requires a significant amount of space and ancillary qubits, which eventually results in a new code whose parameters are diminished with respect to the input code. 
When the connectivity constraints are removed, so is the need for the ancillary overhead. 
In this way, we can obtain a geometry-less Layer Code that sparsifies the original code, while better preserving its parameters. 

\begin{figure}[t]
    \centering
    \subfloat[\label{fig:shor} Shor Code.]{%
    \centering
    \includegraphics[scale = .8, page=1]{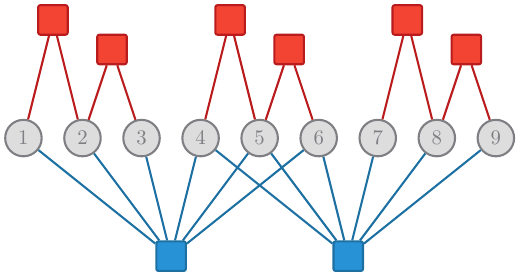}
    }
    \\
    \subfloat[\label{fig:shor-layers} Layer-concatenated Shor Code.]{%
    \centering
    \includegraphics[scale = .8, page=2]{diagrams-mod.pdf}
    }
    \caption{ (a) The grey circles denote qubits. The blue boxes denote $X$-checks, with lines denoting which qubits the check acts on, and similarly for red boxes, which denote $Z$-checks.
    (b) $X,Z$-checks and qubits are replaced with surface codes. Here, solid lines depict smooth boundaries, and dashed lines depict rough boundaries. }
\end{figure}

Loosening the connectivity constraints and removing the ambient three-dimensional space from the Layer Code construction raises new challenges. 
The key idea to compress the size of a Layer Code, is to partition its layers into sets containing layers that have no collisions. 
This allows us to use patches of surface code with sizes specified by the number of such collisionless sets, which are much smaller than the patches of surface code in the original three-dimensional Layer Code. 

As an example, the input code $A$ is chosen to be the $\llb 9,1,3\rrb$ Shor code as depicted in Fig. \ref{fig:shor}.
Similar to the conventional Layer Code construction, we start by replacing each $X$-check, $Z$-check and qubit with a surface code layer, whose detailed dimensions are determined below. 
The $X$-check layers have smooth boundary conditions, the $Z$-check layers have rough boundaries, and the qubit layers have standard planar code boundaries, following the extended Layer Code construction in Ref.~\cite{williamson2025partial}.
This is depicted in Fig. \ref{fig:shor-layers}.
The adjacency relations (red and blue lines) are then replaced by local interactions between corresponding check layers and qubits layers.
Specifically, each $Z$-check layer acts along the string-like $Z$-type logical of the corresponding qubit layers, as depicted in red in Fig. \ref{fig:shor-defect}.
To achieve a weight reduction of the output code, distinct qubit layers must interact with distinct cross sections of the $Z$-check layer.
We then repeat the same procedure for the $X$-check layers as depicted in blue in Fig. \ref{fig:shor-defect}.
However, since the $X$- and $Z$-type logical operators have nontrivial overlap on each qubit layer, a green line defect is introduced following the procedure in the conventional Layer Code construction to guarantee that the output code remains a stabilizer code, i.e., all checks commute.
The final output code $C$ for the Shor code $A$ is depicted schematically in Fig. \ref{fig:shor-dimensions}.

Our construction crucially needs to ensure that regions of distinct $X$- and $Z$-layers that interact with the same green line defect can be identified. In particular they must have the same length, which imposes rigid requirements on how the defects are arranged.
In the case of the conventional Layer Codes, the layers are first immersed in three dimensions, and their intersections naturally yield the locations for these defects. 
In the present construction, however, due to the lack of an ambient reference space, we require an alternative coordinate system to navigate the surface code patches.
For this purpose, we introduce the axes $\hat{x},\hat{q},\hat{z}$ as shown in Fig. \ref{fig:shor-dimensions}.
Specifically, each  $X$-check layer is indexed by $\hat{q}\wedge \hat{z}$, each qubit layer is indexed by $\hat{q} \wedge  \hat{z}$, and each $Z$-check layers is indexed by $\hat{x} \wedge \hat{q}$. Note the $\hat{x}$ and $\hat{z}$ axes are swapped here compared to the convention used in original Layer Code construction~\cite{williamson2023layer}

\begin{figure}[t]
	\centering
	\includegraphics[scale = .8, page=3]{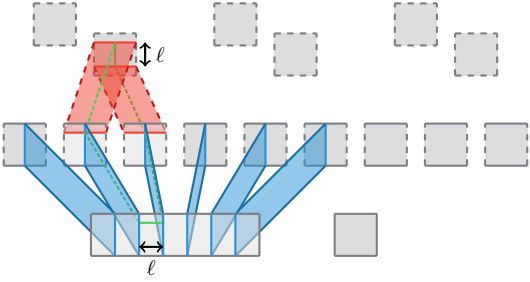}
    \caption{Interactions. $X$- (blue) and $Z$-check layers (red) are glued to the corresponding qubit layers by topological defects, and to one another along green string defects that ensure the output is a valid stabilizer code, i.e., all checks commute. The chosen graph coloring ensures the length $\ell$ of the green string defects are consistent. Extended boundary conditions are used for the $X$- and $Z$-check layers~\cite{williamson2025partial}. See Figure \ref{fig:shor-dimensions} for the final result for the Shor code.
    }
    \label{fig:shor-defect}
\end{figure}

We now describe the values that these coordinates take.  
We introduce a triple of integers $\chi_X,\chi_Q,\chi_Z,$ which are defined below, that specify the maximum coordinate in each direction.
Each of these quantities corresponds to the number of colors used in coloring a specific graph whose definition ensures that we obtain a consistent embedding.
The coordinate along $\hat{x}$ then takes values from $[\chi_X] = \{1,2,\dots, \chi_X\}$ and similarly for $\hat{q},\hat{z}$, as depicted in Fig. \ref{fig:shor-dimensions}.
Specifically, let $\sX,\sQ,\sZ$ denote the collections of $X$-checks, qubits, and $Z$-checks of the given input code $A$, respectively.
There is an associated graph structure, defined as follows.


\begin{figure}[t]
    \centering
    \centering
    \includegraphics[scale = .9, page=4]{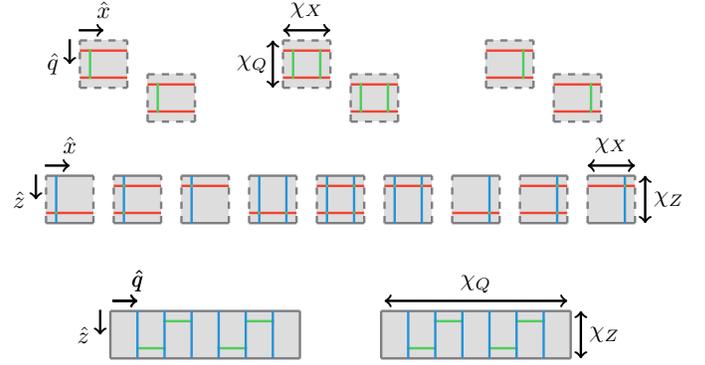}
    \caption{Dimensions of the weight reduction layers. Blue, red, and green lines denote topological defects along which pairs of layers are glued, see Fig.~\ref{fig:shor-defect}. Sizes of the layers are determined by the chromatic numbers of induced graphs. In this example, $\chi_X=\chi_Z=2,$ $\chi_Q=6$. For simplicity of presentation, the $\chi_Q$ direction is not drawn to scale.}
    \label{fig:shor-dimensions}
\end{figure}

\begin{definition}[Graph induced by $A$]
    \label{def:induced-graphs}
    Let ${G_X = (\sX,\sE_X)}$ denote the graph with vertices $\sX$ and edges $xx'\in \sE_{X}$ if the support of $x,x'$ with respect to code $A$ overlap or if there exists $z\in \sZ$ whose support overlaps with both that of $x$ and $x'$. Similarly, define $G_Z = (\sZ,\sE_Z)$.
    Let $G_Q = (\sQ,\sE_Q)$ denote the graph with vertices $\sQ$ and edges $qq'\in \sE_Q$ if there exists $x$ or $z$ check such that $q,q'$ are both in its support.
    Let $\chi_{\alpha},\alpha=X,Q,Z$ such that $G_\alpha$ is $\chi_\alpha$-colorable so that 
    \begin{align}
        \chi_X, \chi_Z &=O(\weight_X \qubit_X \weight_Z \qubit_Z)\\
        \chi_Q &=O(\weight_X\qubit_X+\weight_Z\qubit_Z)
    \end{align}
    where $\weight_X,\weight_Z$ are the maximum weights of the $X$-, $Z$-checks, while $\qubit_X,\qubit_Z$ are the maximum $X$-, $Z$-qubit degrees (the total qubit degree $\qubit \le \qubit_X+\qubit_Z$), respectively.
    The coloring scheme induces a mapping $\eta_X:\sX\to [\chi_X]$ and similarly for $Q,Z$.
    We may omit the subscript below, when the meaning is clear from context. 
\end{definition}

The appeal behind the use of graph colorings is that they spontaneously provide a labeling on the vertices of a sparse graph $\eta: \sV \rightarrow [\chi]$ that is consistent across $\sV$, despite $|\sV|$ being potentially much larger than $\chi$. The choice of graph in Definition~\ref{def:induced-graphs} was made specifically to ensure that the defects used to join layers together do not collide, and that the lengths of regions paired by each defect precisely match.
The full quantum weight reduction algorithm is described in Algorithm~\ref{alg:sparsification}.
This brings us to the statement of our main result.

\begin{theorem}[Algorithm~\ref{alg:sparsification}]
    \label{thm:main}
    Let $D$ be an $\llb n,k,d\rrb$ CSS code with maximum weight $\weight$, and total qubit degree $\qubit$, and integers $\chi_X,\chi_Q,\chi_Z \ge \chi$ such that $G_\alpha$ from Definition~\ref{def:induced-graphs} is $\chi_\alpha$-colorable; then Algorithm \ref{alg:sparsification} outputs a CSS code $D^\mathrm{sparse}$ with parameters $\llb O(\weight^4 \qubit^4) n, k, \Omega(\chi/\weight) d\rrb$, maximum weight $6$, and maximum total qubit degree $6$. Specifically, the maximum $X$- and $Z$-qubit degree is $4$.
\end{theorem}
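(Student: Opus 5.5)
Theorem~\ref{thm:main} is a corollary of Theorems~\ref{thm:logical} and~\ref{thm:distance} combined with a local inspection of Algorithm~\ref{alg:sparsification}, so I would split the proof into three independent parts: weight and degree, qubit count and logical dimension, and distance.

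\emph{Weight and degree.} I would enumerate the check types layer by layer. In the bulk of each surface code layer, checks have weight $4$ and qubits have degree $2$ per Pauli type. The defects only modify this locally. Along the red line where a $Z$-check layer is glued to the string-like $Z$-logical of a qubit layer, each check in one layer picks up at most one extra qubit from the other layer. The same holds along the blue lines. At the green line defects one needs the explicit checks of Ref.~\cite{williamson2023layer}. The coloring in Definition~\ref{def:induced-graphs} is what guarantees this local picture: two $X$-check layers sharing a qubit or a common $Z$-check receive distinct colors $\eta_X$, so they are glued along distinct $\hat{z}$-cross sections of every qubit or $Z$-check layer. Likewise, two qubits sharing a check receive distinct $\eta_Q$ and occupy distinct slices of the check layer. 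Hence no site of any layer meets two defects of the same type, and there are only three layer types. This gives weight $\le 4+1+1=6$, and dually degree $\le 6$. Since each qubit touches at most one extra check of each type beyond the bulk count of $2$, the per-type degree is at most $4$, consistent with Eq.~\eqref{eq:weight-diagram-2D-layer}.

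\emph{Overhead and logical dimension.} Each qubit, $X$-check, and $Z$-check layer is a patch whose side lengths are linear in $\chi_X,\chi_Q,\chi_Z$, possibly up to a constant scaling. Its size is therefore $O(\chi_\alpha\chi_\beta)$ for the relevant pair of axes. Summing over the $n+|\sX|+|\sZ|=O(n)$ layers, and using $\chi_X,\chi_Z=O(\weight^2\qubit^2)$ and $\chi_Q=O(\weight\qubit)$, bounds the largest term by $O(\weight^4\qubit^4)n$. Preservation of $k$ is exactly Theorem~\ref{thm:logical}. The idea there is to build a chain map from $D$ to the layer chain complex, sending each qubit to the string logical of its layer, and show it is a quasi-isomorphism. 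This holds because each check layer, with its extended boundaries, has trivial homology relative to its attachments.

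\emph{Distance, the main obstacle.} I would invoke Theorem~\ref{thm:distance}. The plan for that step is a cleaning argument. Given a nontrivial logical $X$-operator of $D^\mathrm{sparse}$, first use the isoperimetric inequalities of each patch (Appendix~\ref{app:TP}) to push its support off the check layers and onto the qubit layers without increasing its weight by more than a constant factor per layer. Then read off the induced $\mathbb{F}_2$-vector on $\sQ$: record, for each qubit layer, whether the operator crosses the layer nontrivially. Show this vector is a nontrivial logical of $D$, hence it has weight $\ge d$. Each crossed layer requires a full string of length $\Omega(\chi_Q)$ or $\Omega(\chi)$. The subtle point is the accounting: the relative expansion of the check layers costs a factor of $\weight$, and matching lengths across the green defects must be controlled. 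Together these give $\Omega(\chi/\weight)\,d$. I expect the relative isoperimetric control under the defects to be the hardest step, and I would handle it first via the tensor product lemma of Appendix~\ref{app:TP}.
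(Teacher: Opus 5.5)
Your proposal is correct and follows the paper's route. The paper proves Theorem~\ref{thm:main} simply by running Algorithm~\ref{alg:sparsification} and reading the parameters off Theorems~\ref{thm:logical} and~\ref{thm:distance}, using $\min(\chi_{\bar\alpha},2\chi_Q)\ge\chi$ and $\weight_\alpha\le\weight$ for the distance, and your weight and degree count is the same bulk-plus-one-defect-per-other-layer-type bookkeeping as Eq.~\eqref{eq:weight-diagram-2D-layer}. One small slip in the overhead count: the number of check layers is only bounded by $\qubit n$, not $O(n)$, but with your choice of $\chi$'s each check layer has area $O(\weight^3\qubit^3)$, so the total is still dominated by the $n$ qubit layers of area $\chi_X\chi_Z=O(\weight^4\qubit^4)$.
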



\begin{remark}
    \label{rem:distance-enhancement}
    Algorithm~\ref{alg:sparsification} is valid for arbitrary chosen integers $\chi_X,\chi_Q,\chi_Z,$ provided that they are at least as large as the relevant chromatic numbers for each graph. 
    By choosing larger integers, the distance can be boosted to become arbitrarily large, potentially at the cost of additional qubit overhead.
    In particular, this implies that we can choose $\chi_X,\chi_Q,\chi_Z = \Theta(\weight^2 \qubit^2)$ and thus the distance of the sparsification is enhanced by a factor of $\Omega(\weight \qubit^2)$ with overhead $O(\weight^4 \qubit^4)$ in qubit size, yielding Theorem \ref{thm:main-intro}.  
    However, this conservative choice may not be necessary and may be due to an artifact of the proof (see Remark \eqref{rem:small-relative-expansion}).
    Specifically, we expect that it is sufficient for $\chi$ in Theorem~\ref{thm:main} to satisfy only $\chi_X,\chi_Z\geq \chi$.
    To emphasize the relation between the chromatic numbers and our construction, we henceforth assume that $\chi_X,\chi_Q,\chi_Z,$ denote the chromatic numbers of their respective graphs. 
    
\end{remark}


\begin{proof}
    Apply Algorithm \ref{alg:sparsification}, and the parameters of the output construction are obtain by identifying $D$ with $A$, and $D^\mathrm{sparse}$ with $C$ in Theorems~\ref{thm:logical} and~\ref{thm:distance}.
\end{proof}

\begin{figure}[t]
\begin{algorithm}[H]
	\caption{CSS code sparsification}
	\label{alg:sparsification}
	\begin{algorithmic}
		\Require An $\llb n,k,d\rrb$ CSS code $D$ with maximum weights $\weight$, and total qubit degree $\qubit$, and integers $\chi_X,\chi_Q,\chi_Z\ge \chi$ such that $G_\alpha$ is $\chi_\alpha$-colorable \eqref{def:induced-graphs}
        \Ensure An $\llb O(\weight^4 \qubit^4)n,k, \Omega(\chi/\weight) d \rrb$ CSS code $D^\mathrm{sparse}$ with maximum weight $6$, and total qubit degree 6.  Specifically, the maximum $X$- and $Z$-qubit degree is $4$.
		\ForEach {check $x \in \sX=\{1, \dots, n_Z\}$}
			\State Create a surface code layer of size $\chi_Q \times \chi_Z$
		\EndFor 
		\ForEach {qubit $q \in \sQ= \{1, \dots, n\}$}
			\State Create a surface code layer of size $\chi_X \times \chi_Z$
		\EndFor  
		\ForEach {check $z \in \sZ=\{1, \dots, n_Z\}$}
			\State Create a surface code layer of size $\chi_X \times \chi_Q$
		\EndFor  
        \ForEach {check $x \in \sX$}
			\ForEach {qubit $q$ in the support of $x$}
				\State 
                Identify the qubits along $(\eta(q),1) \rightarrow (\eta(q),\chi_Z) $ in $x$ layer with those along $(\eta(x),1) \rightarrow (\eta(x),\chi_Z)$ in the $q$ layer and implement a blue line defect, or $g^{QX}$ from Section \ref{sec:algebra}.
			\EndFor 
		\EndFor 
		\ForEach {check $z \in \sZ$}
			\ForEach {qubit $q$ in the support of $z$}
				\State 
                Identify the qubits along $(1,\eta(q)) \rightarrow (\chi_X,\eta(q)) $ in $z$ layer with those along $(1,\eta(z)) \rightarrow (\chi_X,\eta(z))$ in $q$ layer and implement a red line defect, or $g^{ZQ}$ from Section \ref{sec:algebra}.
			\EndFor 
		\EndFor 
		
		\ForEach {check $x \in \sX$}
			\ForEach {check $z \in \sZ$}
				\State Let $O \subseteq \sQ$ be the overlap of the support of $x$ and $z$.
				\State Let $\eta(O) = (\eta_1, \eta_2, \dots, \eta_{2t})$ be the coordinates associated with $O$ in increasing order.
				\ForEach { $i\in [1, \dots, t]$}
					\State Add a green defect, or $p^{ZX}$ from Section \ref{sec:algebra}, along $(\eta_{2i-1}, \eta(z)) \rightarrow (\eta_{2i}, \eta(z))$ on the $x$ layer, and $(\eta_{2i-1}, \eta(x)) \rightarrow (\eta_{2i}, \eta(x))$ on the $z$ layer.
				\EndFor
			\EndFor 
		\EndFor 
	\end{algorithmic}
\end{algorithm}
\end{figure}
\section{Preliminaries}
\label{sec:prelim}

In this section we introduce concepts from the theory of codes and chain complexes that form the background for our Layer Code quantum weight reduction procedure. 

\begin{definition}
\label{def:chain-complex}
A \textbf{complex} $C$ is a sequence of $\dF_2$-vector spaces $C_{i}$ together with linear $\partial_{i}:C_{i} \to C_{i-1}$, called the \textbf{differentials} of $C$, such that $\partial_{i}\partial_{i+1}=0$ where the subscripts are often omitted. We write
\begin{equation}
    C = \cdots \to  C_{i} \xrightarrow{\partial_{i}} C_{i-1} \to \cdots
\end{equation}
Note that $\im \partial_{i+1} \subseteq \ker \partial_i$ for all $i$, and thus the \textbf{$i$-homology} of $C$ is defined as
\begin{equation}
    H_i(C)\equiv  \ker \partial_i/\im \partial_{i+1}
\end{equation}
Denote the equivalences class of $\ell_{i} \in C_{i}$ as $[\ell_{i}]\in H_i(C)$ -- conversely, we may also write $[\ell_i] \in H_i(C)$ without specifying the representation $\ell_i$.
\end{definition}

\begin{definition}[Basis]
A complex $C$ with \textbf{basis} is such that each $C_i$ is equipped with a canonical basis whose elements are called \textbf{$i$-cells}. This further induces a well-defined nondegenerate bilinear form $\bra \cdot|\cdot\ket$ and \textbf{(Hamming) weight} $|\cdot|$.
We say that cells $c_{i},c_{i-1} $ are \textbf{adjacent}  if $\bra c_{i-1} |\partial c_{i}\ket \ne 0$, and write $c_i \sim c_{i-1}$. 
\end{definition}

\begin{example}(Repetition Code)
\label{ex:rep}
The \textbf{repetition code} on $L$ qubits is the classical code with parity checks $Z_i Z_{i+1}$ where $i=1,...,L-1$.
Its corresponding complex is defined as $R \equiv R(L)$ with differential $\partial^{R}$.
The 1-cells are denoted via $|i^+\ket$ for $i=1,...,L-1$ where $i^\pm =i\pm 1/2$, and 0-cells via $|i\ket$ for $i=1,...,L$ so that
\begin{equation}
    \partial^{R} |i^+\ket = |i\ket +|i+1\ket
\end{equation}
We implicitly assume that $|i\ket, |i^\pm\ket =0$ if the label is not within the previous parameters.
\end{example}

A definition of the classical repetition code is necessary since the surface code (with possible boundary conditions) will be the homological product of two repetition codes \cite{tillich2013quantum}.
\begin{lemma}[Repetition Code]
\label{lem:rep}
Let $R=R(L)$ denote the repetition code. Then $H_1(R) =0$ and $H_0(R) \cong \dF_2$. The unique basis element of $H_0(R)$ is given by $[|i\ket]$ for any $i=1,...,L$ and the unique basis element of $H^0(R)$ is $[\sR_0]$ where
\begin{equation}
    \sR_0 \equiv \sum_{i=1}^{L} |i\ket
\end{equation}
Note that $\sR_0$ can be regarded as the collection of all 0-cells.
\end{lemma}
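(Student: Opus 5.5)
The plan is to compute both homology groups directly from the explicit differential $\partial^R$ of Example~\ref{ex:rep}, and then to identify the cohomology generator by computing the transpose map $\delta = (\partial^R)^T$ with respect to the bilinear form $\bra\cdot|\cdot\ket$.

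For $H_1(R)=\ker\partial^R$, since there are no 2-cells, I will take a general chain $c=\sum_{i\in S}|i^+\ket$ with $S\subseteq[L-1]$ nonempty. Let $i_0=\min S$. Then $\bra i_0|\partial^R c\ket = 1$, because the only 1-cells adjacent to $|i_0\ket$ are $|i_0^-\ket$ and $|i_0^+\ket$, and $i_0-1\notin S$. Hence $\partial^R c\neq 0$, so $\ker\partial^R=0$ and $H_1(R)=0$.

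For $H_0(R)=C_0/\im\partial^R$, the argument has two parts. First, $\im\partial^R$ is contained in the even-weight subspace, since each boundary $|i\ket+|i+1\ket$ has weight $2$. Second, the even-weight subspace is contained in $\im\partial^R$. To see this, note $|i\ket+|j\ket=\partial^R\sum_{m=i}^{j-1}|m^+\ket$ for $i<j$, and these pairs span the even-weight vectors. Therefore $\im\partial^R$ is exactly the even-weight subspace, which has codimension $1$. This gives $H_0(R)\cong\dF_2$, with generator $[|i\ket]$ for any $i$, because every single $0$-cell has odd weight and any two differ by a boundary.

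For cohomology, $H^0(R)=\ker\delta$ where $\delta=(\partial^R)^T:C_0\to C_1$, and $\bra i^+|\delta f\ket = f_i+f_{i+1}$. So $\delta f=0$ forces $f$ to be constant, and $\ker\delta=\{0,\sR_0\}$. Since there is no $C_{-1}$, the class $[\sR_0]$ is the unique nonzero class. Consistency with the homology computation follows from the pairing $\bra\sR_0|i\ket=1$, which is nondegenerate.

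I do not expect a real obstacle here. The only care needed is the boundary convention (cells outside range are zero) in the $H_1$ argument and in the spanning claim.
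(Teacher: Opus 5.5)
Your proof is correct and complete. The paper states this lemma without proof, as a standard fact, and your direct computation is exactly the routine verification it leaves implicit: injectivity of $\partial^R$ via the minimal element of the support, the image of $\partial^R$ being the even-weight subspace of codimension one, and $\ker(\partial^R)^{\top}$ consisting of the constant vectors $0$ and $\sum_{i=1}^{L}|i\rangle$.
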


\begin{definition}[CSS Codes]
\label{def:CSS}
Consider a CSS code as a complex $C=C_2\to C_1 \to C_0$ (with basis) and convention that $C_2,C_1,C_0$ are the $X$-type checks, qubits, $Z$-type checks, respectively. Let $\weight_X,\weight_Z$ denote the max weight of $X,Z$-checks, respectively, and let $\qubit_X,\qubit_Z$ denote the max number of $X,Z$-checks acting on any qubit, respectively.
Note that $\weight_X,\qubit_X$ are the max column weights $\norm{\cdot}_{\rm{col}}$ of $\partial_2,\partial_1$, respectively, and similarly, $\qubit_Z,\weight_Z$. 
Hence, we denote the weights via the following diagram
\begin{equation}
\label{eq:weight-diagram}
C_2 \xrightleftharpoons[\qubit_X]{\weight_X} C_1 \xrightleftharpoons[\weight_Z]{\qubit_Z} C_0
\end{equation}
Furthermore, given cells $c_2,c_0$, we shall refer to the \textbf{common qubits} -- denoted as $c_2\wedge c_0$ -- as intersection of supports of $c_2,c_0$, i.e., the collection of cells $c_1$ such that $c_2 \sim c_1 \sim c_0$.
\end{definition}

\begin{definition}[Framework]
\label{def:framework}
The following diagram summarizes the framework introduced in Theorem I.1 of Ref.~\cite{yuan2025unified}, albeit with slightly modified notation (subscripts of maps may be omitted for simplicity).
\begin{equation}
\label{eq:height-2-diagram}
\begin{tikzpicture}[baseline]
\matrix(a)[matrix of math nodes, nodes in empty cells, nodes={minimum size=25pt},
row sep=2em, column sep=2em,
text height=1.25ex, text depth=0.25ex]
{&& \red{C^{X}_{2}}  & \red{C^{X}_{1}} & \red{C^{X}_{0}}\\
& \gray{C^{Q}_{2}}  & \gray{C^{Q}_{1}}  & \gray{C^{Q}_{0}} &\\
\blue{C^{Z}_{2}} & \blue{C^{Z}_{1}} & \blue{C^{Z}_{0}} &&\\};
\path[->,red,font=\scriptsize]
(a-1-3) edge node[above]{$\partial^{X}_2$} (a-1-4)
(a-1-4) edge node[above]{$\partial^{X}_1$} (a-1-5);
\path[->,gray,font=\scriptsize]
(a-2-2) edge node[above]{$\partial^{Q}_2$} (a-2-3)
(a-2-3) edge node[above]{$\partial^{Q}_1$} (a-2-4);
\path[->,blue,font=\scriptsize]
(a-3-1) edge node[above]{$\partial^{Z}_2$} (a-3-2)
(a-3-2) edge node[above]{$\partial^{Z}_1$} (a-3-3);
\path[->,font=\scriptsize]
(a-1-3) edge[bend right=80, dashed] node[left]{$p^{ZX}_2$} (a-3-2)
(a-1-4) edge[bend left=80, dashed] node[right]{$p^{ZX}_1$} (a-3-3);
\path[->,font=\scriptsize]
(a-1-3) edge node[right]{$g^{QX}_2$} (a-2-3)
(a-1-4) edge node[right]{$g^{QX}_1$} (a-2-4)
(a-2-2) edge node[right]{$g^{ZQ}_2$} (a-3-2)
(a-2-3) edge node[right]{$g^{ZQ}_1$} (a-3-3);
\end{tikzpicture}
\end{equation}

We say that $g^{QX},g^{ZQ},p^{ZX}$ are \textbf{compatible} with complexes $C^X,C^Q,C^Z$ if 
\begin{align}
    \partial^{Q} g^{QX} &= g^{QX} \partial^X \\
    \partial^{Z} g^{ZQ} &= g^{ZQ} \partial^Q \\
    \label{eq:chain-homotopy}
    g^{ZQ} g^{QX} &= \partial^{Z}p^{ZX} +p^{ZX} \partial^{X}
\end{align}
Since we will also be concerned with the weights of the constructed complex $C$, it's visually more straightforward to use the following diagram, where each arrow corresponds to the max column weight of the corresponding linear map.
\begin{equation}
\label{eq:weight-diagram-2D}
\begin{tikzpicture}[baseline]
\matrix(a)[matrix of math nodes, nodes in empty cells, nodes={minimum size=25pt},
row sep=2em, column sep=2em,
text height=1.25ex, text depth=0.25ex]
{&& \red{C^{X}_{2}}  & \red{C^{X}_{1}} & \red{C^{X}_{0}}\\
& \gray{C^{Q}_{2}}  & \gray{C^{Q}_{1}}  & \gray{C^{Q}_{0}} &\\
\blue{C^{Z}_{2}} & \blue{C^{Z}_{1}} & \blue{C^{Z}_{0}} &&\\};
\path[-left to,font=\scriptsize,transform canvas={yshift=0.2ex}]
(a-1-3) edge node[above]{}  (a-1-4)
(a-1-4) edge node[above]{}  (a-1-5)
(a-2-2) edge node[above]{}  (a-2-3)
(a-2-3) edge node[above]{}  (a-2-4)
(a-3-1) edge node[above]{}  (a-3-2)
(a-3-2) edge node[above]{}  (a-3-3);
\path[left to-,font=\scriptsize,transform canvas={yshift=-0.2ex}]
(a-1-3) edge node[below]{}  (a-1-4)
(a-1-4) edge node[below]{}  (a-1-5)
(a-2-2) edge node[below]{}  (a-2-3)
(a-2-3) edge node[below]{}  (a-2-4)
(a-3-1) edge node[below]{}  (a-3-2)
(a-3-2) edge node[below]{}  (a-3-3);
\path[-left to,font=\scriptsize,transform canvas={xshift=0.2ex}]
(a-1-3) edge node[right]{}  (a-2-3)
(a-1-4) edge node[right]{}  (a-2-4)
(a-2-2) edge node[right]{}  (a-3-2)
(a-2-3) edge node[right]{}  (a-3-3);
\path[left to-,font=\scriptsize,transform canvas={xshift=-0.2ex}]
(a-1-3) edge node[left]{}  (a-2-3)
(a-1-4) edge node[left]{}  (a-2-4)
(a-2-2) edge node[left]{}  (a-3-2)
(a-2-3) edge node[left]{}  (a-3-3);
\path[-left to,font=\scriptsize]
(a-1-3) edge[bend right=80, dashed] node[right]{} (a-3-2)
(a-1-4) edge[bend left=80, dashed] node[right]{} (a-3-3);
\path[left to-,font=\scriptsize]
(a-1-3) edge[bend right=80, dashed] node[left]{} (a-3-2)
(a-1-4) edge[bend left=80, dashed] node[left]{} (a-3-3);
\end{tikzpicture}
\end{equation}
In particular, an upper bound of the column/row weights of $C$ of diagram in Eq.~\eqref{eq:weight-diagram} can be obtained via the expanded diagram above in a straightforward (but notationally tedious) manner.
See, for example, Eq.~\eqref{eq:reduced-weights} and Eq.~\eqref{eq:weight-diagram-2D-layer}.
\end{definition}

\begin{theorem}[Theorem I.1 of Ref.~\cite{yuan2025unified}]
    \label{thm:unified}
    If the maps in Eq.~\eqref{eq:height-2-diagram} are compatible, then $C=C_2\to C_1\to C_0$ constructed via $C_i = C_i^X\oplus C_i^Q \oplus C_i^Z$ and differential
    \begin{equation}
        \partial_i = 
        \begin{pmatrix}
            \partial^{X}_i &  &\\
            g^{QX}_i & \partial^{Q}_i & \\
            p^{ZX}_i & g^{ZQ}_i & \partial^{Z}_i
        \end{pmatrix}
    \end{equation}
    is a well-defined complex. Moreover, the following -- referred as the \textbf{embedded} complex -- is also well-defined 
    \begin{equation}
        C^{\bg} = H_2(C^X) \xrightarrow{[g^{QX}_2]} H_1(C^Q) \xrightarrow{[g^{ZQ}_1]} H_0 (C^Z)
    \end{equation}
    where $[g^{QX}],[g^{ZQ}]$ are defined via passing through quotients. Moreover, if $H_1(C^X) = H_1(C^Z) =0$, then 
    \begin{equation}
        H_1(C) \cong H_1(C^{\bg})
    \end{equation}
    Specifically, the isomorphism $H_1(C) \to H_1 (C^{\bg})$ is given by
    \begin{equation}
        \label{eq:height-2-cone-iso}
        \left[
        \begin{pmatrix} 
        0 \\ \ell^{Q} \\ \ell^{Z}
        \end{pmatrix} \right]
        \mapsto 
        \llb \ell^{Q} \rrb
    \end{equation}
    which states that every logical of $C$ has an equivalent representation $\ell=\ell^\qubit \oplus \ell^Z$ which is not supported on $C_1^X$, the projection of which onto $\ell^Q \in C_1^{Q}$ is a logical of $C^{Q}$ so that the equivalence class $[\ell^Q]\in H_1(C^Q)$ is a logical of $C^{\bg}$ and thus $\llb \ell^Q \rrb \in H_1(C^{\bg})$.
\end{theorem}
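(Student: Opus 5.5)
The plan is a direct diagram chase in three stages: verify $\partial\partial=0$, verify the embedded complex is well-defined, and then construct the isomorphism on $H_1$ explicitly and check it is well-defined, surjective and injective.

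\emph{Step 1: $C$ is a complex.} Compute the lower-triangular product $\partial_1\partial_2$ block by block.
\begin{itemize}
\item The diagonal blocks vanish because each of $C^X,C^Q,C^Z$ is a complex.
\item The $(Q,X)$ block is $g^{QX}_1\partial^X_2+\partial^Q_2 g^{QX}_2$. It vanishes because $g^{QX}$ is a chain map and we work over $\dF_2$.
\item The $(Z,Q)$ block vanishes by the same argument applied to $g^{ZQ}$.
\item The $(Z,X)$ block is $p^{ZX}_1\partial^X_2+g^{ZQ}_1g^{QX}_2+\partial^Z_2p^{ZX}_2$. It vanishes precisely by the homotopy relation \eqref{eq:chain-homotopy}.
\end{itemize}

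\emph{Step 2: the embedded complex is well-defined.} Since $g^{QX}$ and $g^{ZQ}$ are chain maps, they send cycles to cycles and boundaries to boundaries, so $[g^{QX}_2]$ and $[g^{ZQ}_1]$ descend to homology. Here $H_2(C^X)=\ker\partial^X_2$ and $H_0(C^Z)=C^Z_0/\im\partial^Z_1$. For $u\in\ker\partial^X_2$, \eqref{eq:chain-homotopy} gives $g^{ZQ}g^{QX}u=\partial^Zp^{ZX}u$. This is a boundary, so $[g^{ZQ}_1][g^{QX}_2]=0$ and $C^{\bg}$ is a complex.

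\emph{Step 3: the isomorphism, assuming $H_1(C^X)=H_1(C^Z)=0$.}

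First, normalize representatives. Take a cycle $(a,b,c)\in C_1$. Its $X$-component gives $\partial^X_1a=0$, so $a=\partial^X_2u$ because $H_1(C^X)=0$. Subtracting $\partial(u,0,0)$ yields a homologous cycle of the form $(0,b',c')$.

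For a cycle $(0,b,c)$, the cycle conditions read $\partial^Q_1b=0$ and $g^{ZQ}_1b=\partial^Z_1c$. Thus $[b]\in\ker[g^{ZQ}_1]$, and we define the map by $[(0,b,c)]\mapsto\llb b\rrb$, the class of $[b]$ modulo $\im[g^{QX}_2]$.

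The main bookkeeping obstacle is well-definedness: two normalized representatives of the same class must have the same image. Two such representatives differ by a boundary $\partial(u,v,w)$ whose $X$-component $\partial^X_2u$ is zero. Its $Q$-component is $g^{QX}_2u+\partial^Q_2v$ with $u\in\ker\partial^X_2=H_2(C^X)$, so it represents an element of $\im[g^{QX}_2]$ and the image is unchanged. Linearity is clear.

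\emph{Surjectivity.} Given $[b]\in\ker[g^{ZQ}_1]$, choose $c$ with $g^{ZQ}_1b=\partial^Z_1c$. Then $(0,b,c)$ is a cycle mapping to $\llb b\rrb$.

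\emph{Injectivity.} Suppose $b=g^{QX}_2u+\partial^Q_2v$ with $\partial^X_2u=0$. Then $(0,b,c)-\partial(u,v,0)=(0,0,c')$ with $c'=c+p^{ZX}_1u+g^{ZQ}_2v$. Being a cycle, it satisfies $\partial^Z_1c'=0$. By $H_1(C^Z)=0$ we get $c'=\partial^Z_2w$, so $(0,0,c')=\partial(0,0,w)$ and the original class is trivial. This step is where the hypothesis $H_1(C^Z)=0$ enters, just as $H_1(C^X)=0$ entered in the normalization. The resulting map is exactly \eqref{eq:height-2-cone-iso}.
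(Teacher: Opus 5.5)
Your proof is correct. It is the cleaning argument that the paper only sketches after the statement (and illustrates in the Euclidean Layer Code example), with well-definedness, surjectivity and injectivity filled in; the paper gives no proof of its own and defers to Ref.~\cite{yuan2025unified}. One bookkeeping slip that does not affect the argument: $g^{QX}_i$ and $p^{ZX}_i$ lower degree by one, so the terms should read $\partial^Q_1 g^{QX}_2$, $\partial^Z_1 p^{ZX}_2$ and $c'=c+p^{ZX}_2u+g^{ZQ}_2v$.
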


The following provides a specific example which illustrates Theorem~\ref{thm:unified} in practice.

\begin{example}[Euclidean Layer Code]
    Consider the Euclidean layer code $C$ constructed with input code $A$ on 3 qubits with parity checks $XXI$ and $ZZZ$.
    As illustrated in Fig. \ref{fig:example-layer-code}, the red, grey, and blue layer(s) denote those corresponding to $X$-check(s), qubits, and $Z$-check(s), respectively.
    We then claim that the squiggly lines denotes a general logical representation that corresponds to the logical $XIX$ of the input code $A$, in the sense of the isomorphism in Theorem~\ref{thm:unified}.
    Specifically, as illustrated by Fig. \ref{fig:example-isomorphism}, given a general logical representation (top left), it can always be \textit{cleaned} by applying parity checks, e.g., those corresponding to the oddly shaped red region, to an equivalent logical representation $\ell$  (top right) which is not supported on $C_1^X$. 
    The restriction (bottom right) to $\ell^Q \in C_1^Q$ are logical representations of $C^Q$ and thus its equivalence class (bottom left) $[\ell^Q] \in H_1(C^Q) = A_1$ corresponds to the logical representation $XIX$ so that its further equivalence class $\llb \ell^Q\rrb \in H_1(A)$.
    \begin{figure}[ht]
    \centering
    \subfloat[\label{fig:example-layer-code}]{%
        \centering
        \includegraphics[width=0.8\columnwidth]{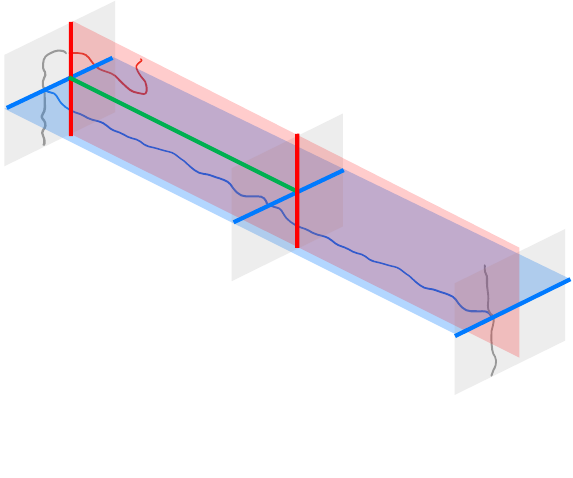}
    }
    \\
    \subfloat[\label{fig:example-isomorphism}]{%
        \centering
        \includegraphics[width=1\columnwidth]{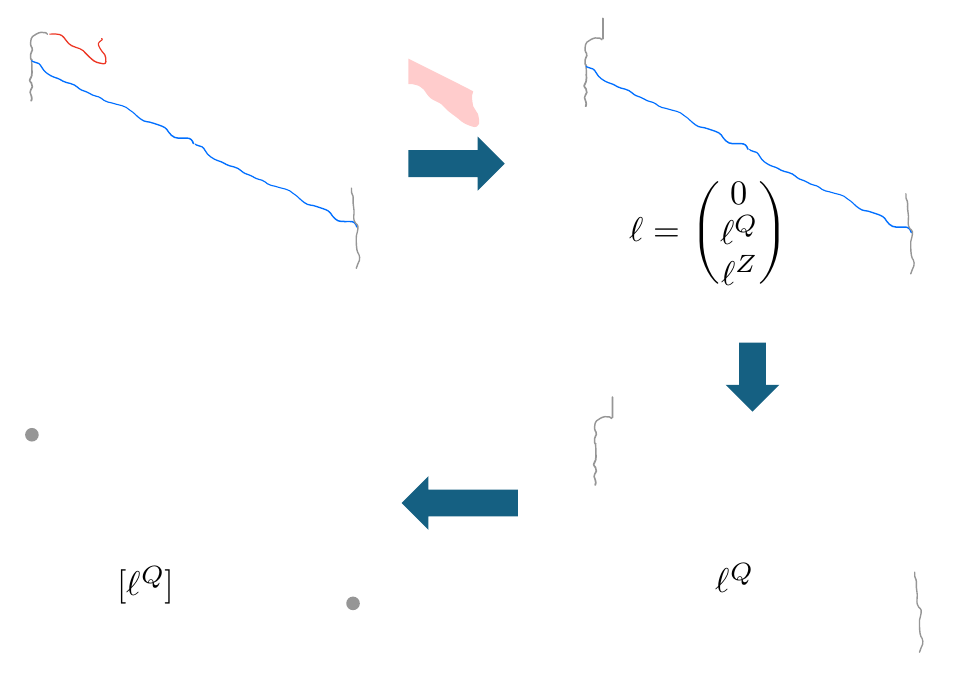}
    }
    \caption{Layer Code Example. (a) denotes the Euclidean layer code $C$ constructed from input code $A$ with parity checks $XXI,ZZZ$, where the squiggly line denotes a possible logical representation (b) illustrates the isomorphism in Theorem~\ref{thm:unified} using the logical representation in (a) as an example. In particular, it corresponds to the logical $XIX$ of the input code $A$.
    }
    \label{fig:example-layer}
    \end{figure}
\end{example}

\section{Algebraic Construction}
\label{sec:algebra}

In this section we give an algebraic description of our Layer Code quantum weight reduction procedure. 

Let us start with a CSS code with corresponding complex $A =X\to Q \to Z$ with cells $x \in \sX,q \in \sQ,z \in \sZ$, respectively, and weight
\begin{equation}
\label{eq:weight-diagram-input}
X \xrightleftharpoons[\qubit_X]{\weight_X} Q \xrightleftharpoons[\weight_Z]{\qubit_Z} Z
\end{equation}
There is then an associated graph structure as defined in Definition~\ref{def:induced-graphs}.

\begin{remark}[Induced Graphs]
    The induced graphs and coloring scheme in Definition~\ref{def:induced-graphs} are somewhat necessary in the following sense. Suppose that a mapping $\eta_X:\sX \to \dZ$ is given and similarly for $Q,Z$ so that the maps are defined as in Eq.~\eqref{eq:gQX}-\eqref{eq:pZX}, then it's sufficient to find $\eta$ which is injective on 
    \begin{enumerate}[label=\arabic*)]
        \item $x\wedge z$ for any pair -- so that $p^{ZX}$ in Eq.~\eqref{eq:pZX} is defined
        \item $\{q:q\sim x\}$ for any $x$ and $\{q:q\sim z\}$  -- so that $\|g^{QX}\|_{\rm{col}},\|g^{ZQ}\|_{\rm{row}}\le 1$ in Eq.~\eqref{eq:gQX-col}
        \item $\{z:z\sim q\}$ and $\{x:x\sim q\}$ for any $q$  -- so that $\|g^{ZQ}\|_{\rm{col}},\|g^{QX}\|_{\rm{row}}\le 1$ in Eq.~\eqref{eq:gZQ-col}
        \item $\{z: x\wedge z\ne \emptyset\}$ for any $x$ and $\{x: x\wedge z\ne \emptyset\}$ for any $z$ -- so that $\|p^{ZX}\|_{\rm{col/row}}\le 1$ in Eq.~\eqref{eq:pZX-col}
    \end{enumerate}
    If $\eta$ satisfies the conditions above, then $\eta$ is necessarily the coloring schemes of the induced graphs.
\end{remark}

Using the induced coloring scheme $\eta$, we can define independent CSS codes that will replace the checks and qubits of the input code $A$. 
Specifically, define $R_X =R(\chi_X),R_Q=R(\chi_Q),R_Z=R(\chi_Z)$ as repetition codes with 0-cells denoted by $|i\ket,|j\ket,|k\ket$, respectively.
Define the independent CSS codes as
\begin{align}
    \label{eq:CX}
    C^X &= X\otimes R_Q^\top \otimes R_Z^\top \\
    C^Q &= R_X\otimes Q \otimes R_Z^\top \\
    \label{eq:CZ}
    C^Z &= R_X\otimes R_Q \otimes Z
\end{align}
where we treat $X,Q,Z$ as length-0 complexes so that, e.g., $C^X$ can be regarded as $|\sX|$ copies of the length-2 complex $R_Q^\top \otimes R_Z^\top$ with 2-cells $|x,j,k\ket$ where $x\in \sX,j\in [\chi_Q],k\in [\chi_Z]$. 
The case is similar for $C^Q,C^Z$.

Using the framework in Ref. \cite{yuan2025unified}, summarized in Definition~\ref{def:framework}, define the following
\begin{align}
    \label{eq:gQX}
    g^{QX} |x,j,k_s\ket &=\sum_{q} |\eta(x), q,k_s\ket 1\{j=\eta(q),q\sim x\} \\
    g^{ZQ} |i_s,q,k\ket &=\sum_{z} |i_s,\eta(q),z\ket 1\{k=\eta(z),z\sim q\}
\end{align}
where $i_s,j_s,k_s$ denote integers, half-integers for $s=0,1$, respectively. 

Further define $p^{ZX}$ as follows.
Given $x,z$ checks, there are an even number of common qubits and we note that $q_1,\dots,q_{2t} \in x\wedge z$ can be ordered in a strictly increasing manner with respect to $\eta$, i.e., $\eta(q_1)<\cdots <\eta(q_{2t})$, since $q_1,…,q_{2t}$ are all adjacent to each other in the induced graph $(\sQ,\sE_{Q})$. 
We can then define the \textit{string defect} $\Gamma(x\wedge z)\subseteq \dR$ as the disjoint union of half-open intervals $[\eta(q_{2i-1}),\eta(q_{2i}))$ over $i=1,...,t$. Define the following
\begin{align}
    \label{eq:pZX}
    p^{ZX} |x,j_s,k\ket &=\sum_{z} 1\{k=\eta(z)\} |\eta(x),j_s^+,z\ket \\
    &\quad\quad\times1\{j_s\in \Gamma(x\wedge z)\} \nonumber
\end{align}

\begin{theorem}[Logical Subspace]
    \label{thm:logical}
    The maps $g^{QX},g^{ZQ},p^{ZX}$ defined in Eq.~\eqref{eq:gQX}-\eqref{eq:pZX} are compatible with complexes $C^X,C^Q,C^Z$ in Eq.~\eqref{eq:CX}-\eqref{eq:CZ}.
    Moreover, the embedded complex in the constructed complex $C$ is exactly $=A$  so that $H_1(C) \cong H_1(A)$, and has weights
    \begin{equation}
    \label{eq:reduced-weights}
    C_2 \xrightleftharpoons[4]{6} C_1 \xrightleftharpoons[6]{4} C_0
    \end{equation}
    And dimension
    \begin{equation}
        \dim C_i = O\left(\sum_{\alpha=X,Q,Z} \dim \alpha |\chi_\beta||\chi_\zeta|\right)
    \end{equation}
    where $\beta,\zeta$ are the remaining indices other than $\alpha$.
    In particular, as a CSS code, $C$ has qubit overhead $O(\weight^4\qubit^4)$ where $\weight,\qubit$ are the maximum of $\weight_X,\weight_Z$ and $\qubit_X,\qubit_Z$, respectively.
\end{theorem}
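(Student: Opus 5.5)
The proof splits into three parts: verifying the three compatibility identities of Definition~\ref{def:framework}, identifying the embedded complex with $A$ and invoking Theorem~\ref{thm:unified}, and reading off weights and dimensions.

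\textbf{Compatibility.} I would work cell by cell on the product complexes.

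For $\partial^Q g^{QX}=g^{QX}\partial^X$, note that $g^{QX}$ acts only on the $R_Q^\top$ factor of $C^X$. On that factor it collapses the $j$-direction onto the single point $j=\eta(q)$, while inserting the $R_X$ coordinate $i=\eta(x)$ as a $0$-cell. The $R_Z^\top$ factor is untouched, so both sides agree on the $k$-part. In the $j$-direction, a $j_s$ with $s=1$ is killed, since the indicator $j=\eta(q)$ requires $j$ to be an integer. The coboundary $R_Q^\top$ from $j$ to $j^\pm$ is matched by nothing on the $Q$ side, because $Q$ is length $0$ and $R_X$ is evaluated at a $0$-cell. So the check reduces to the statement that a point evaluation at an integer commutes with the differential in the $k$-factor, which is immediate.

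The identity $\partial^Z g^{ZQ}=g^{ZQ}\partial^Q$ is symmetric.

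The substantive identity is the homotopy relation \eqref{eq:chain-homotopy}. I would compute both sides on the $2$-cell $|x,j,k\rangle$ with $j,k$ integers.

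\emph{Left side.} By the coloring, $g^{ZQ}g^{QX}|x,j,k\rangle$ is nonzero only if $j=\eta(q)$ with $q\sim x$, and then it equals $\sum_z|\eta(x),\eta(q),z\rangle\,1\{k=\eta(z),\,z\sim q\}$. Injectivity of $\eta$ on $\{q\sim x\}$ and on $\{z\sim q\}$ ensures at most one term survives. The result is therefore a sum over $z$ with $\eta(z)=k$ of $1\{j=\eta(q)\text{ for some } q\in x\wedge z\}$.

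\emph{Right side.} $\partial^R$ applied to $\sum_{j_s\in\Gamma}|j^+_s\rangle$ gives exactly the indicator of the endpoints $\eta(q_1),\dots,\eta(q_{2t})$ of the intervals, because $\Gamma(x\wedge z)$ is a union of half-open intervals. Terms in $p\partial^X$ move $k\to k^\pm$, and since $p$ requires the integer $k=\eta(z)$, they vanish; the analogous $j$-coboundary terms combine with $\partial^Z p$ to give the boundary of the interval indicator. The main care point is tracking exactly which terms cancel in $\mathbb F_2$ and which survive. Uniqueness of $z$ with $\eta(z)=k$ among those overlapping $x$ comes from condition 4 of the induced-graph remark.

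The identity must also be checked on $1$-cells of $C^X$, namely $|x,j_s,k_{s'}\rangle$ with $s+s'=1$. That is the same interval-boundary computation one degree lower, and I expect this bookkeeping of half-integer indices to be the main obstacle.

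\textbf{Embedded complex.} By Lemma~\ref{lem:rep} and Künneth, $H_2(C^X)\cong X$, spanned by the classes of $x\otimes\mathcal R^Q_0\otimes\mathcal R^Z_0$, that is, all-ones cocycles. Likewise $H_1(C^Q)\cong Q$ via $[|i,q,\mathcal R^Z_0\rangle]$ and $H_0(C^Z)\cong Z$ via $[|i,j,z\rangle]$. Also $H_1(C^X)=H_1(C^Z)=0$, since the repetition-code factors have homology only in the right single degree. Applying $g^{QX}$ to $x\otimes$all-ones gives $\sum_{q\sim x}|\eta(x),q,\mathcal R_0^Z\rangle$, whose class is $\partial_2^A x$. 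Similarly $[g^{ZQ}]$ realizes $\partial_1^A$. Hence $C^{\mathrm{emb}}=A$, and Theorem~\ref{thm:unified} yields $H_1(C)\cong H_1(A)$.

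\textbf{Weights and dimension.} Each surface-code block contributes weight $4$ and degree $4$ at most, but boundary checks have lower weight, which leaves room for the defect terms. Using the column and row bounds $\le1$ for $g$ and $p$ from conditions 1--4 of the remark, I would tally the rows of the block matrix $\partial_i$ as in diagram \eqref{eq:weight-diagram-2D}. A check lies in at most one block and touches at most one extra qubit via each of $g$ and $p$, giving $4+1+1=6$. The $X$- and $Z$-degrees come out as $4$ after observing that bulk qubits are not touched by defects in both directions simultaneously.

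Dimensions follow by counting cells of the products: $\dim C_i=O(|\sX|\chi_Q\chi_Z+n\chi_X\chi_Z+|\sZ|\chi_X\chi_Q)$. With $\chi=O(\weight^2\qubit^2)$ and $|\sX|,|\sZ|\le \qubit n$, this gives overhead $O(\weight^4\qubit^4)$.
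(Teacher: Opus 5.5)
Your proposal is correct and follows the paper's proof essentially step for step. Both verify the chain-map and homotopy identities by direct cell computations, using injectivity of $\eta$ and the half-open interval structure of $\Gamma(x\wedge z)$. Both then identify the embedded complex with $A$ through the all-ones representatives and K\"unneth, invoke Theorem~\ref{thm:unified} with $H_1(C^X)=H_1(C^Z)=0$, and finish with a block-matrix tally of row and column weights.

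One correction to where you locate the difficulty. The homotopy identity on $1$-cells of $C^X$ is vacuous, not a second interval computation, because every term there lands in $C^Z_{-1}=0$. So the only real content is the $2$-cell check, which rests on $1\{j\in\Gamma\}+1\{j^-\in\Gamma\}=1\{j\in\eta(x\wedge z)\}$ and $1\{j\in\Gamma\}+1\{j^+\in\Gamma\}=0$.
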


\begin{theorem}[Code Distance]
    \label{thm:distance}
    The constructed complex $C$ in Theorem~\ref{thm:logical} has $\alpha=X,Z$-type code distance satisfying
    \begin{equation}
        d_\alpha(C) \ge \min(\chi_{\bar{\alpha}}, 2\chi_{Q}) \frac{1}{\weight_\alpha} d_\alpha(A)
    \end{equation}
    where $\bar{\alpha}$ is the complement index of $\alpha$ and $d_\alpha(A)$ is the $\alpha$-type distance of $A$.
\end{theorem}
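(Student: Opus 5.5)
The plan is to prove the bound by a \emph{slicing} (sweep-out) argument rather than by cleaning with weight control. I treat the $X$-type case; the $Z$-type case is identical with the roles of $C^X$ and $C^Z$ exchanged and $\chi_Z$ replaced by $\chi_X$. Fix a nontrivial $X$-type logical $\ell\in C_1$ of $C$, i.e.\ a representative whose class is nonzero in the relevant (co)homology. Write $\ell=\ell^X\oplus\ell^Q\oplus\ell^Z$.

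\textbf{Step 1: slice maps.} For each row index $k\in[\chi_Z]$ I will define a linear map $\pi_k:C_1\to Q$ with three ingredients. In every qubit layer $q$ (a $\chi_X\times\chi_Z$ planar code), $\pi_k$ records the parity of $\ell^Q$ along the $k$-th cross-section, which is the line meeting every $X$-type string logical of that layer once. The layers of $C^Z$ are untouched by these cross-sections. The layers of $C^X$ contribute a correction through the segment of row $k$ that the blue defect $g^{QX}$ glues to layer $q$.

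The requirement is that $\pi_k$ is a (co)chain map from $C$ to $A$, i.e.\ it intertwines the differentials. The identities needed to check this are exactly the compatibility relations of Theorem~\ref{thm:logical}, in particular the homotopy relation \eqref{eq:chain-homotopy}. With that in hand, $\pi_k\ell$ is an $X$-type logical of $A$.

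\textbf{Step 2: the slices see the logical class.} Next I show that $\pi_k\ell$ represents the same class as the image of $[\ell]$ under the isomorphism $H_1(C)\cong H_1(A)$ of Theorem~\ref{thm:unified}. It suffices to check this on the cleaned representative $0\oplus\ell^Q\oplus\ell^Z$ produced by that theorem. On it, $\pi_k$ reduces to restricting each surface-code logical in $C^Q$ to a single row, which realizes the identification $H_1(C^Q)\cong Q$. Since $\pi_k$ is a chain map, its value on homology is independent of the representative. Hence $\pi_k\ell\neq 0$ in $H_1(A)$, and so $|\pi_k\ell|\ge d_X(A)$ for every $k$.

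\textbf{Step 3: double counting.} Each qubit of $C$ lying in a qubit layer enters at most one row $k$. Each qubit lying in an $X$-check layer enters the slices through the blue and green defects, hence contributes to at most $\weight_X$ output coordinates, one for each qubit in the support of $x$. This gives
\[
\textstyle\sum_{k}|\pi_k\ell|\le \weight_X|\ell| ,
\]
and therefore $\chi_Z\,d_X(A)\le \weight_X|\ell|$.

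\textbf{Main obstacle.} The delicate step is making the slice consistent inside the check layers, where the green string defects $p^{ZX}$ run along the $\hat q$ direction over the intervals $\Gamma(x\wedge z)$. A single row of layer $x$ can cross up to $\chi_Q$ positions, and the correction term may be forced to pair positions two at a time. If the row-by-row slicing fails to give a chain map there, I will first try using instead slices transverse to the $\hat q$ direction inside the check layers. There are $\chi_Q$ such slices, and because the defects pair qubits into intervals, each qubit of $C$ is counted at most twice per unit of $\weight_X$, which yields the alternative bound $2\chi_Q\,d_X(A)\le\weight_X|\ell|$. Combining whichever family of slices is admissible gives
\[
|\ell|\ge\min(\chi_Z,2\chi_Q)\,d_X(A)/\weight_X ,
\]
which is the claimed bound for $\alpha=X$. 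If even the transverse slices do not give chain maps, the fallback is to clean $\ell^X$ within each $x$-layer using the relative isoperimetric inequality for $R_Q^\top\otimes R_Z^\top$ (Appendix~\ref{app:TP}). This controls the weight increase by the same $\min(\chi_Z,2\chi_Q)$ factor, after which the argument proceeds on the qubit layers alone.
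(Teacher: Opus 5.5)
Your slicing route is genuinely different from the paper's. It can be made to work, and in fact gives a stronger bound, but as written its only nontrivial step is missing.

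\textbf{The gap.} Step~1 asserts that $\pi_k$ is a chain map $C\to A$, but:
\begin{itemize}
\item its $C^X_1$-component is never defined beyond ``a correction through the segment of row $k$'';
\item the components on $C_2$ and $C_0$, which the chain-map identities involve, are never given;
\item your last paragraph concedes you do not know whether such a map exists.
\end{itemize}
This is not a formality. The $|q\rangle$-coefficient of a row-$k$ correction in layer $x$ cannot be a chain whose only boundary point is $(\eta(q),k)$, because every $1$-chain in the finite grid has an even number of boundary points. So the degree-$2$ identity forces a nonzero component $C^X_2\to X$. The degree-$1$ identity on $C^X_1$ also needs a component $C^X_0\to Z$ that absorbs the green defects.

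Your fallbacks do not fill this. The ``$\chi_Q$ slices transverse to $\hat q$'' make no sense in the qubit layers, which have no $\hat q$ direction and whose disjoint $Z$-cross-sections are just the $\chi_Z$ rows; nothing supports the factor $2\chi_Q$. The last fallback is exactly the paper's proof: clean $\ell^X$ using the relative isoperimetric inequality of Appendix~\ref{app:TP}, then use the $\chi_Z$ distance of the qubit layers. There, Proposition~\ref{prop:square-expansion} gives the constant $\min(1,2\chi_Q/\chi_Z)/\weight_X$ for $(\partial^X,g^{QX})$, the cleaning lemma of Ref.~\cite{yuan2025unified} is used as a black box, and that constant is where $2\chi_Q$ comes from.

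\textbf{How to close it.} Fix $k$ and a reference column $j_0$, and define $\pi_k$ as follows.
\begin{itemize}
\item $|i,q,m\rangle\mapsto\delta_{km}|q\rangle$, $|i^+,q,m^+\rangle\mapsto 0$, and $C^Z_1\mapsto 0$.
\item $|x,j^+,m\rangle\mapsto\delta_{km}\sum_{q\sim x}1\{j^+\text{ lies between }j_0\text{ and }\eta(q)\}\,|q\rangle$, and vertical edges $|x,j,m^+\rangle\mapsto 0$.
\item $|x,j,m\rangle\mapsto 1\{(j,m)=(j_0,k)\}\,|x\rangle$, and $C^Q_2,C^Z_2\mapsto 0$.
\item $|i,j,z\rangle\mapsto|z\rangle$, and $|i,q,m^+\rangle\mapsto\sum_{z\sim q}1\{m^+\text{ lies between }k\text{ and }\eta(z)\}\,|z\rangle$.
\item On the faces $C^X_0$ of layer $x$, for each $z$ with $x\wedge z\neq\emptyset$, the $|z\rangle$-coefficient is a set of faces bounded by a loop. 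The loop consists of the row-$k$ paths from $(\eta(q),k)$ to $(j_0,k)$, the column-$\eta(q)$ segments from row $k$ to row $\eta(z)$ (for $q\in x\wedge z$), and the green string $\Gamma(x\wedge z)$ in row $\eta(z)$.
\end{itemize}
The loop is closed because $|x\wedge z|$ is even and the endpoints of $\Gamma(x\wedge z)$ are exactly $\eta(x\wedge z)$; this is where \eqref{eq:chain-homotopy} enters. It bounds because the rectangle is acyclic. A block-by-block check then gives both chain-map identities. Since $\pi_k$ vanishes on $C^Z_1$, your Step~2 goes through. Since $p^{ZX}$ only affects the $C_0$-component, the green defects never enter the count, so your ``main obstacle'' is not one.

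In Step~3:
\begin{itemize}
\item a horizontal edge in row $m$ of layer $x$ feeds only $\pi_m$, and at most $\weight_X$ coordinates;
\item vertical edges and $C^Z_1$ feed nothing;
\item each $|i,q,m\rangle$ feeds exactly one slice.
\end{itemize}
Hence $\chi_Z\,d_X(A)\le\weight_X|\ell|$.

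Completed this way, your argument gives $d_X(C)\ge\chi_Z\,d_X(A)/\weight_X$, and symmetrically $d_Z(C)\ge\chi_X\,d_Z(A)/\weight_Z$. This implies the theorem and drops the $2\chi_Q$ term, consistent with the paper's Remark~\ref{rem:small-relative-expansion} that this term is an artifact of the cleaning argument. The paper's route, in exchange, reuses an off-the-shelf cleaning lemma rather than building an explicit chain map.
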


\begin{remark}
    \label{rem:small-relative-expansion}
    As discussed in Definition~\ref{def:induced-graphs}, it's typical to expect
    \begin{align}
        \chi_{X},\chi_{Z} &= \Theta(\weight_X\qubit_X\weight_Z \qubit_Z) \\
        \chi_{Q} &= \Theta(\weight_X\qubit_X +\weight_Z \qubit_Z) 
    \end{align}
    Hence, for large weights $\weight_X,\weight_Z=\Theta(\weight) $ and qubit degrees $\qubit_X,\qubit_Z=\Theta(\qubit)$, we have $\chi_{Q}\ll \chi_{X},\chi_{Z}$ and thus the code distance is
    \begin{equation}
        d(C) \ge \Theta(\qubit) d(A)
    \end{equation}
    Note that this lower bound in code distance is not ideal as discussed in Remark~\ref{rem:distance-enhancement} due to $\chi_{Q} \ll \chi_{X},\chi_{Z}$.
    
    However, this lower bound may be an artifact of our proof.
    Specifically, when $\chi_{Q} \ll \chi_{X},\chi_{Z}$, the relative expansion coefficient $\alpha \ll 1$ as shown in Proposition \eqref{prop:square-expansion}. In this case, if we refer to Eq. (A26)-(A27) of \cite{yuan2025unified}, we see that the inequality applied in the Cleaning Lemma is very non-optimal, despite being necessary to apply the triangle inequality in Eq. (A28) of \cite{yuan2025unified}.
\end{remark}
\begin{remark}[Technical Detail]
    In our proof of the code distance, we do not use the detailed structure of the defect map $p^{ZX}$. 
In fact, as long as the defect map is compatible with the gluing maps $g^{ZQ},g^{QX}$ in the sense of Eq.~\eqref{eq:chain-homotopy}, Theorem~\ref{thm:logical} and~\ref{thm:distance} still hold, albeit with distinct reduced weights (not necessarily 6 and 4 in Eq.~\eqref{eq:reduced-weights}).
    Intuitively, the consistency conditions are sufficient for $p^{ZX}$ to not introduce new (small) logicals, while the logicals originating from $A$ are ``pinned" due to the relative expansion.
\end{remark}

\section{Proofs of the main theorems}
\label{sec:proof}

In this section we provide detailed proofs of our main results. 

\subsection{Proof of Theorem \eqref{thm:logical}}
\label{sec:proof-logical}
\begin{proof}[Part I]
    Let us first check that the gluing and defect maps satisfy the compatibility conditions.
    Indeed,
    \begin{align}
        g^{QX} \partial^{X} |xjk\ket &= g^{QX} \sum_{s=\pm} |x j k^s\ket \\
        &= \sum_{s=\pm} \sum_{q:q\sim x} |\eta(x)q k^s\ket 1\{j=\eta(q)\} \\
        &= \partial^{Q} g^{QX} |x,j,k\ket 
    \end{align}
    Hence, $g^{QX}$ is a chain map. The case is similar for $g^{ZQ}$.
    Note that
    \begin{align}
        g^{ZQ} g^{QX} |x j k\ket &= g^{ZQ} \sum_{q:q\sim x} |\eta(x) q k\ket 1\{j=\eta(q)\} \\
        &= \sum_{z,q:z\sim q\sim x} |\eta(x),\eta(q),z\ket  \\
        &\quad\quad \times1\{j=\eta(q),k=\eta(z)\} \nonumber\\
        &=\sum_{z} |\eta(x),j,z\ket 1\{k=\eta(z)\} \\
        &\quad\quad \times \sum_{q} 1\{j=\eta(q),q\in x\wedge z\} \nonumber
    \end{align}
    By the induced graph $(\sQ,\sE_Q)$, we see that $\eta$ is injective on $x\wedge z$ and thus
    \begin{align}
        g^{ZQ} g^{QX} |x,j,k\ket &=\sum_{z} |\eta(x),j,z\ket \\
        &\quad 1\{k=\eta(z)\} 1\{j \in \eta(x\wedge z)\}
    \end{align}
    Also note that
    \begin{align}
        \partial^Z p^{ZX} |x,j,k\ket &= \sum_{z} 1\{k=\eta(z)\} 1\{j\in \Gamma(x\wedge z)\} \nonumber\\
        &\times (|\eta(x),j,z\ket +|\eta(x),j+1,z\ket) 
    \end{align}
    And that
    \begin{align}
        p^{ZX} \partial^X |x j k\ket &= \sum_{z} 1\{k=\eta(z)\} \\
        &\quad\;(|\eta(x) j z\ket 1\{j^-\in \Gamma (x\wedge z)\} \nonumber\\
        &\quad\; +|\eta(x),j+1,z\ket 1\{j^+\in \Gamma (x,z)\}) \nonumber
    \end{align}
    Note that by definition
    \begin{align}
        1\{j\in \Gamma(x\wedge z) \} + 1\{j^-\in \Gamma(x\wedge z)\} &= 1\{j\in \eta(x\wedge z)\} \nonumber \\
        1\{j\in \Gamma(x\wedge z)\} + 1\{j^+\in \Gamma(x\wedge z)\} &=0
    \end{align}
    Therefore, $g^{ZQ}g^{QX} = \partial^{Z} p^{ZX} + p^{ZX} \partial^{X}$
\end{proof}

\begin{proof}[Part II]
    Let us next check that the embedded complex is exactly $A$.
    By the K\"{u}nneth Formula and Lemma \eqref{lem:rep}, $H_2(C^X) \cong X$ and has basis $[|x\ket],x\in \sX$ with unique representation
    \begin{equation}
        |x\ket = \sum_{j,k} |x,j,k\ket
    \end{equation}
    Similarly, $H_1(C^Q)\cong Q$ and has basis $[|y\ket],y\in \sQ$ with possible (but not comprehensive) representation
    \begin{equation}
        \label{eq:H1CQ-basis}
        |q\ket = \sum_k |i,q,k\ket
    \end{equation}
    where $i\in [\chi_X]$ is arbitrary.
    Similarly, $H_0(C^Z) \cong Z$ and has basis $[|z\ket],z\in \sZ$ with possible representation
    \begin{equation}
        |z\ket = |i,j,z\ket
    \end{equation}
    where $i\in [\chi_X],j\in [\chi_Q]$ are arbitrary. Note that
    \begin{align}
        [g^{QX}][|x\ket] &= [g^{QX} |x\ket] \\
        &=\left[ \sum_{j,k,q} |\eta(x),q,k\ket 1\{j=\eta(q),q\sim x\}\right] \\
        &= \sum_{y} 1\{q\sim x\} [|q\ket]
    \end{align}
    And that
    \begin{align}
        [g^{ZX}][|q\ket] &= \left[g^{QX} \sum_{k} |i,q,k\ket\right] \\
        &=\left[ \sum_{k,z} |i,\eta(q),z\ket 1\{k=\eta(z),z\sim q\}\right] \\
        &=\sum_{z} [|i,\eta(q),z\ket]1\{z\sim q\} \\
        &= \sum_{z} 1\{z\sim q\} [|z\ket]
    \end{align}
    Hence, the embedded complex (with the naturally induced basis $[|x\ket],[|y\ket],[|z\ket]$) is exactly $=A$.
    Since $C^X,C^Z$ do not have internal logicals, i.e., $H_1(C^X)=H_1(C^Z)$, we see that $H_1(C) \cong H_1(A)$.
\end{proof}

\begin{proof}[Part III]
Finally, let us prove the weights of the constructed complex $C$.
By the framework summarized in Definition \eqref{def:framework}, we investigate the maximum column and row weight of the differential
\begin{equation}
\partial = 
\begin{pmatrix}
    \partial^{X} &  &\\
    g^{QX} & \partial^{Q} & \\
    p^{ZX} & g^{ZQ} & \partial^{Z}
\end{pmatrix}
\end{equation}
The weights of the matrix elements are summarized via the following diagram
\begin{equation}
\label{eq:weight-diagram-2D-layer}
\begin{tikzpicture}[baseline]
\matrix(a)[matrix of math nodes, nodes in empty cells, nodes={minimum size=25pt},
row sep=2em, column sep=2em,
text height=1.25ex, text depth=0.25ex]
{&& \red{C^{X}_{2}}  & \red{C^{X}_{1}} & \red{C^{X}_{0}}\\
& \gray{C^{Q}_{2}}  & \gray{C^{Q}_{1}}  & \gray{C^{Q}_{0}} &\\
\blue{C^{Z}_{2}} & \blue{C^{Z}_{1}} & \blue{C^{Z}_{0}} &&\\};
\path[-left to,font=\scriptsize,transform canvas={yshift=0.2ex}]
(a-1-3) edge node[above]{$4$}  (a-1-4)
(a-1-4) edge node[above]{$2$}  (a-1-5)
(a-2-2) edge node[above]{$4$}  (a-2-3)
(a-2-3) edge node[above]{$2$}  (a-2-4)
(a-3-1) edge node[above]{$4$}  (a-3-2)
(a-3-2) edge node[above]{$2$}  (a-3-3);
\path[left to-,font=\scriptsize,transform canvas={yshift=-0.2ex}]
(a-1-3) edge node[below]{$2$}  (a-1-4)
(a-1-4) edge node[below]{$4$}  (a-1-5)
(a-2-2) edge node[below]{$2$}  (a-2-3)
(a-2-3) edge node[below]{$4$}  (a-2-4)
(a-3-1) edge node[below]{$2$}  (a-3-2)
(a-3-2) edge node[below]{$4$}  (a-3-3);
\path[-left to,font=\scriptsize,transform canvas={xshift=0.2ex}]
(a-1-3) edge node[right]{$1$}  (a-2-3)
(a-1-4) edge node[right]{$1$}  (a-2-4)
(a-2-2) edge node[right]{$1$}  (a-3-2)
(a-2-3) edge node[right]{$1$}  (a-3-3);
\path[left to-,font=\scriptsize,transform canvas={xshift=-0.2ex}]
(a-1-3) edge node[left]{$1$}  (a-2-3)
(a-1-4) edge node[left]{$1$}  (a-2-4)
(a-2-2) edge node[left]{$1$}  (a-3-2)
(a-2-3) edge node[left]{$1$}  (a-3-3);
\path[-left to,font=\scriptsize]
(a-1-3) edge[bend right=80, dashed] node[right]{$1$} (a-3-2)
(a-1-4) edge[bend left=80, dashed] node[right]{$1$} (a-3-3);
\path[left to-,font=\scriptsize]
(a-1-3) edge[bend right=80, dashed] node[left]{$1$} (a-3-2)
(a-1-4) edge[bend left=80, dashed] node[left]{$1$} (a-3-3);
\end{tikzpicture}
\end{equation}
Specifically, for any $\alpha=X,Q,Z$, we have
\begin{equation}
    \norm{\partial^{\alpha}_2}_{\rm{col}} \le 4, \quad \norm{\partial^{\alpha}_1}_{\rm{col}} \le 2
\end{equation}
Note that
\begin{equation}
    \label{eq:gQX-col}
    |g^{QX} |x,j,k_s\ket| =\sum_{q} 1\{j=\eta(q),q\sim x\}
\end{equation}
By the induced graph $(\sQ,\sE_Q)$, we see that $\eta$ is injective on $\{q:q\sim x\}$ and thus $\|g^{QX}_i\|_{\rm{col}} \le 1$ for $i=2,1$.
Similarly, note that
\begin{equation}
    \label{eq:gZQ-col}
    |g^{ZQ} |i_s,q,k\ket| =\sum_{z} 1\{k=\eta(z),z\sim q\}
\end{equation}
By the induced graph $(\sZ,\sE_Z)$, we see that $\eta$ is injective on $\{z:z\sim y\}$ and thus $\|g^{ZQ}_i\|_{\rm{col}} \le 1$ for $i=2,1$.
Finally, note that
\begin{equation}
    \label{eq:pZX-col}
    |p^{ZX} |x,j_s,k\ket| =\sum_{z} 1\{k=\eta(z)\} 1\{j_s\in \Gamma(x\wedge z)\}
\end{equation}
By the induced graph $(\sZ,\sE_Z)$, we see that for a given $x$, the coloring map $\eta$ is injective on the collection of $z$ which has support that nontrivially overlaps with that of $x$, i.e., $x\wedge z\ne \emptyset$ and thus $\|p^{ZX}_i\|_{\rm{col}} \le 1$ for $i=2,1$.
Therefore,
\begin{equation}
    \norm{\partial_2}_{\rm{col}} \le 6, \quad \norm{\partial_1}_{\rm{col}}\le 4
\end{equation}
The maximum row weight is similarly derived.
\end{proof}

\subsection{Proof of Theorem \eqref{thm:distance}}
\label{sec:proof-distance}

\begin{proof}
    Similar to Theorem IV.2 of Ref. \cite{yuan2025unified}, one can prove the statement by modifying certain aspects of the Cleaning Lemma in Ref. \cite{yuan2025unified}.
    However, in this manuscript, we provide an alternative route that was briefly commented subsequent to Theorem IV.2.
    Specifically, we can prove that $(\partial^X,g^{QX})$ satisfies the isoperimetric inequality/relative expansion property\footnote{See also Ref. \cite{baspin2025fast}, in which the property was referred as systolic expansion}, and thus apply the Cleaning Lemma as a black-box.
    In Ref. \cite{yuan2025unified}, the former method was chosen to optimize the $\Theta(1)$ factor in the distance lower bound to match that which originally appeared in the conventional layer code construction \cite{williamson2023layer}.
    However, since it's only a $\Theta(1)$ factor\footnote{In fact, the former method is only better than the latter by a factor of 2.} independent of $\weight,\qubit$, it may be beneficial to see how the alternative method connects nicely with the Cleaning Lemma.
    
    Without loss of generality, we focus on $d_X$ since the construction is symmetric in $X$ and $Z$.
    By Proposition \eqref{prop:square-expansion}, we see that $(\partial^{X},g^{QX})$ satisfies the isoperimetic inequality/relative expansion, i.e., for every $s^{X} \in C_2^{X}$, there exists $\hat{s}^{X}\in C_2^{X}$ with $\partial ^X s^X = \partial^X \hat{s}^{X}$ such that
    \begin{equation}
        |\partial^{X} s^{X} | \ge \min\left(1,\frac{2 \chi_{Q}}{\chi_Z}\right)\frac{1}{\weight_X} |g^{QX} s^{X}|
    \end{equation}
    Hence, by the Cleaning Lemma in Ref. \cite{yuan2025unified},
    \begin{align}
        d_X(C) &\ge \min\left(1,\frac{2 \chi_{Q}}{\chi_Z}\right)\frac{\chi_Z}{\weight_X} d_X(A) \\
        &= \min(\chi_Z, 2\chi_{Q}) \frac{1}{\weight_X} d_X(A)
    \end{align}
    where the extra factor $\chi_Z$ utilizes the fact that each qubit layer $R_Q\otimes R_Z^\top$ has code distance $\ge \chi_Z$. 
\end{proof}

\section*{Acknowledgements}
ACY is employed by Iceberg Quantum, and was also supported by the Laboratory for Physical Sciences at CMTC in Unversity of Maryland, College Park.
DJW is supported by the Australian Research Council Discovery Early Career Research Award (DE220100625). 
This work was initiated while NB and DJW were attending the Fault-Tolerant Quantum Technologies Workshop in Benasque, 2024. 

\bibliography{main.bbl}

\appendix
\section{Tensor Product Preserves Relative Expansion}
\label{app:TP}

In this section, we show that the tensor product preserves the isoperimetric inequality/relative expansion used in the Cleaning Lemma of Ref. \cite{yuan2025unified} (see also Ref. \cite{baspin2025fast}).
The argument essentially follows the same steps as those in Section 5.4.1 of Ref. \cite{lin2023geometrically}, with specific modifications to highlight its connection to the Cleaning Lemma.

\begin{definition}
    Let $G=E\to V$ denote a graph complex with (co)differential $\delta = \partial^{\top}$.
    Then $G$ is $c$-\textbf{(co)expanding} (at degree 0) if for any $s\in V$ (subset of vertices)
    \begin{equation}
        |\delta s| \ge \frac{2c}{|\sV|} |s| |\bar{s}|
    \end{equation}
    where $\bar{s}$ is the complement (as subsets) of $s$ in $\sV$. 
    Let $V^{\pi}$ be the space generated by a subset of vertices $\sV^{\pi} \subseteq \sV$ and $\iota:V^{\pi} \to V$ denote the inclusion map and $\pi = \iota^{\top}$ denote the projection.
    Then we say the $G$ is $c^{\pi}$\textbf{-(co)expanding} (at degree 0) \textbf{relative to} $\pi$ if for any $s\in V$ (subset of vertices)
    \begin{equation}
        |\delta s| \ge \frac{c^{\pi}}{|\sV^{\pi}|} (|\pi s||\bar{s}| + |s||\pi \bar{s}|)
    \end{equation}
    Note $\overline{\pi s} (= \pi \bar{s})$ is the complement (as subsets) in $\sV^{\pi}$.
\end{definition}
\begin{remark}
    Note that $\delta s = \delta \bar{s}$, and since $\max(|s|,|\bar{s}|)\ge |\sV|/2$, we see that coexpanding implies the expansion property that there exists $\hat{s}$ with $\delta s = \delta \hat{s}$ such that
    \begin{equation}
        |\delta s| \ge c |\hat{s}|
    \end{equation}
    While relative coexpansion implies the relative expansion property that there exists $\hat{s}$ with $\delta s = \delta \hat{s}$ such that
    \begin{equation}
        |\delta s| \ge c^{\pi}\frac{|\sV|}{|\sV^{\pi}|} |\pi\hat{s}|
    \end{equation}
\end{remark}
\begin{proof}
    The coexpansion property is clear and thus restrict our attention to relative coexpansion. Note that there exists $\hat{s}=s$ or $\bar{s}$ such that $|\pi \hat{s}|\le |\sV^{\pi}|/2$. Without loss of generality, assume $\hat{s} =s$ and thus we have
    \begin{align}
        |\pi s||\bar{s}|+|\overline{\pi s}||s| &= |\pi s| (|\sV| -|s|) + (|\sV^{\pi}|-|\pi s|)|s| \\
        &=|\pi s| |\sV| +(|\sV^{\pi}| - 2|\pi s|)|s| \\
        &\ge |\sV| |\pi s|
    \end{align}
\end{proof}
\begin{remark}
    Viewing $s\in V$ as a map $s:\sV \to \dF_2$, note that
    \begin{align}
        |\delta s| &= \sum_{e=xy\in \sE} |s(x)-s(y)| \\
        2|s||\bar{s}| &= \sum_{x,y\in \sV} |s(x)-s(y)| \\
        |\pi s||\bar{s}| + |s||\pi \bar{s}|&= \sum_{x\in V^{\pi},y\in V} |s(x)-s(y)|
    \end{align}
\end{remark}

\begin{example}[Cheeger]
    Let $G$ be a graph complex with Cheeger constant $h$. Then $G$ is $h/2$-coexpanding, and $\infty$-coexpanding relative to the trivial projection $\pi:V\to 0$.
\end{example}

\begin{example}[Repetition]
    Let $R=R(L)$ denote the repetition code on $L$ bits. Then $R$ is $2/L$-coexpanding, and $1/L$-coexpanding relative to the projection $\pi$ onto $w$ vertices, e.g., $\{|1\ket,...,|w\ket\}\subseteq R_0$
\end{example}
\begin{proof}
    If $\delta s=0$, then either $s$ or $\bar{s}$ is trivial $=0$ and thus the inequality follows for both (co)expansion and relation (co)expansion.
    Hence, consider the case where $\delta s\ne 0$.
    Note that $|s||\bar{s}| \le (|s| +|\bar{s}|)^2/4 = L^2/4$ and thus
    \begin{equation}
        |\delta s|\ge 1 \ge \frac{2}{L} \times \frac{2|s||\bar{s}|}{L}
    \end{equation}
    Hence, the repetition code is $2/L$-coexpanding.
    Note that 
    \begin{align}
        |\pi s| |\bar{s}| +|s| |\pi\bar{s}| &\le w(|\bar{s}|+|s|) \\
        &\le w L
    \end{align}
    Hence,
    \begin{equation}
        |\delta s| \ge 1 \ge \frac{1}{L}\times \frac{1}{w} ( |\pi s| |\bar{s}| +|s| |\pi\bar{s}|)
    \end{equation}
    
    Hence, the repetition code is $1/L$-coexpanding relative to $\pi$.
\end{proof}

Given graph complex $G^{A},G^{B}$ with projections $\pi^{A},\pi^{B}$ onto subsets, the tensor product $C=G^{A}\otimes G^{B}=C_2 \to C_1 \to C_0$ is a cell complex and thus also induces a graph complex $C_1 \to C_0$. The projections $\pi^{A},\pi^{B}$ also induce a projection $\pi=\pi^{A}+\pi^{B}$ onto the subspace $\pi^AV^{A}\otimes V^B \oplus V^{A} \otimes \pi^{B}V^{B}$.
Hence, we can talk about the following.

\begin{lemma}[Expansion of Tensors]
    \label{lem:expan-tensors}
    Let graph complex $G^{A}$ be $a$-coexpanding, and $a^{\pi}$-coexpanding relative to projection $\pi^{A}:V\to V^{A}$. 
    Let $G^{B}$ have similar $b$-coexpanding and $b^{\pi}$-relative coexpanding properties.
    Then $G^{A}\otimes G^{B}$ is $c$-coexpanding (at degree 0), and $c^\pi$-coexpanding relative to $\pi = \pi^{A} + \pi^{B}$ where
    \begin{align}
        c &= \min(a, b) \\
        c^{\pi} &= \min(a,b,a^{\pi},b^{\pi})
    \end{align}
    Moreover, if, say $\pi^B: V^B \to 0$ is trivial, then
    \begin{equation}
        c^{\pi} = \min(a^{\pi},b)
    \end{equation}
\end{lemma}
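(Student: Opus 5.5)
The plan is to work with the combinatorial formulas in the preceding Remark: view $s\in C_0=V^A\otimes V^B$ as a function $s:\sV^A\times\sV^B\to\dF_2$, and bound the pair sums $\sum|s(x)-s(y)|$ by edge sums via a triangle inequality along ``L-shaped'' paths in the product. For $v\in\sV^B$ write $s_v=s(\cdot,v)\in V^A$ for the row slice, and for $u\in\sV^A$ write $s^u=s(u,\cdot)\in V^B$ for the column slice. The edges of $C_1\to C_0$ are of the form $(e,v)$ or $(u,f)$. The first key identity is therefore
\begin{equation}
|\delta s| = X+Y,\qquad X=\sum_{v}|\delta^A s_v|,\quad Y=\sum_u |\delta^B s^u|,
\end{equation}
because the two edge types are disjoint.

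\textbf{Absolute coexpansion.} Write $2|s||\bar s|=\sum_{(u,v),(u',v')}|s(u,v)-s(u',v')|$. I will route each pair through $(u',v)$ and apply the triangle inequality. Summing the first leg over the free index $v'$ gives $|\sV^B|\sum_v 2|s_v||\bar s_v|$, which by $a$-coexpansion of $G^A$ is at most $|\sV^B|\,|\sV^A|\,X/a$. Summing the second leg over the free index $u$ gives $|\sV^A|\sum_{u'}2|s^{u'}||\bar s^{u'}|$, which is at most $|\sV^A||\sV^B|\,Y/b$. Hence
\begin{equation}
2|s||\bar s|\le |\sV|\,(X/a+Y/b)\le \frac{|\sV|}{\min(a,b)}|\delta s|,
\end{equation}
which gives $c=\min(a,b)$.

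\textbf{Relative coexpansion.} I interpret $\pi$ as the map onto the direct sum $\pi^AV^A\otimes V^B\oplus V^A\otimes\pi^BV^B$, so that both $|\sV^\pi|=|\sV^{\pi^A}||\sV^B|+|\sV^A||\sV^{\pi^B}|$ and $|\pi s||\bar s|+|s||\pi\bar s|$ split additively into an ``$A$-piece'' and a ``$B$-piece''. The $A$-piece is $\sum_{x=(u,v),\,u\in\sV^{\pi^A}}\sum_{y=(u',v')}|s(x)-s(y)|$. For it I route again through $(u',v)$, i.e. I move first in the $A$ direction. This choice of order is the crucial point.
\begin{itemize}
\item The first leg, summed over the free $v'$, equals $|\sV^B|\sum_v(|\pi^As_v||\bar s_v|+|s_v||\pi^A\bar s_v|)$. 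By relative coexpansion of $G^A$ this is at most $|\sV^B||\sV^{\pi^A}|\,X/a^\pi$.
\item The second leg, summed over the free $u\in\sV^{\pi^A}$, is at most $|\sV^{\pi^A}|\sum_{u'}2|s^{u'}||\bar s^{u'}|$, which is at most $|\sV^{\pi^A}||\sV^B|\,Y/b$.
\end{itemize}
So the $A$-piece is at most $|\sV^{\pi^A}||\sV^B|(X/a^\pi+Y/b)$. Symmetrically, routing through $(u,v')$, the $B$-piece is at most $|\sV^A||\sV^{\pi^B}|(Y/b^\pi+X/a)$. Adding the two pieces and bounding every coefficient by $1/\min(a,b,a^\pi,b^\pi)$ gives
\begin{equation}
|\pi s||\bar s|+|s||\pi\bar s|\le \frac{|\sV^\pi|}{\min(a,b,a^\pi,b^\pi)}(X+Y).
\end{equation}
If $\pi^B$ is trivial, the $B$-piece and its share of $|\sV^\pi|$ both vanish, and only $a^\pi$ and $b$ appear, giving $c^\pi=\min(a^\pi,b)$.

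\textbf{Main obstacle.} The delicate step is the normalization in the relative case. If I routed the $A$-piece in the opposite order (through $(u,v')$), the $B$-leg would carry a factor $|\sV^A|$ instead of $|\sV^{\pi^A}|$, and this cannot be absorbed into $|\sV^\pi|$. I must therefore choose the routing order so that the free index of each leg is the one that produces the $\pi$-restricted count. I also need to check that the direct-sum convention makes the overlap $\sV^{\pi^A}\times\sV^{\pi^B}$ counted consistently on both sides; otherwise a factor of $2$ would be lost.
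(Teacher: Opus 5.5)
Your proposal is correct and is essentially the paper's own argument: you split $|\delta s|$ into the row and column edge sums, bound each pair sum by the triangle inequality along an L-shaped path, and apply the one-dimensional absolute and relative coexpansion of $G^A$ and $G^B$ to the two legs, routing through $(u',v)$ for the $\pi^A$-piece and through $(u,v')$ for the $\pi^B$-piece exactly as the paper does. Your explicit direct-sum reading of $\pi$ is the convention the paper uses implicitly when it adds its two bounds: under it, both the two pieces and their contributions to the size of the image of $\pi$ simply add, with the overlap counted twice.
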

\begin{proof}
    Let $C=G^{A}\otimes G^{B} = C_2 \to C_1 \to C_0$ be the cell complex with induced graph complex $G=C_1 \to C_0=E\to V$.
    For notation simplicity, write $x,x'\in V^{A}$ and $y,y'\in V^{B}$ unless otherwise stated.
    Also write superscript $x^{\pi},x'^{\pi}\in \pi^{A} \sV^{A}$ and similarly for $y^{\pi},y'^{\pi} \in \pi^{B} \sV^{B}$. 
    Note that regarding $s\in V$ as a map $s:\sV \to \dF_2$, we have
    \begin{align}
        |\delta s|&= \sum_{\bra xy,x'y'\ket \in E}|s(xy)-s(x'y')| \\
        &= \sum_{y\in V^{B}} \sum_{xx'\in E^{A}} |s(xy)-s(x'y)| \nonumber\\
        &\quad\quad+\sum_{x'\in V^{A}}\sum_{yy'\in E^{B}} |s(x'y)-s(x'y')| \\
        &\ge \frac{a}{|\sV^{A}|} \sum_{y,x,x'} |s(xy)-s(x'y)| \nonumber\\
        &\quad\quad+ \frac{b}{|\sV^{B}|} \sum_{x',y,y'} |s(x'y)-s(x'y')| \\
        &\ge \frac{c}{|\sV|} \sum_{x,x',y,y'} ( |s(xy)-s(x'y)| \nonumber\\
        &\quad\quad+|s(x'y)-s(x'y')|)\\
        &\ge \frac{c}{|\sV|} \sum_{xy,x'y'} |s(xy)-s(x'y')| \\
        &= \frac{2c}{|\sV|} |s||\bar{s}|
    \end{align}
    where we used the fact that $V=V^{A} \otimes V^{B}$. Hence, it is $c$-coexpanding.

    Since $\pi s$ is the projection, we can write $s(x^{\pi}y)$ instead of $(\pi s)(x^{\pi}y)$, and similarly for $x y^{\pi}$. Note that
    \begin{align}
        &\sum_{x^{\pi}y,x'y'} |s(x^{\pi} y) - s(x'y')| \nonumber \\ 
        &\le \sum_{x^\pi y x'y'} (|s(x^{\pi} y) - s(x'y)|+ |s(x'y) - s(x'y')|) \\
        &= |\sV^{B}| \sum_{x^{\pi}x'y} |s(x^{\pi}y) - s(x'y)| \nonumber\\
        &\quad\quad +|\pi^{A} \sV^{A}| \sum_{x' y y'} |s(x'y) -s(x'y')| \\
        &\le |\sV^B| \frac{|\pi^A \sV^A|}{a^\pi}\sum_{xx'\in E^{A},y} |s(xy)-s(x'y)| \nonumber\\
        &\quad\quad +|\pi ^{A} \sV^{A}| \frac{|\sV^B|}{b} \sum_{x',yy'\in E^{B}} |s(x'y)-s(x'y')|
    \end{align}
    The case is similar for $xy^{\pi}$ so that
    \begin{align}
        &\sum_{x y^{\pi},x'y'} |s(x y^\pi) - s(x'y')|\\ 
        &\le |\sV^A| \frac{|\pi^B \sV^B|}{b^\pi}\sum_{x,yy'\in E^{B}} |s(xy)-s(xy')| \nonumber\\
        &\quad\quad +|\pi^{B} \sV^{B}| \frac{|\sV^A|}{a} \sum_{xx'\in E^{A},y'} |s(xy')-s(x'y') \\
    \end{align}
    Hence, 
    \begin{align}
        |\pi s||\bar{s}| + |\pi \bar{s}| |s| \le \frac{|\pi \sV|}{c^{\pi}} |\delta s| 
    \end{align}
\end{proof}

\begin{prop}[Square]
    \label{prop:square-expansion}
    Let $R=R(L)$ denote the repetition code on $L$ bits. Then the induced graph of $R(L_1)\otimes R(L_2)$ is $c$-coexpanding, and $c^{\pi}$-coexpanding relative to projection $\pi:R_0(L_1)\otimes R_0(L_2) \mapsto W \otimes R_0(L_2)$ where $W\subseteq R_0(L_1)$ is spanned by $w$ vertices, e.g., $\{|1\ket,...,|w\ket\}$ and
    \begin{align}
        c &= \frac{2}{\max(L_1,L_2)} \\
        c^{\pi} &=\frac{1}{L_1} \min\left(1, \frac{2L_1}{L_2}\right)
    \end{align}
    In particular, for any $s\in R_0\otimes R_0$, there exists $\hat{s}$ with $\delta s=\delta \hat{s}$, such that
    \begin{equation}
        \label{eq:square-relative-expan}
        |\delta s|\ge \min \left( 1, \frac{2 L_1}{L_2}\right) \frac{1}{w} |\pi \hat{s}|
    \end{equation}
\end{prop}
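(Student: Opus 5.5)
The plan is to obtain Proposition~\ref{prop:square-expansion} as a direct application of Lemma~\ref{lem:expan-tensors}, taking $G^A=R(L_1)$ and $G^B=R(L_2)$. On $G^A$ we use the projection $\pi^A$ onto $W$. On $G^B$ we use the trivial projection $\pi^B:V^B\to 0$. With these choices $\pi=\pi^A+\pi^B$ is exactly the projection onto $W\otimes R_0(L_2)$.

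The constants come from the Repetition example. It gives $a=2/L_1$, $a^\pi=1/L_1$ and $b=2/L_2$. The first part of Lemma~\ref{lem:expan-tensors} then yields $c=\min(a,b)=2/\max(L_1,L_2)$. The ``moreover'' clause (trivial $\pi^B$) yields
\[
c^\pi=\min(a^\pi,b)=\min\left(\frac{1}{L_1},\frac{2}{L_2}\right)=\frac{1}{L_1}\min\left(1,\frac{2L_1}{L_2}\right).
\]
The clause needs its own justification: when $\pi^B$ is trivial, the $xy^\pi$ sum in the proof of the lemma vanishes. Only the $x^\pi y$ estimate remains, and it involves only $a^\pi$ and $b$. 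I would note this explicitly, since the displayed proof of the lemma treats both sums together.

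For the final inequality~\eqref{eq:square-relative-expan}, apply the remark following the coexpansion definition. Relative coexpansion gives some $\hat s\in\{s,\bar s\}$ with $\delta\hat s=\delta s$ and
\[
|\delta s|\ge c^\pi\frac{|\sV|}{|\sV^\pi|}|\pi\hat s|.
\]
Here $|\sV|=L_1L_2$ and $|\sV^\pi|=wL_2$, so the ratio is $L_1/w$. Multiplying by $c^\pi$ gives $\min(1,2L_1/L_2)\,|\pi\hat s|/w$, which is the claimed bound.

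The main obstacle is the \emph{trivial projection} case in the lemma. As written, the general formula $c^\pi=\min(a,b,a^\pi,b^\pi)$ would involve $b^\pi$, which is effectively infinite for a trivial projection, so that case must be handled separately. The key step is the triangle-inequality bound
\[
\sum_{x^\pi y,\,x'y'}|s(x^\pi y)-s(x'y')|\le \frac{|\pi\sV|}{\min(a^\pi,b)}|\delta s|,
\]
which uses $|\pi\sV|=|\pi^A\sV^A||\sV^B|$. I would verify this bound carefully. Everything else is bookkeeping of the constants.
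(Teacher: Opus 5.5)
Your proposal is correct and is essentially the paper's proof. You apply Lemma~\ref{lem:expan-tensors} to $R(L_1)$ projected onto $W$ and $R(L_2)$ with trivial projection, take $a=2/L_1$, $a^\pi=1/L_1$, $b=2/L_2$ from the Repetition example, and then convert relative coexpansion into relative expansion via the remark following the definition, using the vertex ratio $L_1L_2/(wL_2)=L_1/w$. Your direct use of that remark is slightly cleaner than the paper's display, which writes the bound with an extra factor $\tfrac12$ (i.e.\ $c^\pi L_1/(2w)$) and then silently drops it to reach the stated constant $\min(1,2L_1/L_2)/w$.
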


\begin{proof}
    By Lemma~\ref{lem:expan-tensors}, we see that $c,c^\pi$ follow.
    In particular, there exists $\hat{s}$ with $\delta s=\delta \hat{s}$ such that
    \begin{align}
        |\delta s| &\ge c^{\pi} \frac{|\sV|}{2|\sV^{\pi}|} |\pi\hat{s}| \\
        &\ge c^{\pi} \frac{L_1}{w} |\pi \hat{s}|\\
        &\ge \min \left( 1, \frac{2 L_1}{L_2}\right) \frac{1}{w} |\pi \hat{s}|
    \end{align}
\end{proof}


\section{Hastings's Quantum Weight Reduction}
\label{app:Hastings}

In this section, we provide a detailed discusion of Hastings's work on quantum weight reduction \cite{hastings2021quantum} using the framework in Ref. \cite{yuan2025unified}.
Roughly speaking, Hastings reduces the weight of a given CSS code by performing three individual steps -- $X$-reduction, $Z$-thickening and Hastings's coning.
The first two steps are relatively straightforward and thus collected in Propositions~\ref{prop:weight-X-reduction} and~\ref{prop:weight-Z-thicken}.
Hastings's coning procedure is more convoluted and thus will be discussed separately.
We also remark that point 5 of Lemma 8 in Ref. \cite{hastings2021quantum} has a small error -- compare with Eq.~\eqref{eq:weight-cone-corrected} of Proposition~\ref{prop:weight-cone} -- which ultimately leads to larger reduced weights, as stated in Theorem~\ref{thm:weight-corrected}, than initially claimed by Hastings.

For this section, to match the notation of Hastings \cite{hastings2021quantum}, we adopt the convention that complexes $C=C_2\to C_1 \to C_0$ denote the $Z$-checks, qubits, $X$-checks, respectively -- this is opposite to the main text -- and are equipped with basis $\sC_2,\sC_1,\sC_0$, respectively.
We shall also denote the weights of a code with diagram
\begin{equation}
C_2 \xrightleftharpoons[\qubit_Z]{\weight_Z} C_1 \xrightleftharpoons[\weight_X]{\qubit_X} C_0
\end{equation}
\subsection{Reduction and Thickening}
\begin{prop}[$X$-Reduction]
    \label{prop:weight-X-reduction}
    Let $A$ be a CSS code with weights
    \begin{equation}
        Z \xrightleftharpoons[\qubit_Z]{\weight_Z} Q \xrightleftharpoons[\weight_X]{\qubit_X} X
    \end{equation}
    and basis elements $z,q,x$ in $\sZ,\sQ,\sX$, respectively.
    Let $R(L)=E(L)\to V(L)$ denote the repetition code on $L$ vertices.
    Fix $q$ and for every $x\sim q$, assign a number $1\le i \le \qubit_X$ based on its order\footnote{For example, if $x <x'$ with respect to the ordering in $\sX$ are adjacent to $q$, we assign $i=1,2$ to $x,x'$}, which defines an injective mapping $x\mapsto (x;q)$. Similarly, define $q\mapsto (q;x) \in \{1,,...,\weight_X\}$.
    Let $C$ be obtain via the following gluing procedure
    \begin{equation}
    \label{eq:X-reduction-diagram}
    \begin{tikzpicture}[baseline]
    \matrix(a)[matrix of math nodes, nodes in empty cells, nodes={minimum size=25pt},
    row sep=1.5em, column sep=1.5em,
    text height=1.25ex, text depth=0.25ex]
    {& Z  &   \\
     & V(\qubit_X)\otimes Q  & E(\qubit_X)\otimes Q \\
     E(\weight_X) \otimes X &  V(\weight_X)\otimes X &\\};
    \path[->,font=\scriptsize]
    (a-2-2) edge node[above]{} (a-2-3);
    \path[->,font=\scriptsize]
    (a-3-1) edge node[above]{} (a-3-2);
    \path[->,font=\scriptsize]
    (a-1-2) edge[bend right=80, dashed] node[left]{$p^{XZ}_2$} (a-3-1);
    \path[->,font=\scriptsize]
    (a-1-2) edge node[right]{$g^{QZ}_2$} (a-2-2)
    (a-2-2) edge node[right]{$g^{XQ}_1$} (a-3-2);
    \end{tikzpicture}
    \end{equation}
    where 
    \begin{align}
        g^{QZ} |z\ket &= \sum_{q \sim z} \sV(\qubit_X)\otimes |q\ket, \quad \sV(\qubit_X) = \sum_{i=1}^{\qubit_X} |i\ket  \\
        g^{XQ} |i\ket \otimes q &= \sum_{x \sim q} 1\{i=(x;q)\} |(q;x)\ket\otimes |x\ket  \\
        p^{XZ} |z\ket &= \sum_{z \sim x} \sE(\weight_X)(z\wedge x;x) \otimes |x\ket
    \end{align}
    where $\sE(\weight_X)(z\wedge x;x)$ is defined as follows: note that for $x\sim z$, the common qubits $x\wedge z$ and their image in $q\mapsto (q;x)$ must be of even cardinality and thus can be paired up so that $\sE(\weight_X)(z\wedge x;x)$ denote the summation of edges between each pair.
    Then $C$, referred as the \textbf{$X$-reduced code} of $A$ satisfies $H_1(C) \cong H_1(A)$ with weights
    \begin{equation}
    C_2 \xrightleftharpoons[\weight_Z \qubit_X]{O(\weight_Z\weight_X\qubit_X)} C_1 \xrightleftharpoons[3]{3} C_0
    \end{equation}
\end{prop}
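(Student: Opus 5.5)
Within the framework of Theorem~\ref{thm:unified}, with the roles of $X$ and $Z$ swapped to Hastings's convention, I would take $C^{Z}=Z$ as a length-0 complex sitting in degree 2, $C^{Q}=V(\qubit_X)\otimes Q\to E(\qubit_X)\otimes Q$ (degrees 2,1, i.e.\ $|\sQ|$ copies of $R(\qubit_X)^\top$) and $C^{X}=E(\weight_X)\otimes X\to V(\weight_X)\otimes X$ (degrees 1,0, i.e.\ $|\sX|$ copies of $R(\weight_X)$). The proof then splits into the same three parts as the proof of Theorem~\ref{thm:logical}.

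\textbf{Compatibility.} First I would check that $g^{QZ}$ is a chain map. Since $C^{Z}$ has no differential, this means $\partial^{Q} g^{QZ}=0$, which holds because $\sV(\qubit_X)$ is the all-ones cocycle of $R^\top$, so its coboundary vanishes. Next, $g^{XQ}$ is a chain map: $C^X$ has nothing in degree 2, so I need $g^{XQ}_1\partial^{Q}_2$ to vanish on the relevant terms. More precisely, the nontrivial condition is that $\partial^{X} g^{XQ}$ on degree-1 elements of $C^Q$ agrees with $g^{XQ}\partial^{Q}$, and I would verify this by degree bookkeeping. The key identity is the homotopy
\[
g^{XQ}g^{QZ}=\partial^{X}p^{XZ}.
\]
On the left, $g^{XQ}g^{QZ}|z\ket=\sum_{x}\sum_{q\in x\wedge z}|(q;x)\ket\otimes|x\ket$; the injectivity of $i\mapsto(x;q)$ ensures each $(q,x)$ contributes exactly once. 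On the right, $\partial^{X}$ of the sum of paths pairing up the points $(q;x)$, $q\in x\wedge z$, gives exactly those endpoints. This works because $|x\wedge z|$ is even, by commutation in $A$.

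\textbf{Logical subspace.} By Lemma~\ref{lem:rep} and K\"unneth, $H_2(C^{Z})=Z$, $H_1(C^{Q})\cong Q$ with representatives $|i\ket\otimes|q\ket$ for any $i$, and $H_0(C^{X})\cong X$ with representatives $|j\ket\otimes|x\ket$. Computing $[g^{QZ}]$ and $[g^{XQ}]$ on these classes reproduces $\partial^A_2$ and $\partial^A_1$, so the embedded complex is $A$. Moreover $H_1(C^{Z})=0$ trivially, and $H_1(C^{X})=H_1(R(\weight_X))\otimes X=0$. Theorem~\ref{thm:unified} then gives $H_1(C)\cong H_1(A)$.

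\textbf{Weights.} I would read the weights off the analogue of diagram~\eqref{eq:weight-diagram-2D}.
\begin{itemize}
\item Column of $\partial_2$ at $|z\ket$: $g^{QZ}$ contributes $\weight_Z\qubit_X$ (each of the $\weight_Z$ qubits comes with $\qubit_X$ vertices), and $p^{XZ}$ contributes at most $\weight_Z\cdot\weight_X$ edges summed over the adjacent $x$, which is at most $\weight_Z\qubit_X\weight_X$ in total. This gives $O(\weight_Z\weight_X\qubit_X)$.
\item Row of $\partial_2$: a qubit $|i\ket\otimes|q\ket$ lies in $\le\qubit_Z$ of the $g^{QZ}$ images. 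An edge qubit in $E(\weight_X)\otimes|x\ket$ can lie in many $p^{XZ}$ paths, and bounding this by $\weight_Z\qubit_X$ (up to constants) is where I expect the real work to be. I would try choosing the pairing of $x\wedge z$ so that it is consistent with the ordering $(q;x)$, namely consecutive pairs, and then count the $z$'s overlapping $x$: there are $\le\weight_X\qubit_Z$ of them.
\item $\partial_1$ blocks: $R^\top$ and $R$ have degree $\le 2$, and $g^{XQ}$ has row and column weight $\le1$ by the injectivity of $(x;q)$ and $(q;x)$. This gives $3$ and $3$.
\end{itemize}
The main obstacle is the $\qubit$-side bound for $p^{XZ}$, which might need a slightly weaker constant than stated.
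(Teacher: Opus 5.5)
Your skeleton matches the paper's proof: the framework of Theorem~\ref{thm:unified} with $X\leftrightarrow Z$ swapped, the homotopy $g^{XQ}g^{QZ}=\partial^X p^{XZ}$ from the evenness of $|x\wedge z|$, identification of the embedded complex with $A$, and weights read off the block diagram. Your check that $\partial^Q g^{QZ}=0$, because the all-ones vector is a cocycle of $R^\top$, is correct; the paper passes over this with ``no squares''.

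\textbf{The embedded-complex step.} The homological bookkeeping has an error that breaks this step as you wrote it. Every gluing map lowers degree by one. Since $g^{QZ}$ lands in $V(\qubit_X)\otimes Q$, and $g^{XQ}$ sends that space to $V(\weight_X)\otimes X=C^X_0$, the complex $C^Q$ must sit in degrees $1,0$: $C^Q=0\to V(\qubit_X)\otimes Q\to E(\qubit_X)\otimes Q$, as in the paper. In your placement (degrees $2,1$) one gets $H_1(C^Q)=\operatorname{coker}(\delta)\otimes Q=0$. With the correct placement the chain-map condition for $g^{XQ}$ is vacuous. More importantly, $C^Q_2=0$, so $H_1(C^Q)=\ker(\delta\otimes 1)$, and the class of $q$ has the \emph{unique} representative $\mathbf{1}\otimes|q\rangle$ with $\mathbf{1}=\sum_{i=1}^{\qubit_X}|i\rangle$. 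For $\qubit_X\ge 2$ the vector $|i\rangle\otimes|q\rangle$ is not a cycle. You used the $H_0(R)$ description (any single vertex), which is right for $C^X$ but not for the transposed $R^\top$ in $C^Q$. This is not cosmetic: $g^{XQ}(|i\rangle\otimes|q\rangle)$ contains only the single $x$ with $(x;q)=i$, so your computation would not return $\partial^A_1 q$. The correct representative is exactly what $g^{QZ}|z\rangle=\sum_{q\sim z}\mathbf 1\otimes|q\rangle$ produces. With it, $g^{XQ}(\mathbf 1\otimes|q\rangle)=\sum_{x\sim q}|(q;x)\rangle\otimes|x\rangle$, whose class in $H_0(C^X)$ is $\sum_{x\sim q}[x]$. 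Reassembling the $\qubit_X$ split copies of $q$ through the all-ones sum is precisely what makes the embedded complex equal to $A$.

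\textbf{The weight obstacle.} The obstacle you flag is real, but it is resolved by noting the stated target is wrong, not by a clever pairing. The paper's proof uses an arbitrary pairing. Its weight diagram records row weight $\qubit_Z$ for $g^{QZ}$ and $\weight_X\qubit_Z$ for $p^{XZ}$, which is exactly your count of the $z$ overlapping a fixed $x$. So what is actually proved is a $Z$-degree bound of $\weight_X\qubit_Z$, which does not match the label $\weight_Z\qubit_X$ in the statement. That label fails in general, and not by a constant. Take a single $x$ on $q_1,\dots,q_{2m}$ and $z_a$ on $\{q_a,q_{a+m}\}$ for $a=1,\dots,m$. Then $\weight_Z\qubit_X=2$, every pairing is forced, and the qubit $|m^+\rangle\otimes|x\rangle$ lies on all $m$ paths. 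So settle for $\weight_X\qubit_Z$. This is harmless downstream: after $Z$-thickening, the $Z$-degree only affects the thickness $L$, not the final weights.

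A minor point on the column weight: $p^{XZ}|z\rangle$ has at most $\weight_X-1$ edges for each of the at most $\weight_Z\qubit_X$ adjacent $x$. ``$\weight_Z\cdot\weight_X$ edges summed over the adjacent $x$'' is not the right intermediate count, though your final $O(\weight_Z\weight_X\qubit_X)$ is correct.
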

\begin{proof}
Using the framework in Definition~\ref{def:framework}, we see that
\begin{align}
    C^{Z} &= Z\to 0\to 0\\
    C^{Q} &= 0\to R(\qubit_X)^\top \otimes Q \\
    C^{X} &= 0 \to R(\weight_X) \otimes X
\end{align}
Note that the diagram (excluding the dashed line) is trivially commuting since there are no squares. Hence, to show that $C$ is a chain complex, it's sufficient to show that $g^{XQ} g^{QZ} = \partial^{X}p^{XZ}$ where we use $\partial^{Z},\partial^{Q},\partial^{X}$ to denote the differential of the levels.
Indeed, note that
\begin{align}
    g^{XQ} g^{QZ} |z\ket &= \sum_{q\sim z} \sum_{x\sim q} |(q;x)\ket\otimes |x\ket \\
    &=\sum_{x\sim z} \left(\sum_{q \in x\wedge z} |(q;x)\ket \right) \otimes |x\ket
\end{align}
And that
\begin{align}
    \partial^{X} p^{XZ} |z\ket &= \sum_{x \sim x} \left( \partial^{R(\weight_X)}\sE(\weight_X)(z\wedge x;x) \right)|x\ket \\
    &= g^{XQ} g^{QZ} |z\ket
\end{align}
where we utilized the fact that the boundary of \textit{strings} in $\sE(\weight_X)(z\wedge x;x)$ is exactly the image of $x\wedge z$ under the map $q\mapsto (q;x)$.
It's straightforward to check that the embedded column code is $A$ and thus by Theorem~\ref{thm:unified}, we see that $H_1(C)\cong H_1(A)$.
Finally, the weights of the constructed code $C$ is shown by the following diagram
\begin{equation}
    \label{eq:weight-X-reduction-diagram-2D}
    \begin{tikzpicture}[baseline]
    \matrix(a)[matrix of math nodes, nodes in empty cells, nodes={minimum size=25pt},
    row sep=1.5em, column sep=1.25em,
    text height=1.25ex, text depth=0.25ex]
    {& Z  &   \\
     & V(\qubit_X)\otimes Q  & E(\qubit_X)\otimes Q \\
     E(\weight_X) \otimes X &  V(\weight_X)\otimes X &\\};
    \path[-left to,font=\scriptsize,transform canvas={yshift=0.2ex}]
    (a-2-2) edge node[above]{2}  (a-2-3)
    (a-3-1) edge node[above]{2}  (a-3-2);
    \path[left to-,font=\scriptsize,transform canvas={yshift=-0.2ex}]
    (a-2-2) edge node[below]{2}  (a-2-3)
    (a-3-1) edge node[below]{2}  (a-3-2);
    \path[-left to,font=\scriptsize,transform canvas={xshift=0.2ex}]
    (a-1-2) edge node[right]{$\weight_Z \qubit_X$}  (a-2-2)
    (a-2-2) edge node[right]{1}  (a-3-2);
    \path[left to-,font=\scriptsize,transform canvas={xshift=-0.2ex}]
    (a-1-2) edge node[left]{$\qubit_Z$}  (a-2-2)
    (a-2-2) edge node[left]{1}  (a-3-2);
    \path[-left to,font=\scriptsize]
    (a-1-2) edge[bend right=70, dashed] node[right]{$\weight_Z\qubit_X \weight_X $ } (a-3-1);
    \path[left to-,font=\scriptsize]
    (a-1-2) edge[bend right=80, dashed] node[left]{$\weight_X \qubit_Z$} (a-3-1);
    \end{tikzpicture}
\end{equation}
Specifically, note that if $g^{ZQ}=(g^{QZ})^\top$ and similarly for $g^{QX},p^{ZX}$, then
\begin{align}
    g^{ZQ} |i\ket |q\ket &= \sum_{z\sim q} |z\ket \\
    g^{QX}|i\ket |x\ket &= \sum_{q\sim x} 1\{i=(q;x)\} |(x;q)\ket  |q\ket \\
    p^{ZX} |i^+\ket |x\ket &= \sum_{z\sim x} 1\{|i^+\ket \in \sE(\weight_X)(z\wedge x;x)\} |z\ket
\end{align}
In particular, since $q\mapsto (q;x)$ is injective, we see that $g^{QX}$ has max column weight $1$.

\end{proof}

\begin{figure}[ht]
\centering
\subfloat[\label{fig:weight-ancilla}]{%
    \centering
    \includegraphics[width=0.6\columnwidth]{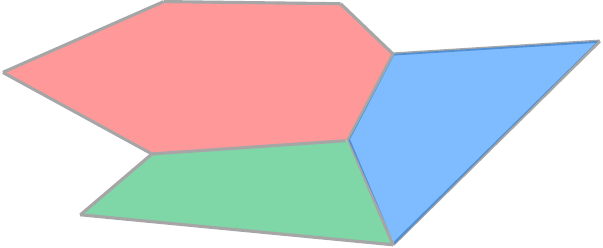}
}
\\
\subfloat[\label{fig:weight-copy}]{%
    \centering
    \includegraphics[width=0.6\columnwidth]{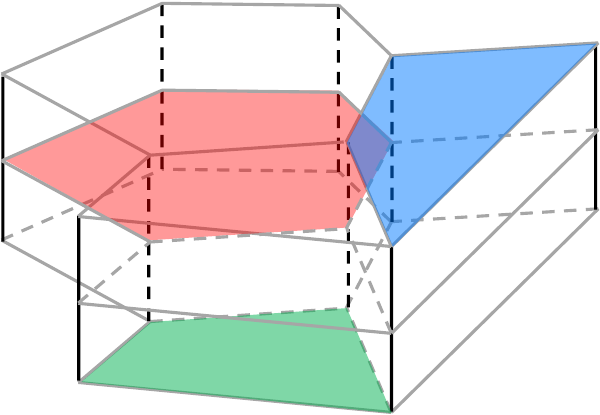}
}
\caption{Thickening. (a) depicts a cell complex $G$ corresponding to a graph with generating simple cycles $\sF$ colored. 
(b) depicts the thickened $C$ for $L=3$, where the original cycles $f\in \sF$ are mapped to different heights via $h$.
Note that in (a), there exist an edge adjacent to both red and green plaquettes, while any edge in (b) can only be adjacent to either the red or the green plaquette, and thus the weight is reduced by a clever choice of the height function.
}
\label{fig:weight-Z-thicken}
\end{figure}

\begin{prop}[$Z$-Thickening, Fig. \ref{fig:weight-Z-thicken}]
    \label{prop:weight-Z-thicken}
    Let $A$ be a CSS code with weights
    \begin{equation}
        Z \xrightleftharpoons[\qubit_Z]{\weight_Z} Q \xrightleftharpoons[\weight_X]{\qubit_X} X
    \end{equation}
    and basis elements $z,q,x$.
    Let $h:\sZ \to \{1,...,L\}$ be such that it is injective on $\{z:z\sim q\}$ for any fixed $q$ and let $C$ be obtained via the following gluing procedure
    \begin{equation}
    \label{eq:Z-thickening-diagram}
    \begin{tikzpicture}[baseline]
    \matrix(a)[matrix of math nodes, nodes in empty cells, nodes={minimum size=25pt},
    row sep=1.5em, column sep=1.5em,
    text height=1.25ex, text depth=0.25ex]
    {& Z    \\
     E(L) \otimes Q & V(L)\otimes Q   \\
     E(L) \otimes X &  V(L)\otimes X \\};
    \path[->,font=\scriptsize]
    (a-2-1) edge node[above]{} (a-2-2)
    (a-3-1) edge node[above]{} (a-3-2);
    \path[->,font=\scriptsize]
    (a-1-2) edge node[right]{$g^{QZ}_2$} (a-2-2)
    (a-2-1) edge node[right]{$g^{XQ}_2$} (a-3-1)
    (a-2-2) edge node[right]{$g^{XQ}_1$} (a-3-2);
    \end{tikzpicture}
    \end{equation}
    where $g^{XQ} =\dI \otimes \partial^{A}$ and
    \begin{equation}
        g^{QZ} |z\ket = \sum_{q \sim z} |h(z)\ket \otimes |q\ket
    \end{equation}
    Then $C$, referred as the \textbf{$Z$-thickening code} of $A$ with \textbf{height} $h$, satisfies $H_1(C)\cong H_1(A)$ and has weights
    \begin{equation}
    C_2 \xrightleftharpoons[\max(3,\weight_X)]{\max(\weight_Z,\qubit_X+2)} C_1 \xrightleftharpoons[\weight_X+2]{\max(\qubit_X,2)} C_0
    \end{equation}
\end{prop}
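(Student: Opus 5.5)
We prove Proposition~\ref{prop:weight-Z-thicken} with the framework of Definition~\ref{def:framework}, reading off the three levels from Eq.~\eqref{eq:Z-thickening-diagram}:
\begin{align}
C^{Z} &= Z\to 0\to 0,\\
C^{Q} &= E(L)\otimes Q \to V(L)\otimes Q,\\
C^{X} &= E(L)\otimes X\to V(L)\otimes X.
\end{align}
In $C^{Q}$ the differential is $\partial^{R}\otimes \dI$, so $C^{Q}$ consists of $|\sQ|$ copies of $R(L)$, placed in degrees $2\to 1$. Likewise $C^{X}$ is $|\sX|$ copies of $R(L)$, placed in degrees $1\to 0$.

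The first step is compatibility. We take $p^{XZ}=0$.
\begin{itemize}
\item Since $g^{XQ}=\dI\otimes\partial^{A}$, it commutes with $\partial^{R}\otimes\dI$, so $g^{XQ}$ is a chain map.
\item For $g^{QZ}$, the condition $\partial^{Q}g^{QZ}=g^{QZ}\partial^{Z}$ is trivial in degree $2$, because $C^{Q}_1=V(L)\otimes Q$ is the bottom of $C^{Q}$ and $\partial^{Z}=0$ there.
\item The homotopy condition reduces to $g^{XQ}g^{QZ}=0$. Indeed, $g^{XQ}g^{QZ}|z\ket=|h(z)\ket\otimes\partial^{A}\partial^{A}|z\ket=0$, using the CSS condition of $A$.
\end{itemize}
Hence Theorem~\ref{thm:unified} applies and $C$ is a complex.

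The second step is the homology. We have $H(R(L))$ with $H_1=0$ and $H_0=\dF_2$ by Lemma~\ref{lem:rep}. The embedded complex is therefore $H_2(C^Z)=Z\to H_1(C^Q)\cong Q\to H_0(C^X)\cong X$, and the grading of $C^X$ puts this at the correct degree.
\begin{itemize}
\item The induced map $[g^{QZ}]$ sends $z\mapsto\sum_{q\sim z}[|h(z)\ket\otimes|q\ket]=\sum_{q\sim z}q$, since every $|i\ket$ represents the same class.
\item The induced map $[g^{XQ}]$ is $\partial^{A}$.
\end{itemize}
So the embedded complex is $A$. Theorem~\ref{thm:unified} needs the vanishing hypotheses on the top and bottom levels. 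In this grading these are $H_1(C^Z)=0$, which is trivial, and $H_1(C^X)=H_1(R)\otimes X=0$. The main subtlety of the proof is checking that the roles of top and bottom levels match the hypotheses of Theorem~\ref{thm:unified} under the reversed ($Z$ on top) convention of this appendix. Once that is confirmed, $H_1(C)\cong H_1(A)$ follows.

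The final step is weight bookkeeping via the block matrix. Each repetition differential has column weight $2$ and row weight at most $2$. For $g^{QZ}$:
\begin{itemize}
\item the column weight is $\weight_Z$;
\item the row weight is $1$, because $h$ is injective on $\{z:z\sim q\}$, so each $|i\ket\otimes|q\ket$ receives at most one $z$.
\end{itemize}
The map $g^{XQ}$ has column weight $\qubit_X$ and row weight $\weight_X$ in both degrees. Summing the block columns and rows gives:
\begin{itemize}
\item \textbf{$C_2$ columns:} $Z$-cells give $\weight_Z$, and $E\otimes Q$ cells give $2+\qubit_X$. This yields $\max(\weight_Z,\qubit_X+2)$.
\item \textbf{$C_1$ columns:} $V\otimes Q$ cells give $\qubit_X$, and $E\otimes X$ cells give $2$. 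This yields $\max(\qubit_X,2)$.
\item \textbf{$C_1$ rows (qubit degree into $C_2$):} $V\otimes Q$ cells give $2+1=3$, and $E\otimes X$ cells give $\weight_X$. This yields $\max(3,\weight_X)$.
\item \textbf{$C_0$ rows:} $V\otimes X$ cells give $2+\weight_X$.
\end{itemize}
This reproduces the claimed diagram.
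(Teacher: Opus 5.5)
Your proposal is correct and takes essentially the same route as the paper. Both use the same three-level decomposition with $Z\leftrightarrow X$ switched and $p^{XZ}=0$, obtain $g^{XQ}g^{QZ}=0$ from the CSS condition (the paper phrases this as the evenness of $|x\wedge z|$), identify the embedded complex with $A$ to invoke Theorem~\ref{thm:unified}, and do the same weight bookkeeping with the row weight of $g^{QZ}$ bounded by $1$ via injectivity of $h$; you additionally spell out the chain-map checks and the vanishing of $H_1(C^X)=H_1(R(L))\otimes X$, which the paper leaves as straightforward.
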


\begin{proof}
Using the framework in Definition~\ref{def:framework} (with $Z\lr X$ switched), we see that
\begin{align}
    C^{Z} &= Z\to 0\to 0\\
    C^{Q} &= R(L) \otimes Q \to 0 \\
    C^{X} &= 0 \to R(L) \otimes X
\end{align}
Note that
\begin{align}
    g^{XQ} g^{QZ} |z \ket &= g^{XQ} \sum_{q\sim z} |h(z)\ket\otimes |q \ket \\
    &= \sum_{x}|h(z)\ket\otimes |x\ket \left[ \sum_{q} 1\{z\sim q \sim x\}\right] \\
    &= 0
\end{align}
Where we used the fact that the common qubits adjacent to $x,z$ must be even.
It's straightforward to check that $A$ is the embedded column code and thus by Theorem~\ref{thm:unified}, $C$ is a complex with $H_1(C) \cong H_1(A)$.
Finally, the weights are given by the following diagram
\begin{equation}
\label{eq:weight-Z-thicken-diagram-2D}
\begin{tikzpicture}[baseline]
\matrix(a)[matrix of math nodes, nodes in empty cells, nodes={minimum size=25pt},
row sep=1.5em, column sep=1.5em,
text height=1.25ex, text depth=0.25ex]
{& Z    \\
 E(L) \otimes Q & V(L)\otimes Q   \\
 E(L) \otimes X &  V(L)\otimes X \\};
\path[-left to,font=\scriptsize,transform canvas={yshift=0.2ex}]
(a-2-1) edge node[above]{2}  (a-2-2)
(a-3-1) edge node[above]{2}  (a-3-2);
\path[left to-,font=\scriptsize,transform canvas={yshift=-0.2ex}]
(a-2-1) edge node[below]{2}  (a-2-2)
(a-3-1) edge node[below]{2}  (a-3-2);
\path[-left to,font=\scriptsize,transform canvas={xshift=0.2ex}]
(a-1-2) edge node[right]{$\weight_Z$}  (a-2-2)
(a-2-1) edge node[right]{$\qubit_X$}  (a-3-1)
(a-2-2) edge node[right]{$\qubit_X$}  (a-3-2);
\path[left to-,font=\scriptsize,transform canvas={xshift=-0.2ex}]
(a-1-2) edge node[left]{1}  (a-2-2)
(a-2-1) edge node[left]{$\weight_X$}  (a-3-1)
(a-2-2) edge node[left]{$\weight_X$}  (a-3-2);
\end{tikzpicture}
\end{equation}
Specifically, if $g^{ZQ}=(g^{QZ})^\top$, then
\begin{equation}
    g^{ZQ} |i\ket \otimes |q\ket = \sum_{z\sim q} 1\{i=h(z)\}|z\ket
\end{equation}
Since $h$ is injective on $\{z:z\sim q\}$, we see that $g^{ZQ}$ has max column weight $1$.
\end{proof}

\begin{remark}[Choice of Heights $h$]
    \label{rem:weight-height}
    As noted by Hastings \cite{hastings2021quantum}, one way to define the height function $h$ (and thickness $L$) is to consider the graph\footnote{This is different from that in Definition~\ref{def:induced-graphs}} consisting of vertices $\sZ$ and edges $zz'$ if $z,z'\sim q$ for some $q$.
    Let $L$ be the chromatic number and $h$ be coloring scheme. Then $L=O(\weight_Z\qubit_Z)$ and $h$ is injective on $\{z:z\sim q\}$ for any $q$.
\end{remark}

\begin{remark}[$\weight_Z$ Remains]
    \label{rem:weight-wZ-remains}
    Let $A$ be a CSS code with weights
    \begin{equation}
        Z \xrightleftharpoons[\qubit_Z]{\weight_Z} Q \xrightleftharpoons[\weight_X]{\qubit_X} X
    \end{equation}
    Apply $X$-reduction as in Proposition~\ref{prop:weight-X-reduction}, and then $Z$-thickening as in Proposition~\ref{prop:weight-Z-thicken}.
    Then the resulting complex $C$ has weights satisfying
    \begin{equation}
        C_2 \xrightleftharpoons[3]{\max(O(\weight_Z\weight_X\qubit_X),5)} C_1 \xrightleftharpoons[5]{3} C_0
    \end{equation}
    Hence, the next step is to reduce $\weight_Z(C)$.
    Note that repeating the proof (refer to Diagrams~\eqref{eq:weight-X-reduction-diagram-2D} and~\eqref{eq:weight-Z-thicken-diagram-2D}) shows that basis elements $\sC_2$ can be partitioned into $\sZ$ and its complement.
    Basis elements in the complement have small weights $\le 5$ in $C$, while those in $\sZ$ have larger support sizes $O(\weight_Z \weight_X \qubit_X)$, and thus the goal is to reduce the latter.
\end{remark}


\subsection{Hastings's Coning}

For this subsection, consider an arbitrary CSS code $A$ with weights (for simplicity, assume $\ge 3$)
\begin{equation}
    Z \xrightleftharpoons[\qubit_Z]{\weight_Z} Q \xrightleftharpoons[\weight_X]{\qubit_X} X
\end{equation}
Motivated by Remark~\ref{rem:weight-wZ-remains}, we attempt to reduce the weight of $Z$-type generators, and thus consider a fixed $z\in \sZ$.

Since the common qubits $z \wedge x$ has even cardinality, it can be grouped into pairs.
Let $\sE^{z \wedge x}$ denote the collection of pairs $|p,x;z\ket$ where $p=1,...,|x\wedge z|/2$ and $x \sim z$, and let $E^{z}$ denote the $\dF_2$ vector space generated by $\sE^{z} \coloneq \bigcup_x \sE^{z \wedge x}$.
Let $\sV^{z}$ denote the collection  of all $q\sim z$ -- denoted by $|q;z\ket$ -- and let $V^{z}$ denote that generated by $\sV^{z}$.
Define the complex $G^{z} = E^{z} \to V^{z}$ via the adjacency matrix $|q;z\ket \sim |p,x;z\ket$ if the qubit $q$ is in the pair corresponding to $|p,x;z\ket$.

Note that every 1-cell in $\sE^{z}$ has exactly two adjacent 0-cells, and thus induces a graph $\sG^{z}$ with edges $\sE^{z}$ and vertices $\sV^{z}$.
Note that if $\Omega^{z}$ denotes the collection of connected components $\omega \subseteq \sV^{z}$ of graph $\sG^{z}$, then 
a natural basis for $H^0(G^{z})$ is labeled by the connected components\footnote{The repetition code $R$ can similarly be regarded as a graph and thus one can compare with Lemma~\ref{lem:rep}.} and given by
\begin{equation}
    [\omega] = \left[\sum_{v\in \omega} v\right], \quad \omega \in \Omega^{z}
\end{equation}
where we also regard $\omega$ as a element in $V^{z}$.

Therefore, the goal is to use the gluing construction and replace $z$ by the basis of $H^0(G^z)$ corresponding to the connected components\footnote{During this process, it's possible for a connected component to induce a logical operator. However, such a logical would have weight smaller than the check weight. In Hastings's formulation, such possibilities are \textit{unreasonable} and thus initially excluded. Additional care was taken to deal with unreasonable codes \cite{hastings2021quantum}, which we do not consider in this Appendix, since the main goal is to show the correct reduced weights.}.
More specifically, a $Z$-type Pauli operator supported on qubits corresponding to $\sV^{z}$ is replaced by multiple $Z$-type Pauli operators, each one supported on qubits corresponding to a distinct component of $\sV^{z}$, and thus must have smaller supports\footnote{Technically, the support of check $z$ is only defined if the differential is given so that the number of $q \sim z$ is determined. Hence, the intuition is to later provide a differential so that the vertices in a connected component form the support of the corresponding basis element in $H^0(G^z)$}.
However, note that $G^z$ may possibly have \textit{internal logical operators}, i.e., $H^1(G^z) \cong H_1(G^{z}) \ne 0$, since there may be cycles in the graph $\sG^{z}$, and thus to remove the internal logical operators, we augment the complex $G^{z}$ by including a basis of generating cycles.
More specifically, consider the following lemma

\begin{lemma}[Decongestion \cite{hastings2021quantum,freedman2021building}]
    Let $\sG=(\sE,\sV)$ be an arbitrary graph. Then there exists a basis $\sF$ of simple cycles in $\sG$ which has total weight -- number of edges -- $O(|\sV|\log |\sV|)$, and that each edge appears in the basis at most $O(\log^2 |\sV|)$ times.
\end{lemma}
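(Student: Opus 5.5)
The statement to prove is the Decongestion Lemma. The plan is to follow the Freedman--Hastings strategy, which combines graph reduction with short-cycle extraction and then recurses. We may assume $\sG$ is connected, since otherwise we treat each component separately and the bounds add up. The cycle space then has dimension $|\sE|-|\sV|+1$. The first reduction is to pass to a graph whose minimum degree is at least $3$. A degree-$1$ vertex can be deleted without changing the cycle space. A maximal path of degree-$2$ vertices can be contracted to a single edge, and any cycle through the contracted edge lifts back to a simple cycle in $\sG$ that uses every original edge of the path exactly once. Contraction can create loops or parallel edges, so we allow multigraphs. Each loop, and each parallel pair, is itself a simple cycle and is added directly to $\sF$.

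The recursive step works on a multigraph of minimum degree at least $3$. By the Moore bound (a BFS-ball growth argument), such a graph has girth $O(\log|\sV|)$. So we find a simple cycle $\gamma$ of length $O(\log |\sV|)$, put it in $\sF$, and delete one of its edges $e$. Deleting $e$ lowers the cycle rank by one. Since $\gamma$ is the only chosen cycle containing $e$, the new cycle together with a basis of the remaining graph is again a basis. After each deletion we re-apply the degree-$1$ and degree-$2$ reductions. Lifting cycles back through contracted paths keeps them simple, because contraction is a topological homeomorphism of the underlying $1$-complex.

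The accounting proceeds in two parts. The number of basis cycles equals the cycle rank. In the Freedman--Hastings refinement, the rank is reduced in phases in which it shrinks by a constant factor, so there are $O(\log|\sV|)$ phases. The naive bound would charge each cycle its full lifted length, which could be large. The real argument instead deletes many short cycles per phase, on a set of vertex-disjoint edges of the contracted graph. Within one phase, each original edge then lies on the image of $O(1)$ contracted edges, and hence in $O(\log |\sV|)$ chosen cycles. Summing over $O(\log|\sV|)$ phases gives $O(\log^2|\sV|)$ appearances per edge. For the total weight, each contracted cycle has $O(\log |\sV|)$ contracted edges. The total number of contracted edges over all phases is controlled by the cycle rank, which is $O(|\sV|)$ in the regime used here; if it is not, we restrict to a sparse spanning subgraph.

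The main obstacle is the per-edge congestion bound. Contracted edges correspond to long original paths, and those paths can be reused across different phases. Controlling this requires the careful phase-based bookkeeping of Ref.~\cite{freedman2021building}: we must show that an original path shared by many contracted edges is charged only $O(\log |\sV|)$ times per phase. This step I would cite rather than re-derive, since the lemma is stated here as a black box from Refs.~\cite{hastings2021quantum,freedman2021building}.
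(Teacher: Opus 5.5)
The paper gives no proof of this lemma; it quotes it from Hastings and Freedman--Hastings. Citing that reference for the hard step is therefore consistent with the paper. Your easy steps are fine: pruning degree-$1$ vertices, smoothing degree-$2$ paths (lifts stay simple), the Moore bound in a minimum-degree-$3$ multigraph, and the fact that adding a cycle and deleting one of its edges preserves a basis.

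\textbf{The congestion count.} This is the part with real content, and it does not work as written.

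The inference ``each original edge lies on the image of $O(1)$ contracted edges, hence in $O(\log|V|)$ chosen cycles'' is a non sequitur. Smoothing partitions the surviving original edges into paths, so every original edge lies in exactly one contracted edge at every stage. What must be bounded is the number of selected cycles passing through a single contracted edge in a phase, and your selection rule never controls that.

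Your phase count is also unsupported. A constant-factor drop of the cycle rank per phase is incompatible with any disjointness-based selection: a cubic graph of girth $\Theta(\log n)$ has rank about $n/2$ but only $O(n/\log n)$ edge-disjoint cycles.

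The workable trade-off is the opposite one: load at most one per phase, with more phases.
\begin{enumerate}[label=(\roman*)]
\item In the smoothed graph $H$ ($n'$ vertices of degree $\ge 3$, rank $r'$), greedily pack edge-disjoint cycles of length at most $L=O(\log n')$.
\item By the irregular Moore bound (Alon--Hoory--Linial), the packing cannot stop while at least $1.25\,n'$ edges remain. Hence at least $r'/4$ edges, and so at least $r'/(4L)$ cycles, get packed.
\item Delete one edge from each packed cycle. This preserves the basis property, because each packed cycle is the only one containing its deleted edge.
\item Each original edge carries load at most $1$ in the phase, and the rank shrinks by a factor $1-\Omega(1/\log|V|)$.
\item This gives $O(\log|V|\log|E|)=O(\log^2|V|)$ phases, after first reducing parallel classes by consecutive $2$-cycles (load $\le 2$) so that the graph is simple.
\end{enumerate}

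\textbf{The total-weight part.} Your accounting counts contracted edges, but weight is measured in original edges. That is exactly the lifted-length issue you flagged yourself. What congestion honestly yields is $O(|E|\log^2|V|)$.

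The fallback ``restrict to a sparse spanning subgraph'' is invalid. A cycle basis of a spanning subgraph $G'$ spans only the cycle space of $G'$, which misses every cycle of $G$ through a deleted edge.

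No repair is possible, because the clause is false as stated. Any cycle basis of $G$ has $|E|-|V|+c(G)$ members. For $K_n$ this forces total weight at least $3\binom{n-1}{2}=\Omega(n^2)$. So the clause cannot hold without a sparsity assumption on $G$, or it must be phrased in terms of $|E|$.

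This is harmless for the paper. The weight bound is used only for the face length $O(\weight_Z\log\weight_Z)$ in $F^z\to E^z$. That bound holds trivially, since a simple cycle on the at most $\weight_Z$ qubits of $z$ has at most $\weight_Z$ edges.
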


Use the previous lemma, choose the basis $\sF^{z}$ of simple cycles for $\sG^{z}$ and let $F^{z}=\spn \sF^{z}$.
Then the adjacency property defines a cell complex $C^{z} = F^{z} \to E^{z} \to V^{z}$ with no nontrivial cycles, i.e., $H_1(C^{z}) =0$, and  weights
\begin{equation}
    F^{z} \xrightleftharpoons[O(\log^2 \weight_Z)]{O(\weight_Z \log \weight_Z)} E^z \xrightleftharpoons[\qubit_X]{2} V^z
\end{equation}
Next, after applying the $Z$-thickening to the cell complex via Proposition~\ref{prop:weight-Z-thicken} we obtain the weights
\begin{equation}
    \label{eq:Z-thickened-cell-complex}
    \bm{F}^{z} \xrightleftharpoons[\qubit_X]{O(\weight_Z \log \weight_Z)} \bm{E}^z \xrightleftharpoons[\qubit_X+2]{2} V(L)\otimes V^z
\end{equation}
Hence, we what remain is to remove the large weight cycles $f$. This is achieved by cellulation as shown in Fig. \ref{fig:cellulation} and formalized by the following lemma.

\begin{lemma}[Cellulation, Fig. \ref{fig:cellulation}]
    \label{lem:cellulation}
    Let $C=F\to E\to V$ be a cell complex with weights
    \begin{equation}
        F \xrightleftharpoons[\qubit]{\weight} E \xrightleftharpoons[\Delta(C)]{2} V
    \end{equation}
    Then there exists cell complex $C'=F'\to E'\to V'$ such that $H_s(C')\cong H_s(C)$ for $s=0,1,2$ and $V'=V$ and has weights
    \begin{equation}
        F' \xrightleftharpoons[\qubit]{3} E' \xrightleftharpoons[\Delta(C) (1+2\qubit)]{2} V'
    \end{equation}
\end{lemma}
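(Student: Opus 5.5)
We will prove Lemma~\ref{lem:cellulation} by triangulating each face $f\in\sF$ with a fan of chords from one of its vertices, as in Fig.~\ref{fig:cellulation}. This keeps $V'=V$, lets each new face reuse the original edges, and adds only chords. Concretely, fix a face $f$ with boundary a simple cycle $v_0 v_1\cdots v_{m-1} v_0$, where $m=|\partial f|\le\weight$. Pick the base vertex $v_0$ and add chord edges $e_{f,i}=v_0v_i$ for $i=2,\dots,m-2$. Replace $f$ by the triangles $t_{f,i}$ with boundaries $\{v_0v_{i},v_iv_{i+1},v_{i+1}v_0\}$ for $i=1,\dots,m-2$; here $v_0v_1$ and $v_{m-1}v_0$ are the original edges of $f$. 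If $m\le3$ nothing changes. Then $F'=\spn\{t_{f,i}\}$ and $E'=E\oplus\spn\{e_{f,i}\}$. Each chord has exactly two endpoints, so $\partial'\partial'=0$ is immediate from the triangle boundaries.

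The homology isomorphisms come from a chain map $\phi:C\to C'$. It is the identity on $V$ and $E$, and on faces it is $\phi(f)=\sum_i t_{f,i}$; telescoping shows $\partial'\phi(f)=\partial f$, since the interior chords cancel in pairs.
\begin{itemize}
\item For $H_0$: $V'=V$, and every chord $v_0v_i$ is homologous in $C'$ to the boundary path $v_0v_1\cdots v_i$ of $f$. Hence the image of $\partial'_1$ equals the image of $\partial_1$, and $H_0(C')=H_0(C)$.
\item For $H_1$: define a retraction $\rho:E'\to E$ sending each chord $v_0v_i$ to the path $v_0v_1\cdots v_i$. Then $\rho\partial'_2(t_{f,i})$ is either $0$ or $\partial f$ (the latter only for $i=m-2$ with the last triangle, depending on conventions). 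More simply, every cycle in $C'$ is homologous, via the triangles, to one avoiding chords, and a cycle in $E$ bounds in $C'$ if and only if it bounds in $C$. To see the latter, write any $2$-chain of $C'$ whose boundary has no chord component. Each chord lies in exactly two triangles of the same face, so all triangles of a face must appear together, and the $2$-chain is $\phi$ of a $2$-chain in $C$.
\item For $H_2$: that same observation gives $\ker\partial'_2=\phi(\ker\partial_2)$, and $\phi$ is injective, so $H_2(C')\cong H_2(C)$.
\end{itemize}

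For the weights, each triangle has boundary weight $3$. Each vertex still has exactly two endpoints per edge. An original edge lies in exactly one triangle per face it bounded, so its coboundary weight is still at most $\qubit$. A chord lies in two triangles, and $2\le\qubit$ whenever nontrivial faces exist. A vertex $v$ gains chords only when it is the base vertex $v_0$ of a face. The number of faces containing $v$ is at most $\Delta(C)\qubit$, since $v$ has at most $\Delta(C)$ edges, each in at most $\qubit$ faces, and this naive count gives a degree of roughly $\Delta\weight\qubit$. This step is the main obstacle, because the claimed bound $\Delta(C)(1+2\qubit)$ does not depend on $\weight$. The first thing to try is to replace the fan by a \emph{zigzag} triangulation with chords $v_0v_{m-2}$, $v_{m-2}v_1$, $v_1v_{m-3},\dots$, so that each vertex of $f$ is incident to at most two chords of $f$. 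The degree of $v$ then becomes at most $\Delta(C)$ original edges plus two chords for each of the at most $\Delta(C)\qubit$ faces through $v$, which is $\Delta(C)(1+2\qubit)$, matching the statement. With the zigzag, the homology argument above goes through verbatim: each interior chord again lies in exactly two triangles of the same face, and the chords are homologous to boundary paths of $f$.
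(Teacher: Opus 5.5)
Your final construction is the paper's: a zigzag triangulation of each simple-cycle face, so that every vertex gets at most two new chords per incident face, which gives the degree bound $\Delta(C)(1+2\qubit)$. Your direct chain-map proof of the homology isomorphisms is also correct. The paper reaches the same conclusion by packaging the triangles of each face $f$ as the $0$-cells of a repetition code $R(f)$, with consecutive triangles sharing a chord, and gluing it onto $E\to V$ via the framework of Theorem~\ref{thm:unified}. Your observation that only the full sums $\phi(f)$ of triangles have chord-free boundary is the unpacked form of the fact that the all-ones vector is the only nonzero cocycle of $R(f)$. It implicitly uses that the triangles of $f$ are connected through chords, which holds for the zigzag because they form a path.

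One correction. Your claim that $2\le\qubit$ whenever faces exist is false. A complex with a single face of length $m\ge 4$ has $\qubit=1$, yet each chord lies in two triangles. The honest bound for $E'\to F'$ is therefore $\max(\qubit,2)$. This matches the paper's own weight diagram, which lists the chord contribution as $2$ separately, and it is harmless in the application, where the paper assumes all weights are at least $3$.
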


\begin{figure}[ht]
\centering
\subfloat[\label{fig:cellulation-intuition}]{%
    \centering
    \includegraphics[width=0.4\columnwidth]{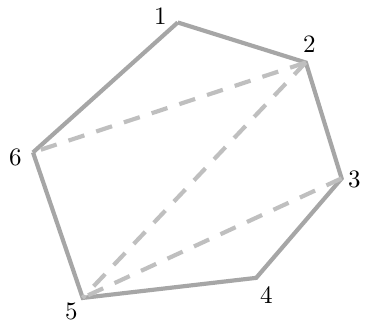}
}
\subfloat[\label{fig:cellulation-adjacency}]{%
    \centering
    \includegraphics[width=0.4\columnwidth]{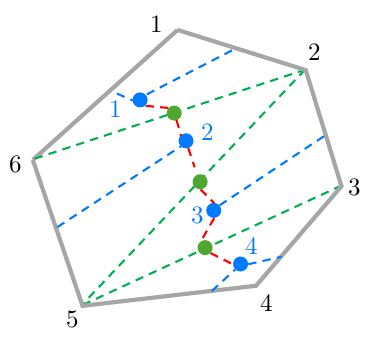}
}
\caption{Cellulation. (a) The grey lines indicates a simple cycle. The dashed lines indicate the added edges, which cellulate the simple cycle. (b) denote the adjacency relations of the repetition code $R(f)$ and the original cycle as depicted by the diagram in Eq.~\eqref{eq:cellulation-diagram}.
}
\label{fig:cellulation}
\end{figure}

\begin{proof}
    For every simple cycle/face $f\in \sF$, label the vertices via $1,...,N$ and $R(f)$ denote the repetition code on $n=\lceil (N+1)/2 \rceil$ vertices labeled by $|j;f\ket$ and edges $|j^+;f\ket$ where $j^+ = j+1/2$. Let $R(F)=E(F)\to V(F)$ denote the direct sum of $R(f),f\in \sF$.
    As depicted in Fig. \ref{fig:cellulation}, consider constructing a cell complex as follows. 
    \begin{equation}
    \label{eq:cellulation-diagram}
    \begin{tikzpicture}[baseline]
    \matrix(a)[matrix of math nodes, nodes in empty cells, nodes={minimum size=25pt},
    row sep=1.5em, column sep=1.5em,
    text height=1.25ex, text depth=0.25ex]
    {V(F) & E(F)   \\
     E & \\
     V &\\};
    \path[->,font=\scriptsize]
    (a-1-1) edge node[right]{$g^{QZ}$} (a-2-1)
    (a-2-1) edge node[right]{$g^{XQ}$} (a-3-1)
    (a-1-1) edge (a-1-2) 
    (a-1-2) edge[bend left,dashed] node[right]{$p^{XZ}$} (a-3-1);
    \end{tikzpicture}
    \end{equation}
    Here, $g^{XQ}=\partial^{G}$, and $g^{QZ}$ is defined to map $|j;f\ket \in V(f)$ to edge(s) in $f$ with endpoints (recall $n=\lceil (N+1)/2\rceil$)
    \begin{itemize}
        \item $N,1$ and $1,2$ for $j=1$
        \item $\ell,\ell+1$ if $j=2\ell-1\ne 1,n$
        \item $N-\ell+1,N-\ell$ if $j=2\ell \ne n$ 
        \item $n-1,n$ and $n,n+1$ if $j=n$
    \end{itemize}
    And $p^{XZ}$ maps edges $|j^+;f\ket$ to vertices in $f$ labeled by $\lceil j/2\rceil +1$ and $n-\lfloor j/2\rfloor$.
    It's straightforward to check that $g^{QZ},g^{XQ},p^{XZ}$ satisfy the compatibility conditions. The weights are then given by
    \begin{equation}
    \label{eq:Hastings-coning-diagram}
    \begin{tikzpicture}[baseline]
    \matrix(a)[matrix of math nodes, nodes in empty cells, nodes={minimum size=25pt},
    row sep=1.5em, column sep=5em,
    text height=1.25ex, text depth=0.25ex]
    {V(F)  & E(F)   \\
     E & \\
     V &\\};
    \path[-left to,font=\scriptsize,transform canvas={yshift=0.2ex}]
    (a-1-1) edge node[above]{2}  (a-1-2);
    \path[left to-,font=\scriptsize,transform canvas={yshift=-0.2ex}]
    (a-1-1) edge node[below]{2}  (a-1-2);
    \path[-left to,font=\scriptsize,transform canvas={xshift=0.2ex}]
    (a-1-1) edge node[right]{2}  (a-2-1)
    (a-2-1) edge node[right]{2}  (a-3-1);
    \path[left to-,font=\scriptsize,transform canvas={xshift=-0.2ex}]
    (a-1-1) edge node[left]{$\qubit$}  (a-2-1)
    (a-2-1) edge node[left]{$\Delta(C)$}  (a-3-1);
    \path[-left to,font=\scriptsize]
    (a-1-2) edge[bend left=50, dashed] node[right]{2} (a-3-1);
    \path[left to-, font=\scriptsize]
    (a-1-2) edge[bend left=40, dashed] node[left]{$2\qubit \Delta(C)$} (a-3-1);
    \end{tikzpicture}
    \end{equation}
    In particular, the weight indicated by the dashed line between $V\to E(F)$ is due to the upper bound $\qubit \Delta(G)$ on the number of faces $f$ adjacent to a vertex $v$, and that for each adjacent cycle $f$, a vertex may be adjacent to two new added edges as depicted in Fig. \ref{fig:cellulation}.
\end{proof}
Triangulate/cellulate each face in Eq.~\eqref{eq:Z-thickened-cell-complex} as described in Lemma~\ref{lem:cellulation} so that the resulting complex has weights
\begin{equation}
    \label{eq:associated-complex}
    \hat{\bm{F}}^{z} \xrightleftharpoons[\qubit_X]{3} \hat{\bm{E}}^z \xrightleftharpoons[(\qubit_X+2)(1+2\qubit_X)]{2} V(L)\otimes V^z
\end{equation}
where the faces and edges are modified, but the vertices remain the same.
The essential result is that the associated cell complex $\hat{\bm{C}}^{z}$ has weights with upper bound only depending on $\qubit_X$, has no \textit{internal logical operators}, i.e., $H_1(\hat{\bm{C}}^{z}) =0$, and has the following natural basis\footnote{Intuitively, this is seen from Fig. \ref{fig:weight-Z-thicken}, which shows that thickening only changes the connected components via $\sV(L)$, while cellulation does not change the connectivity.} for $H^0(\hat{\bm{C}}^z)$  (where we slightly abuse notation)
\begin{equation}
    \label{eq:weight-components}
    [\omega] \equiv \left[\sV(L)\right] \otimes \left[\sum_{v\in \omega} v\right],\quad \omega\in \Omega^{z}
\end{equation}

\begin{prop}[Hastings's Coning]
    \label{prop:weight-cone}
    Let $A$ be a CSS code with weights (assumed to be $\ge 3$ for simplicity)
    \begin{equation}
        Z \xrightleftharpoons[\qubit_Z]{\weight_Z} Q \xrightleftharpoons[\weight_X]{\qubit_X} X
    \end{equation}
    Let $\hat{\bm{C}}^{Z}\equiv \bigoplus \hat{\bm{C}}^{z}$ be the direct sum of the associated $\hat{\bm{C}}^{z}$ defined in Eq.~\eqref{eq:associated-complex}.
    Then consider $C$ via the following gluing procedure
    \begin{equation}
    \label{eq:Hastings-coning-diagram}
    \begin{tikzpicture}[baseline]
    \matrix(a)[matrix of math nodes, nodes in empty cells, nodes={minimum size=25pt},
    row sep=1.5em, column sep=1.5em,
    text height=1.25ex, text depth=0.25ex]
    {\hat{\bm{V}}^Z  & \hat{\bm{E}}^Z & \hat{\bm{F}}^Z   \\
     Q && \\
     X &&\\};
    \path[->,font=\scriptsize]
    (a-1-1) edge node[right]{$g^{QZ}$} (a-2-1)
    (a-2-1) edge node[right]{$g^{XQ}$} (a-3-1)
    (a-1-1) edge (a-1-2) 
    (a-1-2) edge (a-1-3)
    (a-1-2) edge[bend left,dashed] node[right]{$p^{XZ}$} (a-3-1);
    \end{tikzpicture}
    \end{equation}
    where $g^{XQ} = \partial^{A}$ and
    \begin{align}
        g^{QZ} |l\ket \otimes |q;z\ket &= 1\{l=1\} |q \ket \\
        p^{XZ} |l\ket \otimes |p,x;z\ket &= 1\{l=1\} | x \ket
    \end{align}
    where $p^{XZ}$ acts nontrivially only on $V(L)\otimes E^{Z}$ (refer to Diagram~\eqref{eq:Z-thickening-diagram} and Eq.~\eqref{eq:Z-thickened-cell-complex} ).
    Then $C$ is a complex such that $H_1(C) \cong H_1(C^{\bg})$ where $C^{\bg}$ is the embedded column complex, and has weights
    \begin{equation}
        \label{eq:weight-cone-corrected}
        C_2 \xrightleftharpoons[\qubit_Z]{(\qubit_X +2)(1+2\qubit_X)+1} C_1 \xrightleftharpoons[\weight_X + \weight_X^2 \qubit_Z/2]{\qubit_X+1} C_0
    \end{equation}
\end{prop}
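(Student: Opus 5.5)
The proof has three parts, handled in order: compatibility of the maps, identification of the homology via Theorem~\ref{thm:unified}, and a weight count from the expanded two-dimensional diagram.

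\textbf{Compatibility.} In the convention of this appendix the framework of Definition~\ref{def:framework} is used with $Z\leftrightarrow X$ swapped. The levels are
\begin{align}
C^{Z} &= \hat{\bm{C}}^{Z}\quad(\hat{\bm{V}}^{Z}\to\hat{\bm{E}}^{Z}\to\hat{\bm{F}}^{Z},\ \text{read as a cochain complex}),\\
C^{Q} &= Q \ \text{in degree }1,\\
C^{X} &= X \ \text{in degree }0.
\end{align}
The two squares involving $g^{QZ}$ and $g^{XQ}$ are trivially commutative, because $Q$ and $X$ are length-zero complexes. The only genuine identity to check is the homotopy relation
\begin{equation}
g^{XQ}g^{QZ} = p^{XZ}\delta^{Z}.
\end{equation}
I would verify it on a basis vertex $|l\ket\otimes|q;z\ket$. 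For $l\neq 1$ both sides vanish. This uses that the thickening edges in the $V(L)$ direction carry no $p^{XZ}$, and that cellulated faces and added edges are not hit by $p^{XZ}$. For $l=1$ the left side is $\sum_{x\sim q}|x\ket$. On the right, the coboundary of the vertex contains exactly the pair-edges $|1\ket\otimes|p,x;z\ket$ whose pair contains $q$. Since pairs partition $x\wedge z$, each $x\sim q$ with $q\sim z$ appears exactly once, so the two sides agree.

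\textbf{Homology.} I would then apply Theorem~\ref{thm:unified}. This needs the internal homology of the glued-in complexes to vanish at the relevant degree: $H^1(\hat{\bm{C}}^{z})=0$ holds by the decongestion basis together with Lemma~\ref{lem:cellulation}, while $X$ is trivially fine. This gives $H_1(C)\cong H_1(C^{\bg})$, with the embedded complex built from the component classes of Eq.~\eqref{eq:weight-components}.

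\textbf{Weights.} The main work is reading off the weights from the expanded diagram, summing row and column contributions block by block, as in Diagram~\eqref{eq:weight-diagram-2D}.
\begin{itemize}
\item \textbf{$C_2$ column weight.} A $Z$-check in $C_2$ is a vertex of $\hat{\bm{V}}^{Z}$. Its weight is its degree in $\hat{\bm{E}}$, which is $(\qubit_X+2)(1+2\qubit_X)$ by Eq.~\eqref{eq:associated-complex}, plus $1$ from $g^{QZ}$.
\item \textbf{$C_2$ row weight.} A qubit $q\in Q$ receives $g^{QZ}$ from one vertex per $z\sim q$, giving $\qubit_Z$. Edges in $\hat{\bm{E}}$ touch $2$ vertices.
\item \textbf{$C_1$ column weight.} A pair-edge maps to at most $\qubit_X$ faces plus one $x$ via $p^{XZ}$, and $q$ maps to $\weight_X$ checks. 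The stated bound $\qubit_X+1$ must be the maximum over edge types, with $q$-columns counted on the transposed side.
\item \textbf{$C_1$ row weight, the corrected entry.} For a fixed $x$, the column of $x$ receives $\weight_X$ from $\partial^A$. It also receives, for every $z$ with $x\wedge z\ne\emptyset$, one term per pair $|p,x;z\ket$ at layer $l=1$. There are at most $|x\wedge z|/2$ such pairs, and the number of such $z$ summed against overlap is at most $\weight_X\qubit_Z$, with each overlap at most $\weight_X$. This gives $\weight_X\qubit_Z\cdot\weight_X/2$.
\end{itemize}

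I expect this last count to be the main obstacle. It is precisely where Hastings' Lemma~8 claimed the weight $\weight_X$ and overlooked the new qubits coming from the chain map, so the point requiring care is the correct bookkeeping of pair-edges landing on a single $x$.
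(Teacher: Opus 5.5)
Your plan follows the paper's proof. The only nontrivial compatibility identity is $g^{XQ}g^{QZ}=p^{XZ}\partial^{Z}$, and your vertex-by-vertex check (treating $l\neq 1$ and $l=1$ separately, and using that the pairs partition $x\wedge z$) is more explicit than the paper's ``straightforward''. The homology claim is Theorem~\ref{thm:unified} with $H_1(\hat{\bm{C}}^{z})=0$, and the weights are read off the expanded diagram.

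One step is wrong, in your $C_1$ column-weight bullet: a qubit $q\in Q$ does not map to $\weight_X$ checks. In this appendix's convention, $g^{XQ}=\partial^{A}\colon Q\to X$ has column weight $\qubit_X$ (each qubit lies in at most $\qubit_X$ $X$-checks) and row weight $\weight_X$. So the $Q$-columns of $\partial_1$ have weight $\qubit_X$. The $\hat{\bm{E}}^{Z}$-columns have weight at most $\qubit_X+1$: at most $\qubit_X$ faces, plus at most one $x$ via $p^{XZ}$. The bound $\qubit_X+1$ then follows directly. Counting the $q$-columns ``on the transposed side'' is not a legitimate fix. If $q$ really had $\weight_X$ neighbours, the maximum column weight would be $\max(\qubit_X+1,\weight_X)$, and the stated bound would fail whenever $\weight_X>\qubit_X+1$. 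The $\weight_X$ from $\partial^{A}$ belongs only to the row of $x$, which is where you correctly use it in your last bullet.

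That last bullet mixes two different counts. Your phrase ``the number of such $z$ summed against overlap is at most $\weight_X\qubit_Z$'' is the double count $\sum_{z}|x\wedge z|=\sum_{q\sim x}|\{z: z\sim q\}|\le \weight_X\qubit_Z$. On its own, that already bounds the pair-edges landing on $x$ by $\weight_X\qubit_Z/2$, so multiplying by a further $\weight_X/2$ is not justified by that count. The paper uses the other count: at most $\weight_X\qubit_Z$ checks $z$ meet $x$, each contributing at most $\weight_X/2$ pairs, which gives the stated $\tfrac12\weight_X^2\qubit_Z$. Either count yields the claimed bound, so this part is fine once stated cleanly.
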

\begin{proof}
Using the framework in Definition~\ref{def:framework}, we see that
\begin{align}
    C^{Z} &= (\hat{\bm{C}}^Z)^{\top}\\
    C^{Q} &= 0\to Q \to 0 \\
    C^{X} &= 0 \to 0 \to X
\end{align}
It's then straightforward to check that $p^{XZ}\partial^{Z} = g^{XQ} g^{QZ}$ and that $H_1(C) \cong H_1(C^{\bg})$.
Finally, the weights follow from the following diagram
\begin{equation}
    \begin{tikzpicture}[baseline]
    \matrix(a)[matrix of math nodes, nodes in empty cells, nodes={minimum size=25pt},
    row sep=1.5em, column sep=5em,
    text height=1.25ex, text depth=0.25ex]
    {V  & E & F   \\
     Q && \\
     X &&\\};
    \path[-left to,font=\scriptsize,transform canvas={yshift=0.2ex}]
    (a-1-1) edge node[above]{$(\qubit_X+2)(1+2\qubit_X)$}  (a-1-2)
    (a-1-2) edge node[above]{$\qubit_X$}  (a-1-3);
    \path[left to-,font=\scriptsize,transform canvas={yshift=-0.2ex}]
    (a-1-1) edge node[below]{2}  (a-1-2)
    (a-1-2) edge node[below]{3}  (a-1-3);
    \path[-left to,font=\scriptsize,transform canvas={xshift=0.2ex}]
    (a-1-1) edge node[right]{1}  (a-2-1)
    (a-2-1) edge node[right]{$\qubit_X$}  (a-3-1);
    \path[left to-,font=\scriptsize,transform canvas={xshift=-0.2ex}]
    (a-1-1) edge node[left]{$\qubit_Z$}  (a-2-1)
    (a-2-1) edge node[left]{$\weight_X$}  (a-3-1);
    \path[-left to,font=\scriptsize]
    (a-1-2) edge[bend left=50, dashed] node[right]{1} (a-3-1);
    \path[left to-, red, font=\scriptsize]
    (a-1-2) edge[bend left=40, dashed] node[left]{$\frac{1}{2} \weight_X^2 \qubit_Z$} (a-3-1);
    \end{tikzpicture}
\end{equation}
In particular, the red dashed line denotes the fact that for every $x\in \sX$, there at most $\weight_X \qubit_Z$ many $z\sim x$, each of which has $|x\wedge z|/2 \le \min(\weight_X,\weight_Z)/2$ pairs. Since we expect $\weight_Z$ to be the limiting factor as in Remark~\ref{rem:weight-wZ-remains}, we write $|x\wedge z| \le \weight_X/2$.
This contribution was not considered in point 5 of Lemma 8 in Hastings \cite{hastings2021quantum}.
\end{proof}

Note the unlike the previous propositions, the embedded column code of $C$ of Proposition~\ref{prop:weight-cone} is not exactly equal to $A$.
However, if we compare the embedded code $C^{\bg}$ in the previous proposition with the code $A$, we see that the only difference is that basis elements $z\in \sZ$ are replaced by those of $H^0(\hat{\bm{C}}^z) \cong H^0(C^{z})$ as specified by Eq.~\eqref{eq:weight-components}.
Because basis elements of $H^0(C^z)$ correspond to connected components of generators $z\in \sZ$, it's clear that, after suitable isomorphisms, the basis of $H^0(C^z)$ generates any $z$, and thus what remains is whether every basis element of $H^0(C^Z)$ can be generated by $\sZ$.
Therefore, Hastings considered\footnote{Hastings developed a further method to circumvent this issue and remove the requirement of $A$ being reasonable, but since this is not the main goal of this appendix, we do not go into further detail here.} CSS codes that are \textbf{$Z$-type reasonable}, i.e., there are no $Z$-type logical operators whose support is contained in the support of some $Z$-type generator, and thus if $A$ is reasonable, then the basis elements of $H^0(C^{z})$ cannot correspond to nontrivial logical $Z$-type operators.
This discussion is summarized more precisely as follows.

\begin{prop}[Reasonable Codes]
    Let $A =Z\to Q\to X$ be a CSS code which is \textbf{$Z$-type reasonable}, i.e., if $\ell^{A}$ is a nontrivial $Z$-type logical operator of $A$ regarded as a subset of $\sQ$, then $\ell^{A}$ is not a subset of $\partial^{A} z$ for any $z\in \sZ$.
    Let $C^{\bg}$ be the embedded code in Proposition~\ref{prop:weight-cone}. 
    Then $\im \partial^{\bg}_{2} = \im \partial^{A}_{2}$. 
    In particular,
    \begin{equation}
        H_1(C^{\bg}) = H_1(A)
    \end{equation}
\end{prop}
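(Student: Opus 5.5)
We must show $\im \partial^{\bg}_2 = \im \partial^A_2$, where $C^{\bg} = H_0(C^Z)^{\ast}\to Q \to X$. Here the role of $H_2$ of the $Z$-level is played by $H^0(\hat{\bm{C}}^Z)=\bigoplus_z H^0(\hat{\bm{C}}^z)$, whose basis is $[\omega]$, $\omega\in\Omega^z$, by Eq.~\eqref{eq:weight-components}. The plan is to compute $[g^{QZ}]$ on this basis, prove the two inclusions separately, and then read off the homology.

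First, I would compute $\partial^{\bg}_2[\omega]=[g^{QZ}][\omega]$. Since $g^{QZ}$ only sees the layer $l=1$ and $\sV(L)$ has exactly one vertex at $l=1$, this should give $\partial^{\bg}_2[\omega]=\sum_{v\in\omega}|q(v)\rangle$. That is, $\partial^{\bg}_2[\omega]$ is the indicator of the component $\omega$, viewed as a subset of $\operatorname{supp}\partial^A z\subseteq\sQ$.

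For the inclusion $\im\partial^A_2\subseteq\im\partial^{\bg}_2$, note that $\Omega^z$ partitions $\sV^z=\operatorname{supp}\partial^A z$. Hence $\partial^A z=\sum_{\omega\in\Omega^z}\partial^{\bg}_2[\omega]$.

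For the reverse inclusion, fix $\omega\in\Omega^z$ and set $\ell=\partial^{\bg}_2[\omega]$.

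1. I would show $\partial^A_1\ell=0$, i.e.\ $\ell$ commutes with every $X$-check. For each $x$, the set $\omega\cap(x\wedge z)$ is a union of the pairs $|p,x;z\rangle$. This holds because the two vertices of each pair are joined by an edge of $\sG^z$, so they lie in the same component. Therefore $|\ell\cap\operatorname{supp}\partial^A x|$ is even. Equivalently, $g^{XQ}g^{QZ}=p^{XZ}\partial^Z$ applied to a cocycle representative gives $0$.
2. Since $\ell\in\ker\partial^A_1$ and $\ell\subseteq\operatorname{supp}\partial^A z$, $Z$-type reasonableness forces $\ell$ not to be a nontrivial logical. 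So $\ell\in\im\partial^A_2$.

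Together these give $\im\partial^{\bg}_2=\im\partial^A_2$. The map $\partial^{\bg}_1=[g^{XQ}]=\partial^A_1$ is unchanged, so $\ker$ is the same in both complexes, and therefore $H_1(C^{\bg})=\ker\partial^A_1/\im\partial^A_2=H_1(A)$.

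The main obstacle is the first step: verifying that $[g^{QZ}]$ really sends the cohomology class $[\omega]$ to the plain indicator of $\omega$. This requires being careful with the transpose convention ($C^Z=(\hat{\bm{C}}^Z)^\top$), so that the top homology of $C^Z$ is identified with $H^0(\hat{\bm{C}}^Z)$. It also requires checking that thickening and cellulation do not alter the component structure, as claimed in the footnote to Eq.~\eqref{eq:weight-components}. I would handle this by noting that Lemma~\ref{lem:cellulation} preserves $H_0$ and that thickening multiplies each component by the connected $\sV(L)$. Given that, reasonableness does all the remaining work.
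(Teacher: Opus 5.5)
Your proposal is correct and follows essentially the same route as the paper. The image of each class $[\omega]$ is the indicator of the component $\omega$, summing over $\omega\in\Omega^z$ recovers $\partial^A z$, and $Z$-type reasonableness gives the reverse inclusion. You also check explicitly that $\omega\in\ker\partial^A_1$ (each pair lies in a single component, so $\omega$ meets every $X$-check in an even number of qubits); the paper uses this step only implicitly when it calls $\omega$ a logical operator.
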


\begin{proof}
    Note that for component $\omega\in \Omega^{z}$
    \begin{align}
        \partial^{\bg} [\omega] &= \left[g^{QZ} \sV(L)\otimes \sum_{q}1\{q\in \omega\}   |q;z\ket\right]\\
        &= \sum_{q}1\{q\in \omega\} |q \ket \equiv \omega
    \end{align}
    where $\omega$ is also regarded as an element in $Q$ via the natural identification.
    It's clear that if we sum over all connected components $\omega \in \Omega^{z}$, the right-hand-side will be $\partial^{A} z$ and thus $\im \partial^{A}_{2} \subseteq \im \partial_{2}^{\bg}$.    
    If $\im \partial^{A}_{2} $ is a proper subset of $ \im \partial_{2}^{\bg}$, there must exist some $\omega\in \Omega^{z}$ for some $z$ such that $\omega \notin \im \partial^{A}_{2}$, and thus $\omega$ is a nontrivial $Z$-type logical operator (with respect to $A$) such that, regarded as a subset, is contained in $\partial^{A} z$.
    Hence, we reach a contradiction.
\end{proof}

\begin{remark}[Partially $Z$-type Reasonable Codes]
    \label{rem:partially-reasonable-codes}
    One can also talk about codes $A=Z\to Q\to X$ which are $Z$-type reasonable with respect to a subset $\sZ'\subseteq \sZ$. In this case, the construction in Proposition~\ref{prop:weight-cone} should be modified as follows.
    Let $\hat{\bm{C}}^{Z'}$ be the direct sum of associated $\hat{\bm{C}}^z$ (as defined in Eq.~\eqref{eq:associated-complex}) over all $z\in \sZ'$, and let $Z'' =\spn \sZ''$ where $\sZ'' = \sZ \backslash \sZ$. Then consider the cell complex constructred via
    \begin{equation}
    \label{eq:Hastings-coning-diagram}
    \begin{tikzpicture}[baseline]
    \matrix(a)[matrix of math nodes, nodes in empty cells, nodes={minimum size=25pt},
    row sep=1.5em, column sep=1.5em,
    text height=1.25ex, text depth=0.25ex]
    {Z''\oplus \hat{\bm{V}}^{Z'}  & \hat{\bm{E}}^{Z'} & \hat{\bm{F}}^{Z'}   \\
     Q && \\
     X &&\\};
    \path[->,font=\scriptsize]
    (a-1-1) edge node[right]{} (a-2-1)
    (a-2-1) edge node[right]{} (a-3-1)
    (a-1-1) edge (a-1-2) 
    (a-1-2) edge (a-1-3)
    (a-1-2) edge[bend left,dashed] node[right]{} (a-3-1);
    \end{tikzpicture}
    \end{equation}
    In this case, $H_1(C) \cong H_1(A)$ whereas the weight diagram is given by
    \begin{equation}
        C_2 \xrightleftharpoons[\qubit_Z]{\max((\qubit_X +2)(1+2\qubit_X)+1,\weight_{Z''})} C_1 \xrightleftharpoons[\weight_X + \weight_X^2 \qubit_Z/2]{\qubit_X+1} C_0
    \end{equation}
    where $\weight_{Z''}$ is the max weight of checks in $\sZ''$ with respect to $A$.
\end{remark}


\subsection{Final Result}

Let us now attempt to combine Propositions~\ref{prop:weight-X-reduction},~\ref{prop:weight-Z-thicken} and~\ref{prop:weight-cone}.
First note that 
\begin{lemma}
    Let $A=Z\to Q\to X$ be a $Z$-type reasonable CSS code.
    \begin{itemize}
        \item If $C$ is the $X$-reduction of $A$, then $C$ is $Z$-type reasonable.
        \item If $C$ is the $Z$-thickening of $A$, then $C$ is $Z$-type reasonable with respect to $\sZ \subseteq \sC_2$.
    \end{itemize}
\end{lemma}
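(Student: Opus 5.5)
The plan is to characterize the $Z$-type logicals of each output code $C$ through the isomorphism $H_1(C)\cong H_1(A)$ from Propositions~\ref{prop:weight-X-reduction} and~\ref{prop:weight-Z-thicken}. A $Z$-type logical of $C$ is a cycle $\ell\in\ker\partial_1$, with $\partial_1:C_1\to C_0$ the $X$-check map in the convention of this appendix. The first step in each case is to show that the support of any such nontrivial $\ell$ cannot be contained in $\partial_2 c$ for a generator $c$ (all $c\in\sC_2$ for the reduction, only $c\in\sZ$ for the thickening). I argue by contradiction. Suppose $\ell\subseteq\partial_2 c$ is nontrivial. I use the isomorphism of Theorem~\ref{thm:unified}: after cleaning, I project $\ell$ to a logical $\ell^A\in H_1(A)$, which is nontrivial. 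I then show that $\ell^A$, viewed as a subset of $\sQ$, lies in $\partial^A z$ for some $z$, contradicting reasonableness of $A$.

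\textbf{$X$-reduction.} Here $C^Z=Z\to0\to0$, so $\sC_2=\sZ$ and $C_1=V(\qubit_X)\otimes Q\oplus E(\weight_X)\otimes X$. By the formulas for $g^{QZ}$ and $p^{XZ}$, $\partial_2 z$ is supported on $\sV(\qubit_X)\otimes\{q\sim z\}$ together with edges in $E(\weight_X)\otimes\{x\sim z\}$. The cleaning step writes $\ell=\ell^Q\oplus\ell^X$, where $\ell^Q\in V(\qubit_X)\otimes Q$ satisfies $\partial^Q$-closedness. This makes it constant along each $R(\qubit_X)$ fibre, so $\ell^Q=\sV(\qubit_X)\otimes\ell^A$. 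Cleaning of $\ell^X$ here is unnecessary, since $H_1$ of the $X$ level vanishes and the map to $[\ell^Q]$ is read off directly. Containment of supports then gives $\ell^A\subseteq\{q\sim z\}=\partial^A z$. It remains to check that $[\ell^Q]\ne0$ whenever $[\ell]\ne0$, which is exactly the injectivity in Eq.~\eqref{eq:height-2-cone-iso}. This yields the contradiction.

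\textbf{$Z$-thickening.} Here $C_1=V(L)\otimes Q\oplus E(L)\otimes X$, with $\sC_2\supseteq\sZ$. For $z\in\sZ$, $\partial_2 z=|h(z)\rangle\otimes\partial^A z$ lies entirely in the single slice $|h(z)\rangle\otimes Q$. Suppose a nontrivial $Z$-logical $\ell$ has support inside this slice, so that $\ell=|h(z)\rangle\otimes s$ with $s\subseteq\partial^A z$. Being a cycle forces $\partial^{R(L)}$-compatibility and $g^{XQ}$-compatibility, i.e.\ $\partial^A s=0$ in $X$. The isomorphism $H_1(C)\to H_1(A)$ is given by projecting a slice; for a representative concentrated on one slice it sends $\ell$ to $[s]$. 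Nontriviality of $[\ell]$ gives nontriviality of $[s]$ in $H_1(A)$, contradicting reasonableness since $s\subseteq\partial^A z$.

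\textbf{Main obstacle.} The difficulty is the identification of the class of an arbitrary cycle, not just a cleaned representative, with its slice projection. The cleaning of Theorem~\ref{thm:unified} may move support out of $\partial_2 c$, so the containment must be checked on the original $\ell$. I would handle this by noting that in both cases cycles supported in $\partial_2 c$ are already of the cleaned form, since $\partial_2 c$ avoids the top level other than via $g$. I would then directly verify that the projection map commutes with restriction to the relevant slice.
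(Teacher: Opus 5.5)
Your proposal is correct and follows essentially the paper's route: write the cycle as $\ell^Q\oplus\ell^X$ with $\ell^Q=\mathcal{V}(\qubit_X)\otimes\ell^A$ (resp.\ observe that it sits in the single slice $|h(z)\rangle\otimes Q$), extract $\ell^A\subseteq\partial^A z$ from the support containment, and transfer nontriviality through the isomorphism $H_1(C)\cong H_1(A)$ of Theorem~\ref{thm:unified}. The differences are cosmetic. For the $X$-reduction the paper verifies $\partial^A\ell^A=0$ directly from the cycle condition (each $x$-component of $g^{XQ}\ell^Q$ must have even cardinality) instead of reading it off the isomorphism. Your ``main obstacle'' does not arise, because in both constructions the top level $C^Z=Z\to0\to0$ has no $1$-cells, so every cycle is already in cleaned form.
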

\begin{proof}
    Let $C$ be the $X$-reduction of $A$ so that $\sC_2 = \sZ$ via Diagram~\eqref{eq:weight-X-reduction-diagram-2D}.
    Suppose $C$ is not $Z$-type reasonable, so that there exists nontrivial logical operator $\ell^{C} \in C_{1}$ such that, regarded as a subset, $\ell^{C} \subseteq \partial^{C} z$ for some $z \in \sZ$.
    Note that $\ell^{C} = \ell^{Q} \oplus \ell^{X}$ where $\ell^{Q}\in V(\qubit_X)\otimes Q$ and $\ell^{X} \in E(\weight_X) \otimes X$.
    Since $\partial^C \ell^{C} =0$, we see that $\delta^{R(\qubit_X)} \ell^{Q}=0$, and thus $\ell^{Q}$ must be of the form $\sV(\qubit_X)\otimes \ell^{A}$ for some $\ell^{A} \in Q$.
    Also note that
    \begin{equation}
        g^{XQ} \ell^{Q} = \sum_{x} \underbrace{\left(\sum_{\ell^{A} \ni q \sim x} |(q;x)\ket\right)}\otimes |x \ket
    \end{equation}
    Since $\partial^{C}\ell^{C} =0$, we see that the underbraced term must be of even cardinality for all $x\in \ell^{A}$. Hence, $\partial^{A} \ell^{A} =0$ and thus $\ell^{A}$ is a $Z$-type logical operator of $A$.
    Since $\ell^{C} \subseteq \partial^{C} z$ for some $z$, we see that $\ell^{A} \subseteq \partial^{A}z$. Since $\ell^{C}$ is a nontrivial logical of $C$, then so is $\ell^{A}$. This contradicts the fact that $A$ is $Z$-type reasonable.
    
    
    Now let $C$ be the $Z$-thickening of $A$ so that $\sZ$ is naturally a subset of $\sC_2$ via Diagram~\ref{eq:weight-Z-thicken-diagram-2D}.
    Suppose $C$ is not $Z$-type reasonable with respect to $\sZ$, so that there exists nontrivial logical operator $\ell^{C}\in C_{1}$ such that, regarded as a subset, $\ell \subseteq \partial^{C} z$ for some $z\in \sZ$.
    Then there exists $\ell^{A} \subseteq \partial^{A} z$ such that
    \begin{equation}
        \ell^{C} = |h(z)\ket \otimes \ell^{A}
    \end{equation}
    Since $\partial^{C} \ell^{C} =0$, we see that $\partial^{A} \ell^{A} =0$ and thus $\ell^{A}$ is a $Z$-type logical operator. SInce $\ell^{C}$ is nontrivial, so is $\ell^{A}$ and thus we reach a contradiction since $A$ is $Z$-type reasonable.
\end{proof}
\begin{theorem}[Hastings's Weight Reduction -- Corrected]
    \label{thm:weight-corrected}
    Let $A$ be a $Z$-type reasonable CSS code with weights 
    \begin{equation}
        Z \xrightleftharpoons[\qubit_Z]{\weight_Z} Q \xrightleftharpoons[\weight_X]{\qubit_X} X
    \end{equation}
    Let $C$ denote the chain complex obtained by the sequential construction of $X$-reduction, $Z$-thickening and Hastings's coning.
    Then $H_1(C) \cong H_1(A)$ and $C$ has weights satisfying
    \begin{equation}
        Z \xrightleftharpoons[3]{36} Q \xrightleftharpoons[42]{4} X
    \end{equation}
\end{theorem}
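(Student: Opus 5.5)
The plan is to chain Propositions~\ref{prop:weight-X-reduction}, \ref{prop:weight-Z-thicken} and the partially reasonable version of Proposition~\ref{prop:weight-cone} (Remark~\ref{rem:partially-reasonable-codes}). At each step we track two things: the logical space, via the isomorphisms $H_1(\cdot)\cong H_1(\cdot)$ supplied by those propositions, and the weight diagram, by substituting the output weights of one step as the input weights of the next. The logical statement is immediate. $X$-reduction gives $H_1(C^{(1)})\cong H_1(A)$ and preserves $Z$-type reasonability by the preceding lemma. $Z$-thickening gives $H_1(C^{(2)})\cong H_1(C^{(1)})$ and yields a code that is $Z$-type reasonable with respect to $\sZ\subseteq \sC_2$. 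Coning only the checks in $\sZ$ (set $\sZ'=\sZ$, with $\sZ''$ the low-weight complement) then gives $H_1(C)\cong H_1(C^{(2)})$ by Remark~\ref{rem:partially-reasonable-codes}.

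For the weights, Remark~\ref{rem:weight-wZ-remains} shows that after the first two steps the code $C^{(2)}$ has $\qubit_Z=3$, $\qubit_X=3$ and $\weight_X=5$. Its $Z$-checks split into two groups: those in $\sZ''$, of weight at most $5$, and those in $\sZ'$, of weight $O(\weight_Z\weight_X\qubit_X)$. I will verify these numbers by composing the two weight diagrams~\eqref{eq:weight-X-reduction-diagram-2D} and~\eqref{eq:weight-Z-thicken-diagram-2D}. In the thickening step the inputs $\qubit_X=\weight_X=3$ give $\max(3,3)=3$, $\weight_X+2=5$ and $\max(\qubit_X,2)=3$. I must also check that the $Z$-qubit degree stays at $3$.

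Next I substitute $(\weight_X,\qubit_X,\qubit_Z)=(5,3,3)$ into the coning weights of Remark~\ref{rem:partially-reasonable-codes}:
\begin{align}
\weight_Z &\le \max\big((3+2)(1+2\cdot 3)+1,\,5\big)=36,\\
\qubit_X &\le 3+1=4,\\
\weight_X &\le 5+\tfrac{1}{2}\cdot 5^2\cdot 3 .
\end{align}
The last line evaluates to $42.5$, not the claimed $42$. The main obstacle is therefore to sharpen the red-dashed contribution to exactly $37$ rather than $37.5$. Two refinements should close the gap. First, $|x\wedge z|$ is even and at most $\weight_X=5$, so it is at most $4$, which gives at most $2$ pairs per $(x,z)$. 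Second, the number of $z$ adjacent to $x$ is at most $\weight_X\qubit_Z=15$. This refinement alone gives $5+2\cdot 15=35\le 42$. So after accounting for parity, the bound $\weight_X+\lfloor \weight_X/2\rfloor\weight_X\qubit_Z\le 42$ holds. Even taking the literal $\tfrac12\weight_X^2\qubit_Z$ and rounding down by parity gives $42$.

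Finally, the $Z$-type qubit degree is $\qubit_Z=3$, taken directly from the coning diagram. This matches the stated $Z\xrightleftharpoons[3]{36}Q\xrightleftharpoons[42]{4}X$. The secondary subtlety to check carefully is that the coning step acts only on $\sZ'$, so that the weight-$5$ checks in $\sZ''$ are not cellulated. Their maximum with $36$ is still $36$, which confirms the $\weight_Z$ bound.
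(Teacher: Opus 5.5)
Your proposal is correct and follows the paper's own argument. You start from the code after $X$-reduction and $Z$-thickening (Remark~\ref{rem:weight-wZ-remains}), use the reasonability lemma to cone only the original $Z$-checks via Remark~\ref{rem:partially-reasonable-codes}, and substitute $(\weight_X,\qubit_X,\qubit_Z)=(5,3,3)$ into the coning weights. Your point that the literal red-dashed term gives $42.5$ is right: the stated $42$ rests on integrality, or on your parity refinement giving $35$, which spells out a step the paper leaves unstated.
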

\begin{proof}
    By Remark~\ref{rem:weight-wZ-remains}, start from the chain complex $D$ after $X$-reduction and $Z$-thickening with choice of height function described in Remark~\ref{rem:weight-height}.
    As suggested by Remark~\ref{rem:weight-wZ-remains}, partition $\sD_2$ into $\sZ$ and its complement.
    Since $A$ is $Z$-type reasonable, by the previous lemma, we see that $D$ is $Z$-type reasonable with respect to $\sZ \subseteq \sD_2$.
    Let $C$ be the Hastings's cone constructed from replacing each $z\in \sZ$ with its associated cell complex $\hat{\bm{C}}^{z}$. 
    Then the statement follows from Proposition~\ref{prop:weight-cone} and Remark~\ref{rem:partially-reasonable-codes}.
\end{proof}
    

\end{document}